\definecolor{links}{RGB}{32,98,172}
\definecolor{cites}{RGB}{32,148,90}
\newcolumntype{P}[1]{>{\centering\arraybackslash}p{#1}}
\newcommand{\thickhline}{%
    \noalign {\ifnum 0=`}\fi \hrule height 1.2  pt
    \futurelet \reserved@a \@xhline
} %thick row line
\long\def\ca#1\cb{} %Use for commenting out: \ca...\cb
\renewcommand{\braket}[2]{\langle #1 \hspace{1pt} | \hspace{1pt} #2 \rangle}
\newcommand{\kt}[1]{| #1 \rangle}
\newcommand{\br}[1]{\langle #1 |}
\newcommand{\kkt}[1]{\big| #1 \big\rangle}
\newcommand{\bbr}[1]{\big\langle #1 \big|}
\newcommand{\ketbra}[2]{| \hspace{1pt} #1 \rangle \langle #2 \hspace{1pt} |}
\newcommand{\dya}[1]{\ket{#1}\!\!\bra{#1}}
\newcommand{\dyad}[2]{\ket{#1}\!\bra{#2}}        %dyad
\newcommand{\poly}{\operatorname{poly}}
\newcommand{\BC}{\mathcal{B}}
\newcommand{\CC}{\mathcal{C}}
\newcommand{\HC}{\mathcal{H}}
\newcommand{\OC}{\mathcal{O}}
\newcommand{\ZC}{\mathcal{Z}}
\newcommand{\Tr}{{\rm Tr}}
\renewcommand{\geq}{\geqslant}
\renewcommand{\leq}{\leqslant}
\newcommand*{\id}{\mathbbm{1}}
\DeclarePairedDelimiter{\ceil}{\lceil}{\rceil}
\newtheorem{theorem}{Theorem}
\newtheorem{lemma}{Lemma}
\newtheorem*{remark}{Remark}
\newtheorem{corollary}{Corollary}
\newtheorem{proposition}{Proposition}
\newtheorem{definition}{Definition}
\newtheorem{supprop}{Supplementary Proposition}
\begin{document}

\title{Qubit-Efficient Randomized Quantum Algorithms for Linear Algebra}

%\author{Samson Wang}
%\email{samson.wang@outlook.com}
%\affiliation{Department of Physics, Imperial College London, London, UK}

%\author{Sam McArdle}
%\affiliation{AWS Center for Quantum Computing, Pasadena, USA}
%\affiliation{California Institute of Technology, Pasadena, USA}

%\author{Mario Berta}
%\affiliation{Department of Computing, Imperial College London, London, UK}
%\affiliation{Institute for Quantum Information, RWTH Aachen University, Aachen, Germany}

\author[1]{Samson Wang \thanks{samson.wang@outlook.com}}
\author[2,3]{Sam McArdle}
\author[4,5]{Mario Berta}
\affil[1]{Department of Physics, Imperial College London, London, UK}
\affil[2]{AWS Center for Quantum Computing, Pasadena, USA}
\affil[3]{California Institute of Technology, Pasadena, USA}
\affil[4]{Department of Computing, Imperial College London, London, UK}
\affil[5]{Institute for Quantum Information, RWTH Aachen University, Aachen, Germany}

\date{}
\maketitle

\begin{abstract}
We propose a class of randomized quantum algorithms for the task of sampling from matrix functions, without the use of quantum block encodings or any other coherent oracle access to the matrix elements.  As such, our use of qubits is purely algorithmic, and no additional qubits are required for quantum data structures. {Our algorithms start from a classical data structure in which the matrix of interest is specified in the Pauli basis.} For $N\times N$ Hermitian matrices, the space cost is $\log(N)+1$ qubits and depending on the structure of the matrices, the gate complexity can be comparable to state-of-the-art methods that use quantum data structures of up to size $\OC(N^2)$, when considering equivalent end-to-end problems.  Within our framework, we present a quantum linear system solver that allows one to sample properties of the solution vector, as well as algorithms for sampling properties of ground states and Gibbs states of Hamiltonians. As a concrete application, we combine these sub-routines to present a scheme for calculating Green's functions of quantum many-body systems.
\end{abstract} 

%%%%%%%%%%%%%%%%%%%%%%%%%%%%%%%%%%

\section{Introduction}\label{sec:intro}

\subsection{Overview}

As improvements in hardware increase the number and quality of qubits, we seek quantum algorithms that are able to showcase practical quantum advantage in the earliest possible time-frame. Looking beyond NISQ technologies \cite{preskill2018quantum, bharti2021noisy, cerezo2020variationalreview}, it is reasonable to assume that, given continued progress in quantum hardware, so-called fault-tolerant algorithms will have an important place in the gamut of quantum computing applications. Thus, it is pertinent to ask how soon such algorithms can be useful for real-life applications, and how much can we accelerate this timeline by constructing algorithms with lower and more flexible quantum resource costs. 

Quantum algorithms for manipulating matrices have been proposed for many problems, including factoring, linear systems, ground state energy estimation, simulation, and beyond (e.g.~see Refs.~\cite{montanaro2016review, deWolf2019quantum, bauer2020quantum, lin2022lecture, dalzell2023quantum} and references therein). These algorithms are often phrased in terms of a quantum oracle model from which elements of the matrix of interest can be coherently accessed. Since the seminal proposals, there have been extensive improvements and refinements to the asymptotic runtime for each of these algorithms, which is usually measured in the number of required queries to the oracle. For many problems, the state-of-the-art query complexities are optimal or close to optimal according to known complexity lower bounds. Moreover, many of these recent techniques are unified under the so-called quantum singular value transformation (QSVT) framework \cite{gilyen2019quantum, martyn2021grand}, in which polynomial approximations are applied to the singular values of the desired matrix. Here, the oracle embeds the matrix in a larger unitary, commonly known as a block encoding.

Despite this promise, when considering end-to-end implementations of such quantum algorithms two major hurdles can arise.\footnote{By "end-to-end" we mean that all relevant resources are accounted for in an implementation that solves the required task in full.} First, the implementation of the quantum oracles can require costly additional quantum resources, both in the depth required for each call and the number of qubits consumed. Second, the quantum algorithm can come with certain conditions or caveats that need to be satisfied for efficient applications \cite{dalzell2023quantum}.

As an example, consider the linear systems problem, for which the pioneering HHL algorithm was proposed \cite{harrow2009quantum}. The current state-of-the-art quantum linear systems solver (QLSS) presented in Ref.~\cite{costa2021optimal} uses $\OC(\log(N))$ algorithmic qubits and a block encoding oracle to process $N\times N$ matrices. This algorithm only needs to make a small number of calls to the block encoding if the matrix is well-conditioned, compared to the runtime of classical linear systems algorithms. However, for general matrices an implementation of the block encoding using QRAM \footnote{Quantum Random Access Memory (QRAM): a widely-studied data access structure that allows for coherent access to classical data \cite{giovannetti2008quantum, di2020fault, hann2021resilience}} in depth $\OC(\log(N))$ requires $\OC(N^2)$ qubits \cite{hann2020hardware,giovannetti2008quantum,clader2022quantum}.\footnote{The minimum qubit count implementation in Ref.~\cite{clader2022quantum} requires $\OC(N)$ qubits and $\OC(N)$ depth per call to the block encoding. Alternatively, a QROM (Quantum Read-Only Memory) can be implemented using $\OC(\log N)$ qubits, but $\OC(N^2)$ gates.} Thus, for general matrices the exponential savings in space resources are nullified. In order to circumvent this burden, {one should search for specific classes of matrices with structure for which access is less costly}. {For instance}, for matrices that are $L$-sparse in an efficiently implementable unitary basis, block encodings can be implemented {with substantially less quantum cost} than in the general case \cite{childs2018toward, babbush2018encoding, berry2019qubitization,wan2021exponentially}. 
{However, this still leads to an additional qubit overhead} that we argue could be minimized further when considering early implementations of fault-tolerant algorithms. {For matrices sparse in the computational basis, up to ${\OC}(N)$ qubit overhead is still required \cite{di2020fault}, unless one seeks additional structure such that the matrix entries can be efficiently coherently computed.} {Second}, it is important to consider the exact problem that the quantum algorithm solves: the QLSS returns a quantum state in which the solution vector is encoded with some non-zero additive error, unlike textbook classical solvers that provide full classical vector exactly. Thus, in order to assess the utility of quantum linear systems solvers, full end-to-end applications including possible additional subroutines need to be carefully analyzed \cite{aaronson2015read}. For instance, it may be more fair to compare the quantum algorithm to randomized classical solvers which allow for some error \cite{strohmer2009randomized}, or so-called "dequantized" approaches which operate with a classical data structure analogous to QRAM \cite{gilyen2022improved, shao2022faster}. We discuss various approaches to linear systems in more detail in Section \ref{sec:linear-systems-comparison}.

\begin{figure*}[t]
    \centering
    \includegraphics[width=0.98\columnwidth]{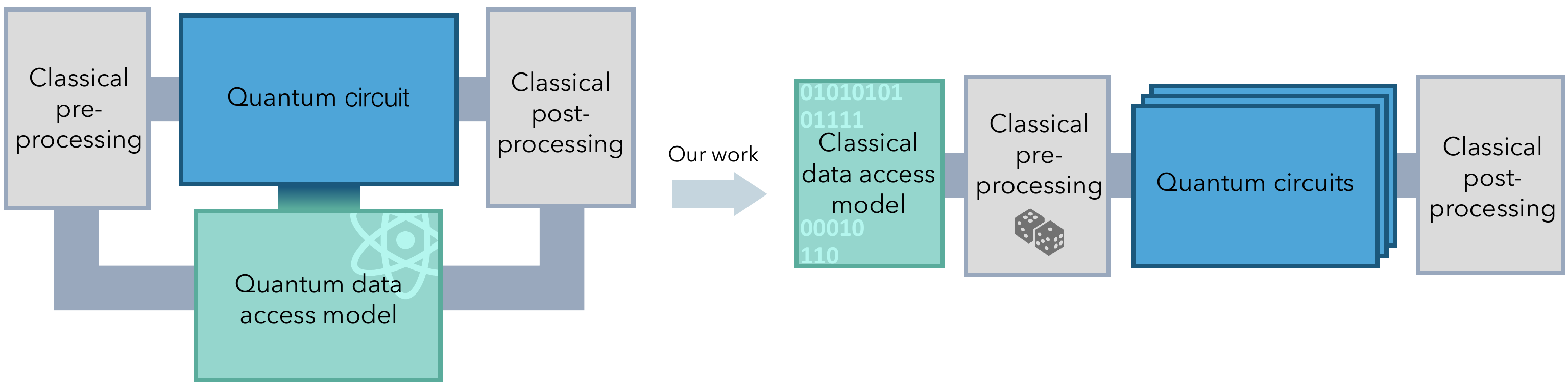}
    \caption{\textbf{Motivation of our work.} We reduce quantum hardware requirements for quantum algorithms on classical data by removing the need for quantum data structures or quantum oracles. This is achieved by replacing coherent access to the data with a classical description of the data in the Pauli basis, and utilizing a randomized algorithm that samples the outputs of many quantum circuits. These circuits are chosen independently and thus in theory can also be parallelized, trading reduced total runtime for additional space cost in the form of many quantum processors. Our approach uses circuits with at most $\log N + 1$ qubits when processing data from $N \times N$ Hermitian matrices. This can be compared to other algorithms that utilize quantum data access models which may have significantly greater qubit overhead overall.  
    }
    \label{fig:summary}
\end{figure*}

{\textit{Our contribution.}} In this work, we present a framework for constructing algorithms that sample properties of matrix functions which do not use quantum oracles to provide coherent access to the matrix in question (see Figure \ref{fig:summary}). Despite having no qubit overhead to implement quantum oracles, the asymptotic complexities of our algorithms can remain comparable with other algorithms in the literature whenever the considered matrices have an amenable structure in the Pauli basis (and when considering equivalent end-to-end problems). Hence, for physically motivated matrices, potential quantum advantages originally requiring QRAM could possibly be similarly obtained in our approach without using a quantum data structure, making them more applicable for the early fault-tolerant regime \cite{campbell2021early, lin2022heisenberg, wan2021randomized, dong2022ground, wang2022quantum, zhang2022computing}. Specifically, given a Fourier series approximation to a function $f$, and an $N\times N$ Hermitian matrix $A$ with known decomposition in the Pauli basis, we give algorithms to sample properties of $f(A)$ using a total of $\log(N)+1$ qubits. {These properties take the form $\Tr[f(A)\rho f(A)^{\dag} O]$ and $\bra{\psi}f(A) \ket{\phi}$ for some quantum states $\rho, \dya{\psi}, \dya{\phi}$, and some measurement observable $O$}. Using this framework we present algorithms for sampling properties of the solution vector in the linear systems problem, as well as from ground states and Gibbs states of a given Hamiltonian. We provide direct comparisons of the complexities of our algorithms with other classical and quantum algorithms for specific end-to-end tasks, and we present an application of our algorithms for computing Green's functions in many-body physics.

As our starting point we take inspiration from algorithms for quantum chemistry \cite{campbell2019random,lin2022heisenberg,wan2021randomized}, where often quantum data structures are not needed. {Instead of} quantum (coherent) access to the matrix $A$ {we ask for} classical access to the coefficients $a_{\ell} \in \mathbb{R}$ in its decomposition in the Pauli basis
\begin{equation}\label{eq:pauli-decomposition}
    A=\sum_{\ell=1}^L a_{\ell} P_{\ell}\,,
\end{equation}
where $P_{\ell}$ are multi-qubit Pauli operators. We refer to this as the Pauli access model and note that it is a natural representation for data coming from physical problems, for instance, when the matrix comes from a Hamiltonian. The model also mathematically matches the physical intuition that running quantum sub-routines for ``quantumly structured data'' has possible potential for quantum speed-ups. Indeed, our algorithms are faster for matrices with small vector $\ell_1$-norm of Pauli coefficients $\lambda := \sum |a_{\ell}|$, which we refer to as the "Pauli weight". Moreover, we remark that there is no explicit dependence on the sparsity, or the number of Pauli terms $L$ (which we will call the ``Pauli sparsity'') in the {quantum} runtimes -- both of which can be substantially larger than the Pauli weight, and appear in the runtime of other algorithms.

%%%%%%%%%%%%%%%%%%%%%%%%%%%%%%%%%%

\subsection{Related work}

Applications for the early fault-tolerant era of quantum computing have recently begun to be explored, following the motivation to design algorithms that extract practical value out of fault-tolerant quantum algorithms as soon as possible \cite{campbell2019random,lin2022heisenberg,wan2021randomized,faehrmann2022randomizing,wang2022quantum, wang2023faster, dong2022ground, zhang2022computing}. In this spirit, algorithms have been designed to consume fewer quantum resources for Hamiltonian problems including phase estimation \cite{lin2022heisenberg, wan2021randomized, wang2022quantum, wang2023faster, dong2022ground}, ground state preparation \cite{dong2022ground}, and computing ground state properties \cite{zhang2022computing}, by increasing the number of circuit samples required. Until now, these algorithms have predominantly aimed to reduce a proxy for the maximum circuit depth, in the form of the number of calls to a time evolution oracle for a prescribed Hamiltonian in one coherent run of a circuit. It then remains to choose an appropriate time evolution oracle for the intended setting, which can substantially affect the gate overhead or the number of qubits required. {This in contrast to our approach, where the first priority is to reduce qubit overhead}. In Section \ref{sec:gs-comparison} we discuss further the implications for various choices of time evolution oracle, and how the resulting implementations compare to our results for the ground state property estimation problem. A key tool in the aforementioned algorithms is the use of \textit{randomization}. Randomized approaches have also more generally found use in Hamiltonian simulation \cite{childs2019faster, campbell2019random} and in simplifying quantum walk algorithms \cite{apers2021unified}.

One distinctive approach is that of Ref.~\cite{wan2021randomized} for phase estimation which uses the Pauli access model, rather than a time evolution oracle. In this work an algorithm is proposed which randomly compiles the Heaviside function {$H(A)$ via a quantity of the form $\Tr[H(A)\rho]$} by sampling from a Fourier series approximation to the function. Ref.~\cite{faehrmann2022randomizing} also presents an algorithm to perform randomized sampling of a given observable after a time evolution of a given Hamiltonian. We extend these ideas in our work to {a more general class of properties corresponding to any function for which we have a Fourier approximation}. 
{The result of this approach is an overhead of only one additional qubit is required to run the algorithm for Hermitan matrices. Moreover, as with the approach in Ref.~\cite{wan2021randomized}, we sample from the outputs of many quantum circuits, rather than running one long coherent evolution.}
%{This allows us to directly focus on reducing the qubit overhead, rather than optimize the number of calls to a time evolution oracle. Similar to previous works, this reduction of resources comes at a price of an increase in the number of circuit samples taken, which we assume to be a more accessible resource in an early fault-tolerant setting.} 

We remark that near-term approaches for the quantum linear systems problem have recently been proposed which use similar data access assumptions to our Pauli access model \cite{bravo2019variational, xu2019variational, huang2019near}. Namely, these works assume the matrix of interest has a known decomposition $A=\sum_t c_t U_t$, where $U_t$ are efficiently implementable unitaries (such as the Pauli basis decomposition as in our work). These approaches use parameterized circuits whose depth can be tuned to whatever a near-term implementation allows. However, despite showing promising numerical performance for small problem sizes, they lack generic runtime guarantees. On the other hand, our algorithms give prescriptive circuits with runtime guarantees.

%%%%%%%%%%%%%%%%%%%%%%%%%%%%%%%%%%

\subsection{Outline}

The rest of the manuscript is structured as follows. In Section \ref{sec:general} we present our general framework including our main result in Subsection \ref{sec:main-result}; a note on the classical power of our access model in Subsection \ref{sec:classical-power}; and a discussion on sampling from other linear combinations of unitaries in Subsection \ref{sec:other-LCUs}. We then demonstrate applications of our framework: we present our algorithm for sampling properties of the solution vector in the linear systems problem in Section \ref{sec:linear-systems}; we describe our algorithm for sampling properties of the ground state in Section \ref{sec:gs}; we discuss our algorithm for sampling properties of Gibbs states in Section \ref{sec:gibbs}; and we show how these algorithms can be used together to estimate single-particle Green's functions in the context of many-body physics in Section \ref{sec:greens}.  Finally, in Section \ref{sec:discussion} we present our concluding discussions and outlook. In the Appendices we present detailed analytical statements and all proofs thereof. 

%Where there is no ambiguity, when considering an operator $O$, we will denote $\|O\|:=\|O\|_{\infty}$ as the operator norm (Schatten $\infty$-norm). When considering a vector $\kt{\psi}$, we will denote $\|\kt{\psi}\|:=\|\kt{\psi}\|_2$ as the vector $2$-norm.

%%%%%%%%%%%%%%%%%%%%%%%%%%%%%%%%%%

\section{General approach}\label{sec:general}

\subsection{Warm-up problem}\label{sec:fourier-sampling}

We start by demonstrating how to extend the ideas from Refs.~\cite{wan2021randomized} and \cite{faehrmann2022randomizing} to sample properties of a Fourier series of a given matrix. This gives intuition for the core routine we will use for the rest of our results. Readers who want a summary of the results may skip ahead to Section \ref{sec:main-result} and the outlines of our specific algorithms in Sections \ref{sec:linear-systems}, \ref{sec:gs}, \ref{sec:gibbs} \textit{\&} \ref{sec:greens}.

Our algorithms will make use of the Hadamard test circuit (see Fig.~\ref{fig:circuits}(a)), pioneered by Lin and Tong in Ref.~\cite{lin2022heisenberg} for use in the ground state energy estimation problem in the early fault-tolerant regime. We will also use the related circuit in Fig.~\ref{fig:circuits}(b), introduced by Childs and Wiebe in Ref.~\cite{childs2012hamiltonian} to implement linear combinations of unitaries for Hamiltonian simulation. Our circuits will ask for elementary controlled unitary operations in the form of controlled Pauli gates and Pauli rotations.

\begin{proposition}[Sampling from Fourier series]\label{prop:fourier-sampling}
Suppose that we have a Fourier series 
\begin{align}\label{eq:fourier-A}
&\quad\quad\quad \text{$s(A) = \sum_{k\in F} \alpha_k \exp{(it_kA)}$}\,,\\ 
&\text{with $\ell_1$-norm of coefficients $\alpha:=\sum_{k\in F} |\alpha_k|$}\,,
\end{align}
for an $N\times N$ Hermitian matrix $A$ with known Pauli decomposition $A=\sum_{\ell} a_{\ell} P_{\ell}$ and Pauli weight $\lambda = \sum_{\ell} |a_{\ell}|$. Then, we find:

\emph{(a)} Given a procedure to prepare the pure states $\kt{\psi}$, $\kt{\phi}$ with respective unitaries $U_{\psi}$, $U_{\phi}$ and respective gate depths $d_{\psi}$, $d_{\phi}$, we have a randomized quantum algorithm that uses $\log(N)+1$ qubits to
    \begin{align}
    \text{approximate $\br{\phi} s(A)\kt{\psi}$ up to additive error $\varepsilon$}\,,
    \end{align}
    with arbitrary constant success probability, using
    \begin{align}\text{${\CC}_{\emph{sample}}^{\phi}= \OC\big( {\alpha^2}/{\varepsilon^2} \big)$ circuit samples}\,,
    \end{align} 
    where each circuit takes the form in Figure \ref{fig:circuits}(a) with
    \begin{align}
    {\CC}_{\emph{gate}}^{\phi} = \OC\big(\lambda^2 t_{max}^2 + d_{\psi} + d_{\phi}\big) \; \text{gate depth}\,,
    \end{align}
    where we denote $t_{max} := \max_{k\in F}t_k$.
    
\emph{(b)} Given a procedure to prepare the quantum state $\rho$ in depth $d_{\rho}$ and perform measurements with measurement operator $O$, we have a randomized quantum algorithm that uses $\log(N)+1$ qubits to 
    \begin{align}
    \text{approximate $\Tr[s(A)\rho s(A) O]$ up to additive error $\varepsilon$}\,,
    \end{align}
    with arbitrary constant success probability, using 
    \begin{align}
    \text{${\CC}_{\emph{sample}}^O =  \OC\big( {\|O\|^2\alpha^4}/{\varepsilon^2} \big)$ circuit samples}\,,
    \end{align}
    where each circuit takes the form in Figure \ref{fig:circuits}(b) with
    \begin{align} {\CC}_{\emph{gate}}^O = \OC\big(\lambda^2 t_{max}^2 + d_{\rho}\big) \; \text{gate depth}.
    \end{align}
\end{proposition}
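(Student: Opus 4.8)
The plan is to build a bounded, essentially unbiased single-shot estimator by composing three independent layers of randomness---importance sampling of the Fourier index, randomized compilation of the time evolution, and the outcome of the test circuit---and then to invoke a concentration bound to fix the sample complexity.

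\emph{Part (a).} First I would linearize the target by importance sampling: write $\br{\phi}s(A)\kt{\psi}=\sum_{k\in F}\alpha_k\br{\phi}e^{it_kA}\kt{\psi}$, draw the index $k$ with probability $|\alpha_k|/\alpha$, and attach the classical prefactor $\alpha\cdot\alpha_k/|\alpha_k|$; this replaces the sum by a single term $e^{it_kA}$ and is the source of the $\alpha$ factor. Next, rather than implementing $e^{it_kA}$ coherently (which would require a block encoding), I would randomly compile it from the Pauli data $A=\sum_\ell a_\ell P_\ell$ via a qDRIFT-style product of controlled Pauli rotations: each of the $m$ segments samples $\ell$ with probability $|a_\ell|/\lambda$ and applies the controlled rotation $e^{i(\lambda t_k/m)\,\mathrm{sign}(a_\ell)P_\ell}$, where $m=\OC(\lambda^2 t_{max}^2)$ segments suffice for the required accuracy; this yields the $\lambda^2 t_{max}^2$ contribution to the gate depth, while controlled preparation of $\kt{\psi}$ and $\kt{\phi}$ contributes $d_\psi+d_\phi$. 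Feeding this controlled operation into the generalized Hadamard test of Figure~\ref{fig:circuits}(a), with controlled state preparation selecting $\kt{\psi}$ or $\kt{\phi}$ on the ancilla, produces an ancilla measurement whose expectation over all the randomness is $\Re\br{\phi}s(A)\kt{\psi}$ in the $X$ basis and $\Im\br{\phi}s(A)\kt{\psi}$ in the $Y$ basis. Since each single-shot value has magnitude at most $\alpha$, Hoeffding's inequality gives additive error $\varepsilon$ after $\OC(\alpha^2/\varepsilon^2)$ shots, with failure probability exponentially small in the shot count, hence any desired constant success probability.

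\emph{Part (b).} The quantity $\Tr[s(A)\rho s(A)O]$ is bilinear in $s(A)$, so I would expand both copies, $\Tr[s(A)\rho s(A)O]=\sum_{k,k'}\alpha_k\alpha_{k'}\Tr[e^{it_kA}\rho e^{it_{k'}A}O]$, and importance-sample the pair $(k,k')$ with probability $|\alpha_k\alpha_{k'}|/\alpha^2$, which now contributes a factor $\alpha^2$. Each sampled evolution is again randomly compiled into $\OC(\lambda^2 t_{max}^2)$ controlled Pauli rotations, and the two are inserted on the two sides of $\rho$ using the Childs--Wiebe LCU circuit of Figure~\ref{fig:circuits}(b); preparing $\rho$ costs $d_\rho$, giving gate depth $\OC(\lambda^2 t_{max}^2+d_\rho)$. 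Reading out $O$ together with the ancilla yields a single-shot estimator of magnitude $\OC(\alpha^2\|O\|)$, hence variance $\OC(\alpha^4\|O\|^2)$, so $\OC(\|O\|^2\alpha^4/\varepsilon^2)$ shots suffice by the same concentration argument.

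\emph{Main obstacle.} The delicate point is the error analysis of the randomized compilation: I must show that running a fresh random product formula on each shot gives a single-shot estimator that is unbiased (or biased by at most $\varepsilon$) for the true target while keeping its variance controlled by the stated $\alpha^2$, respectively $\alpha^4\|O\|^2$, factors, and that $\OC(\lambda^2 t_{max}^2)$ segments are enough to reach this accuracy. Disentangling the simulation error from the statistical sampling error---and confirming that the weights introduced by importance sampling do not inflate the variance beyond the claimed bounds---is the crux; the remaining steps (the generalized Hadamard and LCU identities, and the Hoeffding concentration) are routine.
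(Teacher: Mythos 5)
Your overall architecture matches the paper's: importance-sample the Fourier index with weight $\alpha$, randomly compile each $e^{it_kA}$ from the Pauli data into $\OC(\lambda^2t_{\max}^2)$ controlled Pauli rotations, feed the result into the Hadamard test (part (a)) or the Childs--Wiebe circuit (part (b)), and close with Hoeffding. However, the step you yourself flag as the crux is a genuine gap, and the specific compilation you propose does not close it. A plain qDRIFT-style product of segments $e^{i(\lambda t_k/m)\,\mathrm{sign}(a_\ell)P_\ell}$ with $\ell\sim|a_\ell|/\lambda$ has per-segment operator average $\cos(\lambda t/m)\,\id + i\sin(\lambda t/m)\,A/\lambda$, which agrees with $e^{iAt/m}$ only to first order; the second-order mismatch is $\Theta(\lambda^2t^2/m^2)$ per segment, hence $\Theta(\lambda^2t^2/m)$ overall. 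With your choice $m=\OC(\lambda^2t_{\max}^2)$ this is a \emph{constant} bias in the single-shot estimator, not an $\OC(\varepsilon)$ one, so the claimed gate depth and sample complexity cannot both hold with this compilation. (Pushing the bias down to $\varepsilon$ by brute force would require $m=\OC(\lambda^2t_{\max}^2/\varepsilon)$, changing the stated gate depth.)

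The paper avoids this by using the random compiler lemma of Wan et al.\ (Lemma~\ref{lem:appdx-kianna-ham-simulation}), which writes each fractional evolution $e^{iAt/r}$ \emph{exactly} as a weighted mixture of products of Pauli operators and a single Pauli rotation, with mixture weight at most $\exp(\lambda^2t^2/r^2)$; the higher-order terms that qDRIFT drops are reinstated as extra sampled Pauli insertions rather than discarded. Taking the $r$-fold product gives an exact LCU for $e^{iAt}$ with total weight $\exp(\lambda^2t^2/r)=\OC(1)$ at $r=\Theta(\lambda^2t^2)$, so the single-shot estimator is exactly unbiased, the range is $\OC(\alpha)$ (resp.\ $\OC(\alpha^2\|O\|)$), and Hoeffding applies with no simulation error to disentangle. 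To repair your argument you would need to replace the qDRIFT channel with this exact mixture decomposition (or an equivalent exactly unbiased randomized product formula); the importance-sampling and concentration layers of your proof then go through as written.
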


We see that various properties of the Fourier series determine the complexity of the quantum algorithm. Namely, the weight of the Fourier series $\alpha$ determines the sample complexity, whilst the time parameter $t_{\max}$ determines the gate complexity. The gate complexity also depends on the Pauli weight $\lambda$ of the matrix $A$. In certain cases, the Pauli weight of a matrix can be much smaller than its dimension, despite there being many non-zero Pauli terms. In these cases, we expect the above algorithms to be efficient. Note that, similar to the linear combinations of unitaries (LCU) \cite{childs2012hamiltonian, berry2014exponential, berry2015hamiltonian, berry2015simulating} and QSVT \cite{gilyen2019quantum, martyn2021grand} frameworks for quantum algorithms, our framework enacts a general class of functions, which we will apply to different approximation problems in the rest of this manuscript. Unlike the aforementioned approaches, our framework does not need access to quantum oracles, or any additional coherent resources.

\begin{figure}[t]
  \centering
    \includegraphics[width=0.98\columnwidth]{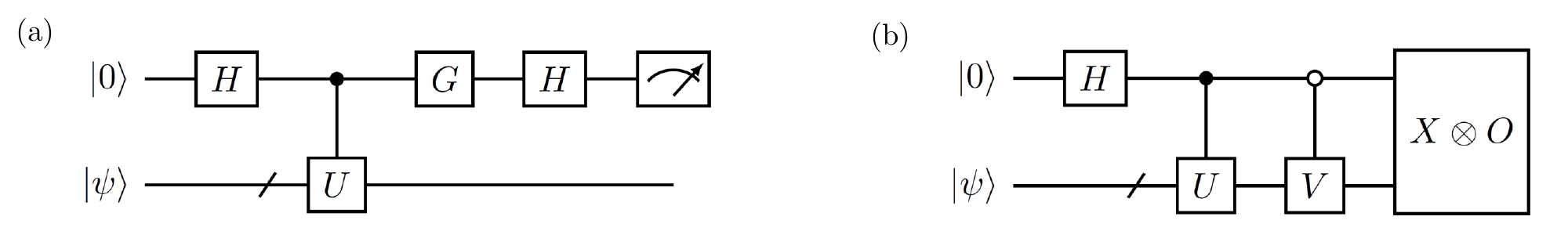}
  \caption{\textbf{Circuits used in our algorithms.} We sample strings of quantum gates, consisting of specified Pauli operators and Pauli rotations, and perform quantum circuit runs with controlled versions of these gates. (a) Hadamard test circuit{.} Measuring the expectation value of $Z$ on the first register returns \textrm{Re}($\br{\psi}U\kt{\psi}$) and \textrm{Im}($\br{\psi}U\kt{\psi}$ for choices of $G = \id$ and $G = S^{\dag}:=\ketbra{0}{0}-i\ketbra{1}{1}$ respectively.
  (b) Generalized Hadamard test circuit. {Applying controlled-$U$ and anticontrolled-$V$, followed by measurement of the observable} $X \otimes O$ yields $\frac{1}{2}\left(\br{\psi}U^{\dag}OV\kt{\psi}+ \br{\psi}V^{\dag}OU\kt{\psi} \right)$. \label{fig:circuits}}
\end{figure}

We provide the proof of Proposition \ref{prop:fourier-sampling} in Appendix \ref{appdx:fourier-sampling-prop}. A key technical tool we use is the \textit{random compiler lemma} of Ref.~\cite[{Lemma 2}]{wan2021randomized} which decomposes fractional time evolution operators into a probabilistic mixture of Pauli matrices and Pauli rotations. This leads to a decomposition of the full time evolution operator simply by taking the product of such fractional operators. Namely, for some time parameter $t\in \mathbb{R}$, for any choice $r\in \mathbb{N}$ and for Hermitian matrix $A$ with known Pauli decomposition as in Eq.~\eqref{eq:pauli-decomposition}, Ref.~\cite{wan2021randomized} shows how one can obtain the decomposition 
\begin{align}\label{eq:string-of-gates}
    e^{iAt/r} &= \gamma\sum_{\ell \in T'}  p_{\ell} \cdot u_{\ell} P_{\ell}\;  \exp\left(i \theta_{\ell} P'_{\ell}\right)\,, 
\end{align}
for some index set $T'$, where $p_{\ell}$ are probabilities, $u_{\ell}\in\{\pm1, \pm i\}$ are phases, and $P_{\ell},P_{\ell}'$ are ($\ceil{\log N}$-qubit) Pauli operators, which all implicitly depend on $t/r$. Moreover, the weight of the mixture can be shown to satisfy $\gamma \leq \exp (\lambda^2 t^2/r^2)$. This implies that one can express a full time evolution operator as a linear combination
\begin{align}\label{eq:kianna-ham-simulation-simp}
    e^{i A t}= \left(e^{iAt/r}\right)^r =\sum_{m \in {T}} \beta_m^{(r)} U_m^{(r)}\,,
\end{align}
where the weight of the coefficients satisfy $\sum_{m \in T} |\beta^{(r)}_m| = \OC(1)$ for  $r = \Theta(\lambda^2 t^2)$, meaning that one can sample from the distribution with bounded variance. Each $U^{(r)}_m$ is a string of gates consisting of $r$ pairs of controlled (multi-qubit) Pauli rotations and {a series of} controlled Pauli gates. Thus, $r$ can be considered to control the quantum runtime of the algorithm. Our algorithms sample from the strings of gates according to the linear combination in Eq.~\eqref{eq:kianna-ham-simulation-simp}. 

{We make two brief remarks on compilation. First, we note that $n$-qubit Pauli rotation gates can be compiled into a single single-qubit Pauli rotation gate and $\OC(n)$ Clifford gates. Thus, in our algorithmic framework compilation of each layer of gates results in a single non-Clifford gate, that is, the number of non-Clifford gates and the total gate depth go hand-in-hand. Second, any series of $n$-qubit Pauli gates can be classically compiled into a single $n$-qubit Pauli gate up to a phase in $\OC(n)$ classical time. From hereon we will generally refer to quantum gate depth whilst stressing that this is equivalent to the number of non-Clifford gates up to a logarithmic factor .}

We provide pseudo-code for the algorithm presented in Proposition \ref{prop:fourier-sampling} to prepare  $\br{\phi} s(A)\kt{\psi}$ in Algorithm \ref{alg:overlap}. The algorithm for $\Tr\left[s(A)\rho s(A)^{\dag} O\right]$ is very similar and we present the pseudo-code for this in full in Algorithm \ref{alg:observable} in Appendix \ref{appdx:fourier-sampling-prop}. These two algorithms will form the core quantum routine for the rest of our results.

We remark that in the pseudocode in Algorithm \ref{alg:overlap} we have left the runtime vector $\vec{r}$ as a freely chosen algorithm parameter. In practice, one would likely choose each element of the runtime vector to be $r_k \propto \lambda^2t_k^2$ for all $k$ which leads to the runtime guarantees specified in Proposition \ref{prop:fourier-sampling}. However, we note that one can still freely choose a proportionality constant, which will trade a constant factor improvement in the sample complexity for increased gate depth, or vice versa.

{Finally, in Appendix \ref{appdx:preprocessing} we detail the classical overheads required to run our core algorithm in Algorithms \ref{alg:overlap} and \ref{alg:observable}. These amount to essentially linear prepossessing overhead in the number of Pauli terms $L$ and the number of Fourier terms $|F|$. Each quantum sample comes with logarithmic classical overhead in problem parameters.}

\begin{algorithm}[t]
\raggedright
\caption{Fourier sampling of $\br{\phi} s(A)\kt{\psi}$}\label{alg:overlap}
\vspace{2pt}
    \textbf{Problem input:}
    \vspace{-2pt}
    \begin{itemize}
        \item $N \times N$ Hermitian matrix ${A}=\sum_{\ell} a_{\ell} P_{\ell}$, with $a_{\ell}\in \mathbb{R}\ \forall \ell$, and known Pauli weight $\lambda = \sum_{\ell} |a_{\ell}|$
        \item Fourier parameters $\{\alpha_k\}_{k \in F}$, $\{t_k\}_{k \in F}$ for series $s(A)$ according to Eq.~\eqref{eq:fourier-A}
        \item Unitary gates $U_{\psi}, U_{\phi}$ which respectively prepare $\kt{\psi}, \kt{\phi}$.
    \end{itemize}
    \textbf{Parameters:} (1) Approximation error $\varepsilon\in(0,1)$;\; (2) Success probability $1-\delta\in(0,1)$;\; (3) Runtime vector $\vec{r}\in\mathbb{N}^{|F|}$. \\ \vspace{5pt}
    \textbf{Output:} Approximation to $\br{\phi} s(A)\kt{\psi}$ with additive error at most ${\varepsilon}$ and success probability at least $1-\delta$. 
    \begin{enumerate}
    \itemsep -0.8\parsep
        \item Compute coefficients $\{\beta_{km}^{(\vec{r})} \}_{(k,m) \in F\times T}$ as in Eq.~\eqref{eq:kianna-ham-simulation-simp} (see Appendix \ref{appdx:fourier-sampling}) for each time parameter $\{t_k \}_{k \in F}$.
        \item $R(\vec{r}) \leftarrow \sum_{(k,m) \in F\times T} \big|\alpha_k \beta_{km}^{(r_k)} \big|$.
        \item $M(\vec{r}) \leftarrow \left\lceil 4\ln (2 / \delta)\left({R(\vec{r})}/{\varepsilon}\right)^{2}\right\rceil$. 
        \item \textbf{for} $j \in \big[M(\vec{r})\big]:$
        \vspace{-0.5em}
        \begin{enumerate}[label=(\roman*)]
            \item Sample indices $(k',m')$ according to probability distribution $\{|\alpha_k \beta_{km}^{(r_k)} |/R(\vec{r})\}_{(k,m) \in F\times T}$ and select corresponding string of gates $U^{(r_{k'})}_{k'm'}$.
            \item Run $n+1$ qubit circuit in Figure \ref{fig:circuits}(a) with input state $\kt{0}^{\otimes \ceil{\log N} +1}$,  $G=\id$,  controlled $U_{\phi}U^{(r_{k'})}_{k'm'}U_{\psi}$ gates, and measurement operator $Z$ on first register. Record measurement outcome $o_j$.
            \item Run same circuit again with $G=S^{\dag}:=\ketbra{0}{0}-i\ketbra{1}{1}$. Record measurement outcome $o'_j$.
            \item $z_j \leftarrow R(\vec{r})(o_j+io'_j) $.
        \end{enumerate}
        \item \textbf{end for}
        \item $\overline{z}^{(M)} \leftarrow \sum_j z_j/M(\vec{r})$. \textbf{return} $\overline{z}$. 
    \end{enumerate}
\end{algorithm}

%%%%%%%%%%%%%%%%%%%%%%%%%%%%%%%%%%%%%%%%%%%%%

\subsection{Main result}\label{sec:main-result}

%Now we consider general statement. We provide the proof in Appendix \ref{appdx:thm1}.

Using the results of the previous section, now we can demonstrate how to sample properties of matrix functions, starting from a sufficiently good Fourier series approximation to the function, and a decomposition of the matrix in the Pauli basis.

Given a real-valued function $f:\mathbb{R}\rightarrow\mathbb{R}$, we consider a scenario where we have a Fourier series approximation $s(\varepsilon,D_A):\mathbb{R}\rightarrow\mathbb{R}$ which is $\varepsilon$-close to $f$ on the domain $D_A$. More precisely, we suppose that
\begin{align}\label{eq:fourier-approx}
    \left| f(x) - s(\varepsilon,D_A,x)\right| \leq \varepsilon\,, \quad  \forall  x \in D_A\,.
\end{align}
We note that the condition in Eq.~\eqref{eq:fourier-approx} is dependent on the maximum deviation of the Fourier approximation over the entire domain of interest -- this is the property that will determine the rigorous worst case complexity of our algorithms -- thus, it is important to find Fourier series approximations that are accurate even on the extremities of a domain, not just on average. The Fourier series can always be expressed as
\begin{align}
s(\varepsilon,D_A,x) = \sum_{k\in F_{\varepsilon,D_A}} \alpha_k(\varepsilon,D_A) \exp{\big(it_k(\varepsilon,D_A)x\big)}\,,
\end{align}
over some index set $F_{\varepsilon,D_A}$, where $\{\alpha_k(\varepsilon,D_A)\}_k$ and $\{t_k(\varepsilon,D_A) \}_k$ are Fourier parameters which in general have dependence on the approximation error $\varepsilon$ and the domain of approximation $D_A$.  This setting can be simply translated to matrix functions, if one considers a Hermitian matrix $A$ whose spectrum lies in $D_A$. In this case $\varepsilon$ then corresponds to closeness of matrix functions in operator norm.

\begin{theorem}[Generalized sampling from Fourier approximations]\label{thm:general-sampling}
Suppose we have a matrix function $f(A)$ of an $N\times N$ Hermitian matrix $A$ that is approximated by a Fourier series $s(\tilde{\varepsilon},A)$ with tunable error parameter $\tilde{\varepsilon}$, and has a known Pauli decomposition of $A$ with Pauli weight $\lambda$. Suppose further that we have unitary oracles $U_{\psi}$, $U_{\phi}$, $U_{\rho}$ to prepare $\kt{\psi}$, $\kt{\phi}$ and $\rho$ respectively. Then, we give explicit randomized algorithms to
\begin{align}
\text{approximate\; $\frac{\br{\phi}f(A)\kt{\psi}}{q}$ \,\text{\&}\; $\frac{\Tr\left[f(A)\rho f(A)^{\dag} O\right]}{q^2}$}\,, \label{eq:thm1-quantities}
\end{align}
to a given small enough additive error $\varepsilon$, using Algorithms \ref{alg:overlap} and \ref{alg:observable} respectively on $\log(N)+1$ qubits, where $q$ is some arbitrary normalization factor.
\end{theorem}

We provide a more precise statement with exact complexities, and accompanying proofs accounting for constant factor terms, in Appendix \ref{appdx:thm1}. Theorem~\ref{thm:general-sampling} provides a general recipe for sampling from functions of $N \times N$ Hermitian matrices by using $\log(N)+1$ qubits, given a Fourier series approximation and a Pauli decomposition of the matrix. In particular, it specifies how to tune the Fourier approximation parameter $\tilde{\varepsilon}$ such that we can directly use Algorithms \ref{alg:overlap} and \ref{alg:observable}. We stress that as with the warm-up problem, we do not use any hidden quantum oracles and specify circuits explicitly in terms of controlled Pauli gates and controlled Pauli rotations. 

{We note that a property of our algorithms in Theorem \ref{thm:general-sampling} is that the output is a number, rather than a quantum state. We envision in the majority of potential applications for quantum algorithms the goal is to extract classical information out of a quantum state \cite{dalzell2023quantum}. In many cases this would be captured by the quantities in Eq.~\eqref{eq:thm1-quantities}. If the application of a matrix function is to be used as a subroutine as part of a larger quantum computation, our framework allows further quantum processing by appending fixed controlled unitaries in the case of Algorithm \ref{alg:overlap} or simply appending fixed unitaries on the second register in Algorithm \ref{alg:observable}. Finally, some algorithmic frameworks ask instead to sample from a quantum state in the computational basis \cite{kerenidis2017quantum}. We highlight that in our with the following remark that we can similarly statistically recover the output vector in the computational basis. Full exposition is provided in Appendix \ref{appdx:vector-sampling}}

\begin{remark}[{Sampling from output vector}]
{  By modifying the measurement in Algorithm \ref{alg:observable} to computational basis measurements we give an unbiased estimator for the vector whose entries are (approximately) $\bra{\vec{z}_n}f(A)\ket{\psi}$ for each $\vec{z}_n \in \{0,1\}^n$.
}
\end{remark}

Our randomized scheme allows for generic normalization of the answer by some factor $q$, with all complexities accounted for. So far, we have assumed that $q$ is exactly given. However, the desired $q$ may in general not be exactly known. One salient example is if we wish for $f(A)\kt{\psi}$ to be a normalized quantum state for some input state $\kt{\psi}$. In this case the relevant normalization quantity is $\|f(A)\kt{\psi}\|$. In Appendix \ref{appdx:norm-sampling} we present a randomized subroutine to estimate quantities of this form, which can be directly integrated with our core result in Theorem~\ref{thm:general-sampling}. We summarize this with the following proposition.

\begin{proposition}[Sampling normalization constant]\label{prop:norm-sampling}
Under the conditions of Theorem \ref{thm:general-sampling}, the quantities
\begin{align}\label{eq:norm-sampling}
\text{$\frac{\br{\phi} f(A)\kt{\psi}}{\|f(A)\kt{\psi}\|}$ \,\text{\&}\; $\frac{\Tr\left[f(A)\ketbra{\psi}{\psi} f(A)^{\dag} O\right]}{\|f(A)\kt{\psi}\|^2}$}\,,
\end{align}
can be approximated to a given desired additive error $\varepsilon$, with the addition of a subroutine on $\log(N)+1$ qubits to approximate $\|f(A)\kt{\psi}\|$.
\end{proposition}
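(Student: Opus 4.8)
The plan is to reduce Proposition \ref{prop:norm-sampling} to Theorem \ref{thm:general-sampling} by observing that the two quantities in Eq.~\eqref{eq:norm-sampling} are precisely the ratios obtained when the \emph{a priori} unknown normalization $q$ is fixed to the specific choice $q = \|f(A)\kt{\psi}\|$. Indeed, the first quantity is $\br{\phi}f(A)\kt{\psi}/q$ with $q = \|f(A)\kt{\psi}\|$, and the second is $\Tr[f(A)\dya{\psi}f(A)^{\dag}O]/q^2$ with the same $q$. So the entire content of the proposition is: even though Theorem \ref{thm:general-sampling} required $q$ to be known exactly, we can instead \emph{estimate} this particular $q$ with its own randomized subroutine on $\log(N)+1$ qubits, and then combine the two estimates with a controlled error propagation. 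I would therefore organize the proof around (i) estimating $\|f(A)\kt{\psi}\|$, and (ii) combining that estimate with the numerator estimate from Theorem~\ref{thm:general-sampling}.

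\textbf{Step (i): estimating the norm.} First I would note that $\|f(A)\kt{\psi}\|^2 = \br{\psi}f(A)^{\dag}f(A)\kt{\psi} = \Tr[f(A)\dya{\psi}f(A)^{\dag}O]$ with the choice $O = \id$. Thus the norm-squared is itself an instance of the \emph{second} sampling task handled by Theorem~\ref{thm:general-sampling} (equivalently, Proposition~\ref{prop:fourier-sampling}(b)), taking $\rho = \dya{\psi}$ prepared by $U_{\psi}$ and the trivial observable $O = \id$, which needs no additional qubits beyond the $\log(N)+1$ already in use and uses the circuit of Figure~\ref{fig:circuits}(b). The Fourier-approximation machinery of Theorem~\ref{thm:general-sampling} then yields an additive-$\tilde\varepsilon$ estimate of $\|f(A)\kt{\psi}\|^2$ at cost controlled by $\alpha$, $\lambda$, and $t_{\max}$. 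From an estimate of $\|f(A)\kt{\psi}\|^2$ one recovers an estimate of $\|f(A)\kt{\psi}\|$ by taking a square root; the subroutine runs entirely on $\log(N)+1$ qubits as claimed.

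\textbf{Step (ii): error propagation for the ratio.} The remaining work is quantitative. Writing $n$ for the (complex) numerator $\br{\phi}f(A)\kt{\psi}$ and $q = \|f(A)\kt{\psi}\|$ for the norm, I have estimators $\hat n$ and $\hat q$ with $|\hat n - n|\le \varepsilon_n$ and $|\hat q - q|\le \varepsilon_q$, and I must bound $|\hat n/\hat q - n/q|$. The standard manipulation
\begin{align}
\left| \frac{\hat n}{\hat q} - \frac{n}{q} \right|
\;\le\; \frac{|\hat n - n|}{|\hat q|} + |n|\,\frac{|\hat q - q|}{|q|\,|\hat q|}
\end{align}
shows that, provided $\hat q$ is bounded away from zero (so that $|\hat q| \ge q - \varepsilon_q \ge q/2$ once $\varepsilon_q \le q/2$), choosing $\varepsilon_n = \OC(\varepsilon\, q)$ and $\varepsilon_q = \OC(\varepsilon\, q^2/|n|)$ makes the total error at most $\varepsilon$. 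The analogous bound for the second quantity, where the numerator is an expectation value and the denominator is $q^2$, follows from the same algebra applied to $\hat n'/\hat q^2$, using $|\hat q^2 - q^2| \le \varepsilon_q(2q + \varepsilon_q)$. One then reads off the target sub-errors $\tilde\varepsilon$ to feed into each invocation of Theorem~\ref{thm:general-sampling}, and a union bound over the two subroutines preserves arbitrary constant success probability.

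\textbf{The main obstacle} is the division-by-zero / small-denominator issue: the ratio is only meaningful and stably estimable when $q = \|f(A)\kt{\psi}\|$ is not too small, and the required accuracy $\varepsilon_q$ for the norm scales inversely with $q$ (and with $|n|/q$), so a reliable lower bound on $q$ must be available or assumed. This is exactly why the statement says the approximation holds "for a given desired additive error $\varepsilon$" and implicitly for $\varepsilon$ small enough relative to $q$; I would make this dependence explicit by carrying a promised lower bound $q \ge q_{\min}$ through the error budget, which determines the sample and gate complexities of the norm subroutine via the $\alpha$, $\lambda$, $t_{\max}$ dependence inherited from Theorem~\ref{thm:general-sampling}. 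The rest is routine propagation of the additive errors and a union bound over the (constantly many) estimation subroutines.
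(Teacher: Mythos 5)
Your proposal is correct and follows essentially the same route as the paper: the norm $\|f(A)\kt{\psi}\|^2=\br{\psi}f(A)^{\dag}f(A)\kt{\psi}$ is estimated by a separate LCU-sampling subroutine on the same $\log(N)+1$ qubits, and the two estimates are combined via exactly the ratio error-propagation you describe (the paper's Lemmas \ref{lem:appdx-q-approximation} and \ref{lem:appdx-q-approximation2}, which demand $|Q^{(M_q)}-q^2|\lesssim \varepsilon q^2/\|O\|$, matching your $\varepsilon_q$ budget once one inserts $|n|\le q$ for the overlap and $|n'|\le\|O\|q^2$ for the expectation value). The only cosmetic difference is that the paper samples $\br{\psi}s^{\dag}s\kt{\psi}$ with the Hadamard-test circuit and a dedicated pure-state perturbation bound (Lemma \ref{lem:appdx-expectation-overlap2}) that ties the required Fourier accuracy to $q=\|f(A)\kt{\psi}\|$ rather than $\|f(A)\|$, whereas your invocation of Theorem \ref{thm:general-sampling}(b) with $O=\id$ inherits the looser $\|f(A)\|$-dependent requirement — a discrepancy that only affects logarithmic factors.
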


Proposition \ref{prop:norm-sampling} specifies how to approximate quantities in Eq.~\eqref{eq:norm-sampling} by first running a subroutine to approximate $\|f(A)\kt{\psi}\|$ to a specified error, followed by the core algorithm using Theorem \ref{thm:general-sampling}.

%%%%%%%%%%%%%%%%%%%%%%%%%%%%%%%%%%

\subsection{Classically easy functions}\label{sec:classical-power}

We briefly remark on the implications of our techniques for the power of randomized classical approaches using the Pauli access model. Namely, we show in the following proposition that sufficiently low-degree polynomials of matrices (such as for matrix multiplication) can be sampled from, if these matrices have low Pauli weight.

\begin{proposition}[Classical polynomial sampling]\label{prop:classical-simp}
For stabilizer states $\kt{s}$, $\kt{t}$, Hermitian matrix $A$ with known Pauli decomposition and Pauli weight $\lambda$, and $p_d$ a $d$-degree polynomial {with coefficients of magnitude $\OC(1)$}, we give a classical randomized scheme to
\begin{align}\label{eq:classical-alg-goal}
    \text{approximate\, $\br{t}p_d(A)\kt{s}$\, up to additive error $\varepsilon$}\,,
\end{align}
using
\begin{align}\label{eq:classical-complexity}
    \CC_{\emph{sample}}=\widetilde{\OC}\left( {\lambda^{2d}}/{\varepsilon^2}\right)\,
\end{align}
independent classical subroutines each with time and space complexity
\begin{align}
    \CC_{\emph{time}}=\widetilde{\OC}\left( d\log^2(N) \right) \quad \text{\&} \,\quad
    \CC_{\emph{bits}}=\widetilde{\OC}\left( \log^2(N) \right) \,,
\end{align}
respectively, where $\lambda$ is the Pauli weight of $A$.
\end{proposition}

A more detailed statement, along with the proof thereof, can be found in Appendix \ref{appdx:classical}. Proposition \ref{prop:classical-simp} works by observing that the task in Eq.~\eqref{eq:classical-alg-goal} can be obtained statistically via measurement outcomes of depth $\OC(d)$ Clifford circuits, which are efficiently simulable. We note that, as with our previous specified algorithms, each sample can be parallelized, thus moving $\CC_{{sample}}$ into the space complexity. Proposition \ref{prop:classical-simp} also trivially extends to polynomials of multiple matrices. Thus, our result implies that classically sampling from certain primitives such as matrix multiplication can be efficient, if the matrices have low Pauli weight. 

Proposition \ref{prop:classical-simp} implies similar efficiency for quantum or classical algorithms for functions with low-degree polynomial approximations. However, in general we do not expect such low-degree approximations to always exist. For instance, we can investigate some implications for the linear systems problem and ground state property estimation problem. Ref.~\cite{childs2017quantum} gives a polynomial approximation to the inverse function with degree linear in the condition number. This implies a classical algorithm to sample an element of the solution vector of the linear systems problem with exponential sample complexity in the condition number. Likewise, one could use the power law method to approximately project to the ground state for Hamiltonians with negative spectra. However, this results in a classical algorithm with sampling complexity exponential in the inverse spectral gap (see Appendix \ref{appdx:power-method} for more details). %These algorithms could be advantageous against other classical algorithms for very particular problems with small condition number or inverse spectral gap respectively, as they only have logarithmic runtime dependence in the size of the problem (c.f. Table \ref{table:linear-systems} where all classical approaches have at least linear runtime in the rank). However, these approaches would not be useful for general problems. 
It thus remains to see if there are problems of interest with low-degree polynomial approximations.

%We comment also that our result bears loose resemblance to Theorem 3 of Ref.~\cite{gharibian2022dequantizing}, which gives a classical algorithm "dequantizing" the quantum singular value transform (QSVT) for constant degree polynomials of sparse matrices, with the cost scaling exponentially in the degree of the polynomial in general. 
{Similar results have also been shown for matrix powers for the sparse access model}~\cite{cade2018quantum, apers2022simple}{, where the base of the exponential in Eq.~\eqref{eq:classical-complexity} is different. Interestingly, the problem of evaluating powers to additive error $\OC(\|A\|^d\varepsilon)$ (rather than $\OC(\lambda^d\varepsilon)$) has been shown to be BQP-complete} \cite{janzing2007simple}{, and classically hard} \cite{montanaro2023quantum}{, again in the sparse access model} {. The QSVT framework} \cite{gilyen2019quantum, martyn2021grand} {can apply a more general class of polynomial transformations with cost scaling only linearly in the degree. We leave it as an open question as to whether there are more efficient randomized \textit{early fault-tolerant}} quantum algorithms for high-degree polynomials.

\subsection{Sampling from other linear combinations of unitaries}\label{sec:other-LCUs}

In our main result we present Monte Carlo sampling algorithms for a specific decomposition of functions into linear combinations of implementable unitaries -- first by decomposing the function into time evolution operators via a Fourier decomposition, second by decomposing those time evolution operators into Pauli gates and Pauli rotations. The resulting runtime complexities depend on the properties of the Fourier approximation chosen for the matrix function of interest $f(A)$. One pertinent question is then: when searching for appropriate Fourier approximations, what sample complexity would we expect at best with these techniques?

Our algorithms sample from a weighted probability distribution of unitaries with the weight directly factoring into the complexity. Assuming that we require an $\varepsilon$-close approximation on all states, this weight is lower bounded by $\|f(A)\|$, for any chosen Fourier decomposition. Thus, under our presented sampling framework, Hoeffding's inequality gives sufficient conditions for to approximate $\br{\phi}f(A)\kt{\psi}$ and $\Tr[f(A)\rho f(A)^{\dag}O]$ using $\OC\left({\|f(A)\|^2}/{\varepsilon^2} \right)$ and $\OC\left({\|f(A)\|^4}/{\varepsilon^2} \right)$ respective samples at best. We remark that this argument holds when  sampling from \textit{any} decomposition of $f(A)$ into a linear combination of unitaries, not just the one we consider specifically in this work. As we shall see in the following applications sections, we achieve this for the linear systems problem (Section \ref{sec:linear-systems}) and the ground state property estimation problem (Section \ref{sec:gs}) up to logarithmic factors. 

We stress that this discussion does not constitute generic sample complexity lower bounds (for instance, one could in practice obtain the desired result using fewer samples than Hoeffding's inequality specifies). Additionally, we do not rule out that better sample complexity guarantees can be obtained for other {randomized} schemes beyond Monte Carlo here. Finally, one can achieve problem-dependent improvements to the sample complexity by approximating $f(A)$ on the relevant subspace for the problem at hand. This is how a better sample complexity guarantee is achieved for our Gibbs state algorithm in Section \ref{sec:gibbs}.

%%%%%%%%%%%%%%%%%%%%%%%%%%%%%%%%%%%%%%%%%%%%%%%

\section{Linear systems}\label{sec:linear-systems}

\subsection{Randomized quantum linear system solver}

In this section we show how to apply Theorem \ref{thm:general-sampling} to sample from the inverse of a matrix, with applications for linear systems problems. We use the Fourier series approximation for the inverse function found in Ref.~\cite{childs2017quantum}. With this, we establish the following result \footnote{In the statement of results in the main text, for simplicity we write \unexpanded{$\CC_{sample}= \widetilde{\OC}(g)$} if \unexpanded{$\CC_{sample}= \OC(g\, \mathrm{polylog}\, g\, \mathrm{polylog}\, h)$} and \unexpanded{$\CC_{gate}= \Omega(h)$}, and vice versa for \unexpanded{$\CC_{gate}$}. That is, with \unexpanded{$\widetilde{\OC}(\cdot)$} notation we drop any terms that contribute polylogarithmically to the total (unparallelized) runtime.}.

\begin{corollary}[Linear systems]\label{cor:linear-systems}
Consider a Hermitian matrix $A$ with known Pauli decomposition as in Eq.~\eqref{eq:pauli-decomposition} with Pauli weight $\lambda$. Denote $q$ as a freely chosen normalization parameter. Finally, suppose we have ability to prepare state $\kt{\vec{b}}$ in $\OC(d_{\vec{b}})$ depth. Then, we find:

\emph{(a)} Given ability to implement $\kt{\psi}$ via unitary $U_{\psi}$ in gate depth $d_{\psi}$, we have a randomized quantum algorithm to
    \begin{align}
        \text{approximate\, $\frac{\br{\psi}A^{-1}\kt{\vec{b}}}{q}$ up to additive error $\varepsilon$\,,}
    \end{align}
    with arbitrary constant success probability, utilizing $\CC^{\psi}_{\emph{sample}}$ quantum circuits of the form in Figure \ref{fig:circuits}(a) each consisting of  $\CC^{\psi}_{\emph{gate}}$ layers of gates, where 
    \begin{equation}
        \CC^{\psi}_{\emph{sample}}=\widetilde{\OC}\left( \frac{\|A^{-1}\|^2}{\varepsilon^2q^2}\right),\; 
        \CC^{\psi}_{\emph{gate}}= \widetilde{\OC}\left(\|A^{-1}\|^2\lambda ^2  + d_{\psi} + d_{\vec{b}} \right). 
    \end{equation}
    
\emph{(b)} Given ability to measure observable $O\,;\; \|O\|\leq 1$, we have a randomized quantum algorithm to
    \begin{align}
        \text{approximate\, $\frac{\br{\vec{b}}A^{-1}OA^{-1}\kt{\vec{b}}}{q^2}$ up to additive error $\varepsilon$\,,}
    \end{align}
    with arbitrary constant probability, utilizing $\CC^{O}_{\emph{sample}}$ quantum circuits of the form in Figure \ref{fig:circuits}(b) each consisting of $\CC^{O}_{\emph{gate}}$ layers of gates and one measurement of $O$, where
    \begin{equation}\label{eq:linear-systems-observable}
        \CC^{O}_{\emph{sample}}=\widetilde{\OC}\left( \frac{\|A^{-1}\|^4}{\varepsilon^2q^4}\right)\,,\;\; 
        \CC^{O}_{\emph{gate}}= \widetilde{\OC}\left(\|A^{-1}\|^2\lambda ^2 + d_{\vec{b}} \right)\,.
    \end{equation}
\end{corollary}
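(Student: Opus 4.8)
The plan is to instantiate Theorem~\ref{thm:general-sampling} with the inverse function $f(x)=1/x$, supplying the Fourier series approximation to $1/x$ constructed by Childs, Kothari and Somma~\cite{childs2017quantum}, and then reading off the resulting complexities from Proposition~\ref{prop:fourier-sampling}. Concretely, I would set $\kt{\phi}\to\kt{\psi}$ and $\kt{\psi}\to\kt{\vec b}$ in part~(a), and $\rho\to\kt{\vec b}\br{\vec b}$ together with $f(A)^{\dagger}=A^{-1}$ (using Hermiticity of $A$) in part~(b), so that the targets $\br{\psi}A^{-1}\kt{\vec b}/q$ and $\br{\vec b}A^{-1}OA^{-1}\kt{\vec b}/q^2$ become exactly the normalized quantities that Theorem~\ref{thm:general-sampling} handles.

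First I would pin down the parameters of the Fourier approximation. Since the series of~\cite{childs2017quantum} is valid on a domain excluding a neighbourhood of the origin, I would rescale to $\tilde A=A/\|A\|$, whose spectrum lies in $[-1,-1/\kappa]\cup[1/\kappa,1]$ with $\kappa=\|A\|\,\|A^{-1}\|$, approximate $1/y$ there to error $\tilde\varepsilon$, and then undo the rescaling via $A^{-1}=\|A\|^{-1}\tilde A^{-1}$. The key point is that the factors of $\|A\|$ cancel: writing $e^{i\tilde t_k\tilde A}=e^{i(\tilde t_k/\|A\|)A}$ and dividing the coefficients by $\|A\|$, both the $\ell_1$-norm of coefficients and the maximal time of the resulting series for $A^{-1}$ satisfy $\alpha=\widetilde{\OC}(\|A^{-1}\|)$ and $t_{\max}=\widetilde{\OC}(\|A^{-1}\|)$, the hidden logarithmic factors depending on $\|A^{-1}\|$, $1/\varepsilon$ and $1/q$. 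Here I would note that constructing the domain presumes knowledge of (bounds on) the extremal singular values of $A$, which I take as given.

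Next I would carry out the error budget that Theorem~\ref{thm:general-sampling} packages through the choice of $\tilde\varepsilon$. For part~(a) the error splits as $|\br{\psi}A^{-1}\kt{\vec b}-\br{\psi}s(A)\kt{\vec b}|/q\le\|A^{-1}-s(A)\|/q\le\tilde\varepsilon/q$ for the truncation, plus the Monte~Carlo error of Proposition~\ref{prop:fourier-sampling}(a); setting each to $\varepsilon/2$ forces $\tilde\varepsilon=\Theta(q\varepsilon)$ and a sampling target of $\Theta(q\varepsilon)$ on $\br{\psi}s(A)\kt{\vec b}$, costing $\OC(\alpha^2/(q\varepsilon)^2)=\widetilde{\OC}(\|A^{-1}\|^2/(\varepsilon^2 q^2))$ samples and $\OC(\lambda^2 t_{\max}^2+d_{\psi}+d_{\vec b})=\widetilde{\OC}(\lambda^2\|A^{-1}\|^2+d_{\psi}+d_{\vec b})$ gate depth. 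For part~(b) I would expand $\Tr[A^{-1}\rho A^{-1}O]-\Tr[s(A)\rho s(A)O]$ into the two cross terms $\Tr[(A^{-1}-s)\rho A^{-1}O]+\Tr[s\,\rho(A^{-1}-s)O]$ and bound them by $(\|A^{-1}\|+\|s\|)\,\tilde\varepsilon\,\|O\|=\OC(\tilde\varepsilon\|A^{-1}\|)$; dividing by $q^2$ and balancing against Proposition~\ref{prop:fourier-sampling}(b) forces $\tilde\varepsilon=\Theta(q^2\varepsilon/\|A^{-1}\|)$ and yields $\CC^{O}_{\text{sample}}=\OC(\alpha^4/(q^2\varepsilon)^2)=\widetilde{\OC}(\|A^{-1}\|^4/(\varepsilon^2 q^4))$ and $\CC^{O}_{\text{gate}}=\widetilde{\OC}(\lambda^2\|A^{-1}\|^2+d_{\vec b})$, matching the claim.

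I expect the main obstacle to be the rescaling bookkeeping of the second step: verifying that the $\|A\|$ factors cancel so the complexities depend on $\|A^{-1}\|$ alone rather than on $\kappa$ separately, and confirming that the dependence on $1/\tilde\varepsilon$ — which through the budget becomes $\log(\|A^{-1}\|/(q\varepsilon))$ — is genuinely polylogarithmic in the problem parameters and hence absorbed by the $\widetilde{\OC}(\cdot)$ notation defined in the footnote. The cross-term bound in part~(b) is the other delicate point, since it is exactly where the extra factor of $\|A^{-1}\|$ relative to part~(a) enters and pins down the $q^4$ scaling.
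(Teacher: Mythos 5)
Your proposal is correct and follows essentially the same route as the paper: instantiate Theorem~\ref{thm:general-sampling} (via Proposition~\ref{prop:fourier-sampling}) with the Childs--Kothari--Somma Fourier series for $1/x$, choose the truncation error $\tilde\varepsilon=\Theta(q\varepsilon)$ for part~(a) and $\tilde\varepsilon=\Theta(q^2\varepsilon/\|A^{-1}\|)$ for part~(b), and read off $\alpha,t_{\max}=\widetilde{\OC}(\|A^{-1}\|)$; your two-cross-term bound in part~(b) is exactly the paper's Lemma on tightness of expectation values. The one place you diverge is the handling of the domain: you rescale to $\tilde A=A/\|A\|$ and then check that the $\|A\|$ factors cancel, whereas the paper observes that the CKS error bound $\frac{1}{|x|}e^{-(xy_J)^2/2}+\frac{2}{|x|}e^{-z_K^2/2}$ is decreasing in $|x|$, so the same truncation parameters (set only by $b=\|A^{-1}\|$) already work on $[-a,-1/b]\cup[1/b,a]$ for arbitrary $a$. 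Your bookkeeping does go through — the $\|A\|$'s cancel both in the prefactors and inside the logarithms, since $\kappa/\tilde\varepsilon_{\mathrm{rescaled}}=\|A^{-1}\|/\varepsilon'$ — but the paper's monotonicity observation removes the rescaling step entirely and, with it, the need for any bound on $\|A\|$ or the extremal singular values that you had to "take as given."
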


We detail a proof of this corollary in Appendix \ref{appdx:linear-systems} where we specify polylogarithmic contributions and account for constant factors. Corollary \ref{cor:linear-systems} provides an oracle-free quantum linear systems algorithm that only uses one ancillary qubit for Hermitian matrices. We note that we can consider non-Hermitian matrices for linear systems by embedding them in a larger Hermitian matrix via an additional qubit (details are provided in Appendix \ref{appdx:linear-systems}).

\begin{remark}[Non-Hermitian matrices]
We can extend Corollary \ref{cor:linear-systems} to include non-Hermitian matrices by embedding the matrix in a larger Hermitian matrix with the aid of a single additional qubit. The algorithm then uses $\log(N)+2$ qubits with a factor $2$ increase in sample complexity.
\end{remark}

Further, we note that Proposition \ref{prop:norm-sampling} allows the evaluation of the norm of the output vector $\|A^{-1}\kt{\vec{b}}\|$ via a subroutine with complexities equivalent to Eq.~\eqref{eq:linear-systems-observable} up to logarithmic terms.

%%%%%%%%%%%%%%%%%%%%%%%%%%%%%%%%%%

\subsection{Statistical encoding of input vector}

So far we have supposed that $\kt{\vec{b}}$ is provided via a quantum oracle. With the below proposition we demonstrate an approach to statistically encode the vector at a cost of classical preprocessing overhead and minimal gate overhead.

\begin{proposition}[Statistical encoding of input vector]\label{prop:statistical-encoding}
Given a classical vector $\vec{b}$ with sparsity $s$, we give a scheme to replace the quantum oracle for the input state $\kt{\vec{b}}$ in Theorem \ref{thm:general-sampling} and Corollary \ref{cor:linear-systems}. This has overhead:
\begin{itemize}
    \item $\OC(s)$ time for classical preprocessing.
    \item Sample complexity factor increase for problem (a) of $\OC(\|\vec{b}\|_1^2/\|\vec{b}\|_2^2)$ and for problem (b) of $\OC(\|\vec{b}\|_1^4/\|\vec{b}\|_2^4)$.
    \item 1 or 2 single-qubit controlled $\ceil{\log N}$-qubit Pauli gates each circuit sample.
\end{itemize}
\end{proposition}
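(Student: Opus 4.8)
The plan is to observe that the normalized input state, written in the computational basis, is itself a linear combination of Pauli unitaries acting on $\kt{0}$, so that it folds directly into the very same Monte Carlo sampling already used for $f(A)$. Writing $\kt{\vec b}=\|\vec b\|_2^{-1}\sum_i b_i\kt{i}$ and noting that each computational basis state is reached from the all-zeros state by a single $\ceil{\log N}$-qubit Pauli-$X$ string, i.e.\ $\kt{i}=X^{(i)}\kt{0}$ where $X^{(i)}$ flips exactly the bits set in the binary expansion of $i$, we obtain the decomposition $\kt{\vec b}=\sum_i c_i\,X^{(i)}\kt{0}$ with $c_i=b_i/\|\vec b\|_2$. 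This is precisely an LCU of the type handled by Algorithms \ref{alg:overlap} and \ref{alg:observable}, with $\ell_1$-weight $\sum_i|c_i|=\|\vec b\|_1/\|\vec b\|_2$.

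First, for problem (a), I would augment the sampling step of Algorithm \ref{alg:overlap}: alongside the index realizing a term of $f(A)$, sample an index $i$ from the distribution $\{|b_i|/\|\vec b\|_1\}$ over the at most $s$ nonzero entries, insert the corresponding controlled-$X^{(i)}$ in place of the state-preparation oracle for $\kt{\vec b}$, and rescale the recorded estimator by the extra weight $\|\vec b\|_1/\|\vec b\|_2$ and the phase $b_i/|b_i|$. Since $\kt{\vec b}$ appears once in $\br{\phi}f(A)\kt{\vec b}$, Hoeffding's inequality inflates the total estimator weight by one factor of $\|\vec b\|_1/\|\vec b\|_2$, and as the sample count scales with the square of that weight, the sample complexity grows by the claimed $\OC(\|\vec b\|_1^2/\|\vec b\|_2^2)$, with the only circuit change being one controlled multi-qubit Pauli gate.

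The delicate case is problem (b), where the input enters as $\dya{\vec b}$ and the cross terms $\kt{i}\br{j}$ with $i\neq j$ must be reproduced. Here I would use $\dya{\vec b}=\sum_{i,j}c_i\bar c_j\,X^{(i)}\dya{0}X^{(j)}$ and sample the two indices $i,j$ independently. In the Childs--Wiebe circuit of Figure \ref{fig:circuits}(b) the two branches carry the distinct unitaries $U$ and $V$, so I absorb $X^{(i)}$ into the $V$-branch and $X^{(j)}$ into the $U$-branch on input $\kt{0}$ (equivalently replacing $U\mapsto UX^{(j)}$ and $V\mapsto VX^{(i)}$); the measured quantity then reproduces $\br{j}U^{\dagger}OV\kt{i}$ as required, at the cost of two controlled Pauli gates. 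Sampling two indices multiplies the estimator weight by $(\|\vec b\|_1/\|\vec b\|_2)^2$, and since the observable estimator's sample complexity scales as the fourth power of the weight, the increase is $\OC(\|\vec b\|_1^4/\|\vec b\|_2^4)$.

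Finally, the classical preprocessing amounts to building a sampler for the discrete distribution $\{|b_i|/\|\vec b\|_1\}_i$ and tabulating the phases, which costs $\OC(s)$ time given the $s$ nonzero entries (e.g.\ via a cumulative table). I expect the main obstacle to be the bookkeeping in case (b): one must verify that independently sampling the bra- and ket-indices and inserting the two Pauli strings into the correct branches yields an \emph{unbiased} estimator of $\Tr[f(A)\dya{\vec b}f(A)^{\dagger}O]$ with exactly the stated weight, and in particular that the off-diagonal contributions $i\neq j$ are faithfully captured rather than only the diagonal $\kt{i}\br{i}$ terms one would obtain from naively preparing a single sampled basis state.
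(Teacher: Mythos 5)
Your proposal is correct and follows essentially the same route as the paper's proof in Appendix \ref{appdx:randomized-encoding}: decompose $\kt{\vec{b}}$ as a signed probabilistic mixture of $\ceil{\log N}$-qubit Pauli-$X$ strings acting on $\kt{0}$, fold the sampled string(s) into the controlled branches of the circuits in Figure \ref{fig:circuits}, rescale by $\|\vec{b}\|_1/\|\vec{b}\|_2$ per occurrence of $\kt{\vec{b}}$, and apply Hoeffding. The concern you flag about case (b) resolves exactly as you suspect: since $\dya{\vec{b}}=\sum_{i,j}c_i\bar{c}_jX^{(i)}\dya{0}X^{(j)}$ factorizes, independently sampling the bra- and ket-indices yields an unbiased estimator including the off-diagonal terms, which is precisely what the paper does.
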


More details are provided in Appendix \ref{appdx:randomized-encoding}. A key distinction is that this approach can return an encoding of $\vec{b}$ that represents the true magnitude of the vector rather than a normalized quantum state which a standard quantum oracle would provide, simply by increasing the sample complexity. Moreover, this is a general scheme for our randomized approach that is applicable beyond the linear systems setting. We remark that our representation for the quantum state $\kt{\vec{b}}$ is similar to the so-called CQS states considered in the near-term quantum linear systems approach of Ref.~\cite{huang2019near}. {Finally, we stress that in this scheme we never explicitly prepare the quantum state. A comparison with explicit state preparation schemes \cite{zhang2022quantum} shows that this avoids $\OC(Ns)$ cost in qubit overhead, and trades it for classical preprocessing overhead and sample overhead.}

%%%%%%%%%%%%%%%%%%%%%%%%%%%%%%%%%%

\subsection{Comparison}\label{sec:linear-systems-comparison}

\subsubsection{Quantum linear systems solvers}

The quantum linear systems solver (QLSS) is an algorithm that takes an encoding of a matrix $A$ and vector $\vec{b}$, and returns the quantum state proportional to $A^{-1}\kt{\vec{b}}$, where $\kt{\vec{b}}:=\sum_i b_i \kt{i}/ \ {\|\vec{b}\|}$. Since the proposal of the original QLSS, known as the Harrow, Hassidim, and Lloyd (HHL) algorithm \cite{harrow2009quantum}, many improved schemes have been proposed \cite{ambainis2012variable,childs2017quantum, gilyen2019quantum, subasi2019quantum,lin2020optimal, martyn2021grand, an2022quantum,costa2021optimal} utilizing $\OC(\log(N))$ algorithmic qubits. Early algorithms made use of a time evolution oracle in the matrix of interest \cite{childs2017quantum, ambainis2012variable, subacsi2019quantum, an2022quantum}. 
%At first, approaches were studied that presume coherent oracle access to matrix entries \cite{childs2017quantum, an2022quantum} Hamiltonian evolution subroutine \cite{childs2017quantum, ambainis2012variable, subacsi2019quantum}.
More recently, approaches which use a block encoding have been proposed \cite{gilyen2019quantum, martyn2021grand, costa2021optimal}. %The aforementioned algorithms can also be made equivalent to a block encoding model - for instance, Hamiltonian simulation subroutines can be realized by block encodings, and coherent access to matrix elements can be used to construct a block encoding. 

We remark that, in contrast to other fault-tolerant approaches to the QLSS, our algorithms do not require coherent access to the matrix of interest $A$ and thus there are no additional hidden dimension dependencies. Thus, for a general $N\times N$ Hermitian matrix, only $\log(N) + 1$ logical qubits are required to carry out the full algorithm, with a runtime that does not explicity depend on the dimension. We further note that our algorithm has the following additional distinctive features:

\vspace{-1mm}
\begin{itemize}
\setlength\itemsep{0.05em}
    \item In contrast to the QLSS, the error parameter $\varepsilon$ in Corollary \ref{cor:linear-systems} is specified in terms of measurement errors in extracting information out of states of the form $A^{-1}\kt{\vec{b}}$. Thus, the QLSS is distinct from the problem our algorithm solves, as our algorithm returns a number rather than a state. If one wishes to extract out classical information from the QLSS one requires $\OC(\varepsilon^{-2})$ circuit samples with incoherent measurements, or an additional factor of  $\OC(\varepsilon^{-1})$ runtime with coherent approaches \cite{knill2007optimal, wang2019accelerated}. 
    
    \item Our complexities are determined by the operator norm $\|A^{-1}\|$, instead of the condition numbers $\kappa:= \|A^{-1}\|\|A\|$ or $\kappa_F:=\|A^{-1}\|\|A\|_F$.
    Moreover, there is no sparsity or explicit dimension dependence in the runtime. Instead, the runtime depends on the Pauli weight $\lambda$ which for certain problems can be much smaller than the sparsity or dimension. 

    \item We allow for arbitrary normalization by any normalization factor $q$, which rescales the asymptotic complexities by $\|A^{-1}\| \rightarrow \frac{1}{q}\|A^{-1}\|$. We recall that in the QLSS, the solution is given as a normalized quantum state, which may be undesired. {We discuss the role of normalization further in the following subsection.} 
\end{itemize}

\vspace{-1mm}

\subsubsection{Complexity comparison}

In Table \ref{table:linear-systems} we compare the complexities of our linear systems algorithm with other classical and quantum algorithms in the literature for a specific task.\footnote{In the tables (different from the main text) for each cell we detail contributions from all parameters even if the contribution of that parameter to the total unparallelized runtime is exponentially worse. This is to highlight the dependence of each parameter in each cell in isolation so that each dependency can be compared across different schemes.} The task we consider in Table \ref{table:linear-systems} is to
\begin{align}\label{eq:task-linear-systems}
    &\text{approximate the element $\frac{1}{c}\cdot\left(A^{-1}\vec{b}\right)_i$ to additive error $\varepsilon$}, \; \text{for some choice of $i \in [N]$ and normalization $c\in\mathbb{R}$} \,.
\end{align}
%to sample from the solution vector of the linear systems problem $A^{-1}\kt{\vec{b}}$. 
We assume that the normalization constant $c$ is whatever is natural to the algorithm at hand, and that we are querying the vector $A^{-1}\vec{b}$ in the computational basis. Thus, $c = 1$ for {in the usual} classical setting, and {$c=\|A^{-1}\vec{b}\|$ for quantum solvers and quantum-inspired classical solvers}. This is to allow a "middle ground" comparison with the other quantum and classical solvers, as relaxing these two assumptions incurs additional overhead for other quantum algorithms and classical algorithms, respectively.

We compare the space resources required to create the data access to $A$, and to perform the algorithm. We also compare the maximum coherent quantum runtime required (if applicable), and the total runtime. For our randomized scheme, we assume no parallelization, thus the total runtime is simply the product of the gate depth and the sample complexity. For all algorithms, we assume that the data starts in the most amenable \textit{classical} format for that algorithm, such as in a sparse row representation, or in our case the classical Pauli access model. From there, we keep track of any additional \textit{classical}  or \textit{quantum} overheads (both space and time) needed to provide any required data structures, for instance, the resources required to provide a block encoding.

\begin{center}
\begin{table*}[t]
\centering
\setlength\extrarowheight{-1pt}
{\renewcommand{\arraystretch}{2}%
\resizebox{\textwidth}{!}{%
    \begin{tabular}{ | P{6em} | P{6em} !{\vrule width 1.2pt}  P{6em} | P{8.1em} | P{2.5em} | P{10.5em}  | P{8.5em} | P{3.5em} | }
    \hline\hline Algorithm & Access Model
    & Data Access Space & Algorithmic Space & {AE ?} &  Quantum Gate Depth & Total Runtime & {Norm} \\ \thickhline
    Textbook Classical & Comp.~basis elements & $\OC(N^2)$ bits & $\OC(N^2)$ bits & - & - & $\OC(N^{\omega})$ & 1 \\ \hline
    Randomized Classical \cite{strohmer2009randomized} &  Sparse row  & $\OC(Ns)$ bits & ${\OC}\left(s\kappa_F^2 \log(1/\varepsilon)\right) $ bits 
    & - & - & ${\OC}\left(s{\kappa_F}^2 \log(1/\varepsilon)\right) $ & 1 \\ \hline
    Quantum- Inspired \cite{shao2022faster} & Sampling \& query & $\OC(N^2)$ bits & ${\OC}\left(\kappa_F^2\log(1/\varepsilon) \right)$ bits 
    & - & - & $\widetilde{\OC}\left(\kappa^2 \kappa^6_F /\varepsilon^{2}\right)$ & $\|A^{-1}\vec{b}\|$ \\ \hline\hline
    \multirow{2}{6em}{\centering  Textbook Quantum \cite{harrow2009quantum}} & \multirow{2}{6.5em}{\centering Block encoding time evol.~\cite{low2019hamiltonian, clader2022quantum}}& \multirow{2}{6.5em}{\centering $\OC(N^2)$ qubits} & \multirow{2}{8.6em}{\centering $\OC(\log N)$ qubits} & yes & $\widetilde{\OC}\big({\kappa\kappa_F} \log(N)/\varepsilon^2 \big)$ & $\widetilde{\OC}\big({\kappa\kappa_F}\log(N)/\varepsilon^2 \big)$ & $\|A^{-1}\vec{b}\|$  \\ \cline{5-8} & & & & no &  $\widetilde{\OC}\big({\kappa\kappa_F} \log(N)/\varepsilon \big)$  & $\widetilde{\OC}\big({\kappa\kappa_F} \log(N)/\varepsilon^3 \big)$ & $\|A^{-1}\vec{b}\|$ \\ \hline
    \multirow{2}{6em}{\centering  Quantum Query-Optimal \cite{costa2021optimal}} & \multirow{2}{6.5em}{\centering Frobenius norm block encoding \cite{clader2022quantum}} & \multirow{2}{6.5em}{\centering $\OC(N^2)$ qubits} & \multirow{2}{8.6em}{\centering $\log N + 6$ qubits} & yes & $\widetilde{\OC}\big({\kappa_F}\log(N)/\varepsilon \big)$ & $\widetilde{\OC}\left({\kappa_F} \log(N) / \varepsilon \right)$ & $\|A^{-1}\vec{b}\|$  \\ \cline{5-8} & & & & no &  $\widetilde{\OC}\big({\kappa_F}\log(N) \log(1/\varepsilon) \big)$ & $\widetilde{\OC}\left({\kappa_F} \log(N) / \varepsilon^2 \right)$ & $\|A^{-1}\vec{b}\|$ \\ \hline
    Randomized Quantum (This Work) & Pauli coefficients & $\OC(N^2)$ bits & $\log N+2$ qubits  & no &   $\widetilde{\OC}\left(\|A^{-1}\|^2\lambda^2 \log^2(1/\varepsilon)\right)$  & $\widetilde{\OC}\left(\|A^{-1}\|^4\lambda^2/\varepsilon^{2} \right) $ & $\|\vec{b}\|$ \\ \hline\hline
    \end{tabular}
}
    \caption{End-to-end complexities for the task in Eq.~\eqref{eq:task-linear-systems} of querying one element of the solution vector in the linear systems problem to additive error $\varepsilon$. We consider a general $N \times N$ complex coefficient matrix $A$, with row sparsity $s$, which has Pauli decomposition (as in Eq.~\eqref{eq:pauli-decomposition}) into up to $N^2$ Pauli terms with weight $\lambda $.  We denote $\kappa_F:=\|A^{-1}\|\|A\|_F$ where $\|\cdot \|$ denotes the operator norm and $\|\cdot\|_F$ denotes the Frobenius matrix norm, and note that $\kappa_F \geq \sqrt{\textrm{rk}(A)}$. We also denote $\kappa := \|A^{-1}\|\|A\|$. Under {``Data Access Space''} we have listed the dominant classical or quantum resources needed in order to represent $A$ in the required format for the algorithm at hand. {We have marked whether a given scheme requires approaches based on amplitude estimation (AE) to estimate the target quantity. Finally, each algorithm returns a different normalization $c$ according to Eq.~\eqref{eq:task-linear-systems}, which can factor into the complexity of each algorithm; we indicate this normalization in the final column.} One can also consider different schemes that may be advantageous for Pauli-sparse matrices or matrices which are sparse in the computational basis; we discuss this later in this section and in Table \ref{table:linear-systems2}. \label{table:linear-systems}}
}
\end{table*}
\end{center}

Taking into account the aforementioned features, in Table \ref{table:linear-systems} we consider the following algorithms:

\vspace{-1mm}

\begin{itemize}
    \item The Gaussian elimination "textbook" classical method, which returns the full exact solution vector. This has runtime dependent on the matrix multiplication exponent $\omega<2.372$.
    \item A randomized classical Kaczmarz method \cite{strohmer2009randomized}, which returns an $\varepsilon$-approximation to the full solution vector.
    \item A "quantum-inspired" approach \cite{shao2022faster}. This starts from a classical data structure intended to mimick QRAM, which allows sampling from probability distributions with probabilities proportional to the magnitude of elements in a given row of $A$. The algorithm returns a classical data structure which allows one to sample from individual elements of an $\varepsilon$-approximaton to the solution vector. In order to estimate the norm one needs to pay an extra $1/\varepsilon^2$ factor \cite{chia2022sampling, shao2022faster}. 
    \item The HHL "textbook" quantum  algorithm \cite{harrow2009quantum}, which requires a time evolution oracle. In Table \ref{table:linear-systems} we assume that this oracle is provided in a runtime-efficient manner via a block encoding \cite{low2019hamiltonian}. We assume a Frobenius norm block encoding, with explicit construction via a QRAM in minimal depth as detailed in Ref.~\cite{clader2022quantum}. We note there are also other approaches which are more space efficient in quantum resources, but more costly in quantum runtime. Further, there is a choice of how to extract the classical information from the output of the HHL algorithm. We include two rows in Table \ref{table:linear-systems}, which quantifies resources required for performing  {coherent approaches for extracting expectation values }\cite{knill2007optimal, wang2019accelerated} {(first row), and standard incoherent measurements assuming no parallelization which have worse overall runtime, but use substantially less gate depth (second row)}.
    \item A state-of-the-art quantum linear systems solver which achieves optimal query complexity \cite{costa2021optimal}, using a block encoding access model. As with HHL we assumed a Frobenius norm block encoding implemented in low depth via a QRAM, and detail complexities both for incoherent measurements (first row) and for coherent approaches \cite{knill2007optimal, wang2019accelerated} (second row).
    \item Our algorithm using Corollary \ref{cor:linear-systems}, assuming no parallelization. We remark that a particular artefact of the simplified task in Eq.~\eqref{eq:task-linear-systems} is that the runtime of our algorithm can be reduced by an arbitrary factor of $q^2$ as in Theorem \ref{thm:general-sampling}. However, we do not include this additional freedom in Table \ref{table:linear-systems} as we do not expect it to arise in more practical tasks. For instance, if one wished to  compare two elements of $A^{-1}\vec{b}$ by estimating their ratio, $q$ would not appear in the complexity.
\end{itemize}

The goal of our work is to reduce the quantum resources required for quantum approaches. This can be quantified by the three columns highlighted in Table \ref{table:linear-systems}: data access space, algorithmic space, and quantum gate depth. Our algorithm is clearly more efficient in space requirements compared to other quantum algorithms, and in particular it uses no qubits for data access. We further note that if the Pauli weight $\lambda $ is small, there is scope that our algorithm has competitive gate depth with the best alternate option, which is to use the adiabatic quantum algorithm of Ref.~\cite{costa2021optimal} with incoherent measurements (we remark there exist matrices for which $\lambda $ is much smaller than $L$, which in general can grow with the dimension of the system). If a coherent approach is used to extract the vector element, then our algorithm has exponentially better gate depth in the error parameter, but worse overall runtime. 

{\textit{The role of normalization.}
We note that in some practical tasks one may care about the true magnitude of vector elements. In this case, the dependence of other quantum algorithms on $\varepsilon^{-1}$ should be scaled up by a factor $\|A^{-1}\kt{b}\|$ in order to match our randomized algorithm, or our dependence on $\varepsilon^{-1}$ should be scaled down by a factor of $\|A^{-1}\kt{b}\|$ in order to match other quantum approaches. In addition, if the normalization is not given, computing it may incur significant overhead which should be accounted for. When starting in the block encoding model, Chakraborty et al.~demonstrate how to obtain the state normalization to multiplicative error in $\widetilde{\OC}(\kappa \mu \poly\!\log(N)/\varepsilon)$ queries to the block encoding, where $\mu$ is the block encoding subnormalization \cite{chakraborty2019power}. When searching for potential applications for quantum algorithms, the desired normalization for the problem of interest is an important consideration when benchmarking different algorithms.
}

{\textit{Criteria for quantum advantage.}}
For our algorithm to potentially be useful, it must also compete with classical algorithms. We remark that all the classical runtimes stated have at least linear dimension dependence for generic matrices, as $\kappa_F \geq \sqrt{\textrm{rk}(A)}$, where we denote $\textrm{rk}(A)$ as the rank of $A$. {In theory a block encoding with a subnormalization of $\|A\|$ rather than $\|A\|_F$ may be possible, but at present an explicit construction for generic matrices is not known to the best of the authors' knowledge.} For any of the quantum algorithms to display superpolynomial advantage for generic matrices, one requires at minimum $\kappa = \|A^{-1}\|\|A\| = \OC(\textrm{polylog}(N))$. {Block encodings with stronger subnormalizations also need to be found (for generic matrices), as the scope for superpolynomial advantage is constrained to high-rank matrices.} Further, one must consider a setting where the  significant difference in error dependence is also accounted for. Whilst our algorithm has significantly worse overall runtime in $\|A^{-1}\|$, we note the requirement for advantage on our algorithm is similar ($\|A^{-1}\| = \OC(\textrm{polylog}(N))$), although for practical advantages the degree of this polynomial may need to be kept small. {Moreover, there is no rank condition yet to be overcome here.} Thus, we expect our algorithm to potentially show quantum advantage if there is a setting where other quantum algorithms also show advantage, given that the Pauli weight $\lambda$ is small. We remark that it is possible for matrices to have a small $\lambda $, but large computational basis sparsity $s$, Pauli sparsity $L$, or rank $\textrm{rk}(A)$.

{Two known domains where block encodings with subnormalizations that don't lead to rank dependence are sparse matrices in the computational basis and the Pauli basis. As discussed above this is a requirement for any hope of superpolynomial quantum advantage. Our algorithmic construction can be thought of as extending the scope of the second category to also include non-sparse matrices in the Pauli basis that have a low Pauli weight $\lambda$. We now compare our algorithm to other approaches in these two settings in further detail.}

\textit{Pauli-sparse matrices.} So far we have compared our algorithm against other approaches presuming that the Pauli sparsity $L$ is large (note that $L= \OC(N^2)$ in general). If $L$ is small and the Pauli decomposition is known, then the \texttt{SELECT} + \texttt{PREPARE} oracles (previously studied in algorithms for quantum chemistry \cite{childs2018toward, babbush2018encoding, berry2019qubitization,wan2021exponentially}) can implement a more efficient block encoding than the general construction with a QRAM considered in Table \ref{table:linear-systems}. We present the complexities for this block encoding in Table \ref{table:linear-systems2}. The qualitative conclusions are similar; the block-encoded quantum algorithm requires more qubits, and our algorithm can have comparable gate depth if the Pauli weight $\lambda $ is small compared to the Pauli sparsity $L$.

\begin{center}
\begin{table*}[t]
\centering
{\renewcommand{\arraystretch}{2}%
\resizebox{\textwidth}{!}{%
    \begin{tabular}{ | P{6em} | P{5.3em} !{\vrule width 1.2pt}  P{5.8em} | P{7.2em} | P{2.5em} | P{11.2em}  | P{8.4em} | P{3.5em} | }
    \hline\hline Algorithm & Access Model
    & Data Access Space & Algorithmic Space & {AE?} &  Quantum Gate Depth & Runtime & {Norm}  \\ \thickhline
    \multirow{2}{6em}{\centering  Quantum Query-Optimal \cite{costa2021optimal}} & \multirow{2}{5.8em}{\centering \texttt{SELECT} + \texttt{PREPARE} \cite{childs2018toward, babbush2018encoding, berry2019qubitization,wan2021exponentially}} & \multirow{2}{6em}{\centering $\OC( L)$ bits , $\OC(\log L)$ qubits} & \multirow{2}{7.5em}{\centering $\log N + 6$ qubits} & yes &$\widetilde{\OC}\big({\|A^{-1}\|\lambda} L/\varepsilon \big)$ & $\widetilde{\OC}\left({\|A^{-1}\|\lambda} L / \varepsilon \right)$ & $\|A^{-1}\vec{b}\|$ \\ \cline{5-8} & & & & no & $\widetilde{\OC}\big({\|A^{-1}\|\lambda} L \log(1/\varepsilon) \big)$ & $\widetilde{\OC}\left({\|A^{-1}\|\lambda} L / \varepsilon^2 \right)$ & $\|A^{-1}\vec{b}\|$ \\ \hline
    Randomized Quantum (This Work) & Pauli coefficients & $\OC(L)$ bits & $\log N+2$ qubits  & no & $\widetilde{\OC}\left(\|A^{-1}\|^2\lambda^2 \log^2(1/\varepsilon)\right)$ & $\widetilde{\OC}\left(\|A^{-1}\|^4\lambda^2/\varepsilon^{2} \right)$ & $\|\vec{b}\|$ \\ \hline\hline
    \end{tabular}
}
    \caption{End-to-end complexities for the task in Eq.~\eqref{eq:task-linear-systems} of querying one element of the solution vector in the linear systems problem to additive error $\varepsilon$. Here {for all algorithms} we have assumed that we start from a Pauli description of the matrix of interest $A$, as in Eq.~\eqref{eq:pauli-decomposition}, where the number of terms $L$ is small (for instance, a low degree polynomial in $\log N$). In this case, it can be advantageous to use a \texttt{SELECT} + \texttt{PREPARE} block encoding for the query-optimal quantum algorithm. {We compare complexities where the quantity in Eq.~\eqref{eq:task-linear-systems} is obtained using amplitude estimation (A.E.)-based approaches as well as incoherent sampling. In the final column we indicate the implicit normalization $c$ of each approach (see Eq.~\eqref{eq:task-linear-systems}).}}\label{table:linear-systems2}}
\end{table*}
\end{center}

\textit{Matrices sparse in the computational basis.} For the quantum query-optimal algorithm one can also consider the quantum sparse access model,\footnote{This consists of oracles to coherently access the non-zero row and column entries, as well as an additional oracle to coherently access all matrix entries.}  which can be efficiently converted into a block encoding model using $\OC(\textrm{polylog} (N/\varepsilon))$ qubits and $\OC(\textrm{polylog} (N/\varepsilon))$ elementary gates \cite{gilyen2019quantum}. Thus, if this access model is naturally available, a significant space and time saving can be made. However, this access will typically arise because of inherent structure in the matrix, which enables the computation of entries, given their indices. If this structure is not present, and we just have a generic sparse matrix, we still require QRAM for the most efficient block-encoding of the matrix. In this case, implementation of the quantum sparse access structure in minimal depth implementation uses $\widetilde{\OC}(Ns)$ qubits and ${\OC}(\log N)$ overhead \cite{di2020fault}, where $s$ is the minimization over the row sparsity and column sparsity. {Thus, the quantum space complexities could be up to quadratically better for sparse matrices than what is quoted in Table~\ref{table:linear-systems}, but it still is linear in $N$. The corresponding runtime is $\widetilde{\OC}(s\kappa \log(N)/\varepsilon)$ using coherent approaches for expectation value estimation} \cite{knill2007optimal, wang2019accelerated}. We remark that for specific classes of matrices there may also be more efficient ways to directly enact the sparse data access model \cite{camps2022explicit}, but we leave the comparison for such special cases to be beyond the scope of this work.

Finally, we remark that in Tables \ref{table:linear-systems} and \ref{table:linear-systems2} we presume that  the relevant data structure for the input vector $\vec{b}$ is efficiently provided. In general, providing the input state $\kt{\vec{b}}$ also requires a QRAM. We recall that with our randomized scheme we can circumvent this quantum resources cost by providing additional classical pre-processing overhead (see Proposition \ref{prop:statistical-encoding}). 

In this section, we have compared resource costs only for one specific task. For other applications, various classical and quantum overheads need to be carefully considered which can vastly change the complexities. One end-to-end problem in which one does not expect to achieve superpolynomial quantum advantage is where one needs to read off the entire solution vector. In this case, the additional ${\Omega}(N)$ tomographic overhead can be compared to the polynomial scaling of the classical runtimes in $\widetilde{\kappa}(A) \geq \sqrt{\textrm{rk}(A)}$. We remark that in this setting, one would only hope to achieve approximately at most a quadratic speedup in total runtime to the best randomized classical solver in dimension dependence, ignoring dependencies on the approximation error $\varepsilon$. 

%%%%%%%%%%%%%%%%%%%%%%%%%%%%%%%%%%%%%%%%%%%%%%%%%

\section{Ground state sampling}\label{sec:gs}

\subsection{Randomized quantum algorithm}

In this section we consider the problem of sampling properties of the ground state of a given Hamiltonian. In order to approximately project to the ground state we use the Gaussian function $e^{-\frac{1}{2}\tau^2x^2}$ for Hamiltonians with positive spectra, which has been proposed in previous works \cite{zeng2021universal, keen2021quantum}. A Fourier series representation was found for this function in Ref.~\cite{keen2021quantum}, which we use to establish the following corollary of our main result in Theorem \ref{thm:general-sampling}.

\begin{corollary}[Ground state property estimation]\label{cor:ground-state} Consider a Hamiltonian $H=\sum_l E_l \ketbra{E_l}{E_l}$ with all eigenvalues $E_l \geq 0$, and known Pauli decomposition $H=\sum_{\ell=1}^{L}a_{\ell}P_{\ell}\,;\; \lambda=\sum_{\ell}|a_{\ell}|$. Assume the spectral gap of $H$ is lower bounded by $\Delta \leq E_1 - E_0$. Additionally, we suppose that we have an initial trial state $\ketbra{\psi_0}{\psi_0}$ with overlap with the ground state $\gamma:= |\braket{\psi_0}{E_0}|$. Finally, we assume that $E_0\leq {\Delta}/\sqrt{2 \log\frac{\|O\|}{\varepsilon\gamma}}$ for some error parameter $\varepsilon$. Then, given ability to measure observable $O$, we give a randomized algorithm to 
\begin{align}
    \text{approximate $\br{E_0}O\kt{E_0}$ up to additive error $\varepsilon$}\,,
\end{align}
with arbitary constant success probability, using 
\begin{align}
    \text{$\CC^{O}_{\emph{sample}} = \widetilde{\OC}\left( \frac{\|O\|^2}{\varepsilon^2\gamma^4}\right)$ circuit runs}\,,
\end{align}
each of the form in Figure \ref{fig:circuits}(b), each using one instance of $\kt{\psi_0}$ with
\begin{align}
    \text{$\CC^{O}_{\emph{gate}} = \widetilde{\OC}\left(\frac{\lambda^2}{\Delta^2} \right)$ gate depth.}
\end{align}
Moreover, the assumption of the positivity of the spectrum can be dropped in place of a lower bound on the ground state energy, with no change to the computational complexity of the algorithm. 

\end{corollary}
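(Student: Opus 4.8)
The plan is to realize an approximate ground-state projector through the Gaussian filter $g(H):=e^{-\frac{1}{2}\tau^2 H^2}$ and then apply Theorem~\ref{thm:general-sampling} together with Proposition~\ref{prop:norm-sampling} to the ratio $\br{\psi_0}g(H)Og(H)\kt{\psi_0}/\|g(H)\kt{\psi_0}\|^2$, which serves as a surrogate for $\br{E_0}O\kt{E_0}$. First I would analyze the filter error. Expanding $\kt{\psi_0}=\sum_l c_l\kt{E_l}$ with $|c_0|=\gamma$, the filtered state reads $g(H)\kt{\psi_0}=\sum_l e^{-\frac{1}{2}\tau^2 E_l^2}c_l\kt{E_l}$, so the ground component is damped by $e^{-\frac{1}{2}\tau^2 E_0^2}$ whereas every excited component obeys $E_l\geq\Delta$ (here I use $E_0\geq 0$) and is damped by at least $e^{-\frac{1}{2}\tau^2\Delta^2}$. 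Writing the normalized filtered state $\kt{\phi}:=g(H)\kt{\psi_0}/\|g(H)\kt{\psi_0}\|$ as $a\kt{E_0}+b\kt{E_0^\perp}$, I would bound the leakage weight $|b|^2=1-|a|^2\leq e^{-\tau^2(\Delta^2-E_0^2)}/\gamma^2$, and then use the standard expansion of $\br{\phi}O\kt{\phi}$ to obtain $|\br{\phi}O\kt{\phi}-\br{E_0}O\kt{E_0}|=\OC(|b|\,\|O\|)$.

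The choice of $\tau$ is then fixed by two competing demands, and this is where the hypothesis $E_0\leq\Delta/\sqrt{2\log(\|O\|/(\varepsilon\gamma))}$ enters. Setting $\tau^2=\Theta\big(\log(\|O\|/(\varepsilon\gamma))/\Delta^2\big)$ drives the leakage term $e^{-\frac{1}{2}\tau^2(\Delta^2-E_0^2)}/\gamma$ below $\OC(\varepsilon/\|O\|)$, so the surrogate is $\varepsilon/2$-close to $\br{E_0}O\kt{E_0}$; here the hypothesis also gives $E_0^2\leq\Delta^2/2$, hence $\Delta^2-E_0^2=\Theta(\Delta^2)$. Simultaneously, the same hypothesis guarantees $\tau^2E_0^2=\OC(1)$, so $e^{-\tau^2E_0^2}=\Theta(1)$, and combined with the leakage bound this shows $\|g(H)\kt{\psi_0}\|^2=\Theta(\gamma^2)$. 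Controlling this denominator from below is the crux of the argument: without the bound on $E_0$ the Gaussian would over-suppress the ground state and shrink the denominator far below $\gamma^2$, inflating the sample complexity.

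With $\tau$ fixed, I would invoke the Fourier series for the Gaussian from Ref.~\cite{keen2021quantum}, whose $\ell_1$-weight is $\alpha=\widetilde{\OC}(1)$ and whose maximal time parameter is $t_{\max}=\widetilde{\OC}(\tau)=\widetilde{\OC}(1/\Delta)$ once the Fourier approximation error $\tilde\varepsilon$ in Theorem~\ref{thm:general-sampling} is set polylogarithmically small, and feed it into Proposition~\ref{prop:fourier-sampling}(b). This gives gate depth $\OC(\lambda^2 t_{\max}^2+d_{\psi_0})=\widetilde{\OC}(\lambda^2/\Delta^2)$, matching the claim. For the sample complexity I would propagate the ratio error: since the numerator $\br{\psi_0}g(H)Og(H)\kt{\psi_0}$ has magnitude $\OC(\|O\|\gamma^2)$ while the denominator is $\Theta(\gamma^2)$, estimating the ratio to additive error $\varepsilon$ requires the numerator (and, via Proposition~\ref{prop:norm-sampling}, the norm) to additive error $\OC(\varepsilon\gamma^2)$; Proposition~\ref{prop:fourier-sampling}(b) then yields $\OC\big(\|O\|^2\alpha^4/(\varepsilon\gamma^2)^2\big)=\widetilde{\OC}(\|O\|^2/(\varepsilon^2\gamma^4))$ samples. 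The factor $\gamma^{-4}$ is thus a direct consequence of estimating a quantity of size $\sim\gamma^2$ to relative precision $\varepsilon$ with a Monte Carlo estimator.

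Finally, to drop positivity in favour of a known lower bound $E_{\mathrm{lb}}\leq E_0$, I would run the whole construction on the shifted operator $H-E_{\mathrm{lb}}\id\geq 0$, whose spectral gap and eigenvectors are unchanged and whose ground energy $E_0-E_{\mathrm{lb}}$ now plays the role formerly required of $E_0$. Crucially the shift is free: in each Fourier term $e^{it_k(H-E_{\mathrm{lb}})}=e^{-it_k E_{\mathrm{lb}}}e^{it_k H}$ the constant contributes only a unit-modulus phase that is absorbed into $\alpha_k$, leaving the random-compiler weight (hence $\lambda$ and the gate depth) and the Fourier weight $\alpha$ unchanged. The main obstacle I anticipate is the joint error accounting of the two sampled quantities in the ratio together with the filter-approximation error: one must verify that the $\Theta(\gamma^2)$ lower bound on the denominator—which hinges on the precise interplay between the $E_0$ hypothesis and the chosen $\tau$—holds uniformly, so that the statistical error of the ratio does not secretly acquire extra inverse powers of $\gamma$ or $\Delta$.
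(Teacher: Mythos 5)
Your proposal is correct and follows essentially the same route as the paper: the Gaussian filter $e^{-\frac{1}{2}\tau^2H^2}$ with $\tau=\Theta(\Delta^{-1}\sqrt{\log(\|O\|/\varepsilon\gamma)})$, the Hubbard--Stratonovich Fourier series of constant weight and $t_{\max}=\widetilde{\OC}(1/\Delta)$ fed into Proposition~\ref{prop:fourier-sampling}(b) and Proposition~\ref{prop:norm-sampling}, the hypothesis on $E_0$ used precisely to keep $\tau E_0=\OC(1)$ so that $\|e^{-\frac{1}{2}\tau^2H^2}\kt{\psi_0}\|^2=\Theta(\gamma^2)$, and the phase-absorption trick for shifted spectra. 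The only cosmetic difference is that you derive the leakage bound directly in the eigenbasis (with the slightly weaker exponent $\Delta^2-E_0^2$ in place of $2E_0\Delta+\Delta^2$, still sufficient under the stated hypothesis) rather than quoting the projector lemma of Keen et al.\ as the paper does.
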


We provide a proof of this corollary, including the extension to non-positive spectra, and accounting of polylogarithmic or constant factor contributions in Appendix \ref{sec:appdx-gs}. Similar to our algorithms for linear systems, we emphasize that we do not ask for any quantum oracle or quantum subroutines encoding information about $H$. Additionally, if the trial state $\kt{\psi_0}$ is accessible as a classical description of vector amplitudes, or from a linear combination of gates, then the trial state can be prepared statistically as part of the randomized algorithm, with minimal depth requirements (see Proposition \ref{prop:statistical-encoding} and Appendix \ref{appdx:randomized-encoding}). Our algorithm is particularly efficient when the Pauli weight $\lambda$ is small, which can be the case for certain physically motivated problems. We discuss more detailed analysis of the complexities in Section \ref{sec:gs-comparison}. Finally, we remark that if a trial state is instead provided with bounded positive overlap $\bar{\gamma}:= \braket{\psi_0}{E_0}$, then quantities of the form $\braket{\phi}{E_0}$ (given preparation of $\ketbra{\phi}{\phi}$) can be approximated with a reduced $\bar{\gamma}^{-2}$ dependence in the sample complexity.

%%%%%%%%%%%%%%%%%%%%%%%%%%%%%%%%%%

\subsection{Comparison}\label{sec:gs-comparison}

\begin{center}
\begin{table*}[t]
\centering
\resizebox{\textwidth}{!}{%
{\renewcommand{\arraystretch}{1.6}%
    \begin{tabular}{ | P{5.8em} !{\vrule width 1.2pt} P{6.2em} | P{5.7em} | P{7em} | P{9em} | P{7.5em}  | P{5.5em} |}
    \hline\hline   
    & Quantum Oracle & GSE/Gap Assumptions & Data Access Space & Ancillary Algorithmic Space &  Quantum Gate Depth & Runtime  \\ \thickhline
    \multirow{2}{5.8em}{\centering  LCU \cite{keen2021quantum}} & $1^{o}$ Trotter time evol. & \multirow{1}{5em}{\centering  (B1)\vspace{-3.2em}} & $\OC(L)$ bits & \multirow{2}{9em}{\centering $\OC\left(\log\big(\frac{1}{\Delta}\log\frac{1}{\gamma \varepsilon^2}\big) \right)$ qubits \vspace{-2em} } & $\widetilde{\OC}\big( \frac{L^3\Lambda^2}{\gamma\Delta^2}\frac{1}{\varepsilon} \big)$ & $\widetilde{\OC}\big( \frac{L^3\Lambda^2}{\gamma\Delta^2}\frac{1}{\varepsilon^3} \big)$  \\ \cline{2-2}\cline{4-4}\cline{6-7} & Time evol.~via qubitization \cite{low2019hamiltonian} & & $\OC(L)$ bits , $\OC(\log L)$ qubits & & $\widetilde{\OC}\big( \frac{L\lambda}{\gamma\Delta^2}\log^{3/2}\frac{1}{\varepsilon} \big)$ & $\widetilde{\OC}\big( \frac{L\lambda}{\gamma\Delta^2}\frac{1}{\varepsilon^2} \big)$ \\ \hline
    \multirow{2}{5.8em}{\centering  Near Query Optimal \cite{lin2020near}} & \multirow{2}{6.2em}{\centering Block encoding $H$} & (B2) & \multirow{2}{7em}{\centering $\OC(L)$ bits , $\OC(\log L)$ qubits} & $1$ qubit & \multirow{2}{7.5em}{\centering $\widetilde{\OC}\big( \frac{L\lambda}{\gamma\Delta}\log\frac{1}{\varepsilon} \big)$ } & \multirow{2}{6em}{\centering $\widetilde{\OC}\big( \frac{L\lambda}{\gamma\Delta}\frac{1}{\varepsilon^2}\big)$ } \\ \cline{3-3}\cline{5-5} & & none & & $\OC(\log(\frac{1}{\gamma}))$ qubits &  &  \\ \hline
    \multirow{2}{5.8em}{\centering  Early F.T. \cite{zhang2022computing}} & $1^{o}$ Trotter time evol. & \multirow{1}{5em}{\centering  none\vspace{-3.2em}} & $\OC(L)$ bits & \multirow{2}{9em}{\centering 1 qubit \vspace{-2em} } & $\widetilde{\OC}\big( \frac{L^3\Lambda^2}{\Delta^2}\frac{1}{\varepsilon} \log\frac{1}{\gamma} \big)$ & $\widetilde{\OC}\big( \frac{L^3\Lambda^2}{\gamma^4\Delta^2}\frac{1}{\varepsilon^3} \big)$  \\ \cline{2-2}\cline{4-4}\cline{6-7} & Time evol.~via qubitization \cite{low2019hamiltonian} & & $\OC(L)$ bits , $\OC(\log L)$ qubits & & $\widetilde{\OC}\big( \frac{L\lambda}{\Delta}\log\frac{1}{\gamma\varepsilon} \big)$ & $\widetilde{\OC}\big( \frac{L\lambda}{\gamma^4\Delta}\frac{1}{\varepsilon^2} \big)$ \\ \hline
    This Work (Randomized Quantum) & - & (B1) & $\OC(L)$ bits & $1$ qubit & ${\OC}\big( \frac{\lambda^2}{\Delta^2}\log^2\frac{1}{\gamma\varepsilon} \big)$ & $\widetilde{\OC}\big( \frac{\lambda^2}{\gamma^4 \Delta^2}\frac{1}{\varepsilon^2} \big)$\\ \hline\hline
    \end{tabular}
}
}

    \caption{End-to-end comparison of complexities to approximate the expectation value of an observable with respect to the ground state, up to $\varepsilon$-additive error. These algorithms presume access to a trial state approximating the ground state with overlap $\gamma$; a spectral gap lower bound $\Delta$, and possible further assumptions on the ground state energy (GSE) and spectral gap (B1): given $\mu_1\leq E_0\; s.t.$ $E_0 - \mu_1 \leq {\Delta}/{\sqrt{\log\frac{1}{\gamma \varepsilon}}}$, and (B2): given $\mu_2 \geq E_0 \; s.t.$ $\mu_2 - \Delta/2 \geq E_0$, $\mu_2 + \Delta/2 \leq E_1$. In all cases we assume that the Hamiltonian is given in terms of its Pauli decomposition (see Eq.~\eqref{eq:pauli-decomposition}) with $L$ Pauli terms, largest coefficient magnitude $\Lambda$ and coefficient weight $\lambda$. \label{table:gs}}
\end{table*}
\end{center}

Different approaches to quantum algorithms for ground state preparation have been previously studied in Refs.~\cite{yimin2019faster,lin2020near,keen2021quantum, dong2022ground, kyriienko2020quantum, bespalova2021hamiltonian, zhang2022computing, zeng2021universal}. Recently, Ref.~\cite{zhang2022computing} also established an explicit algorithm for ground state property estimation. In Table \ref{table:gs} we compare the complexities of various algorithms for the task of estimating a given observable with respect to the ground state. Namely, we compare the following approaches:
\begin{itemize}
    \item A linear combination of unitaries (LCU) ground state preparation algorithm \cite{keen2021quantum}, which requires a time evolution oracle. There are many approaches to time evolution in the literature \cite{low2019hamiltonian, childs2021theory, berry2015simulating, berry2015hamiltonian, childs2018toward, childs2019faster, campbell2019random, zeng2022simple}. In Table \ref{table:gs} we detail the complexities for two schemes that are in some sense the two ends of the spectrum: for first-order Trotter time evolution which has no additional quantum space requirements, as well as the quantum-runtime-efficient block encoding approach \cite{low2019hamiltonian}.
    \item The block encoding approaches of Ref.~\cite{lin2020near} to prepare the ground state, which is query optimal in $\gamma, \Delta, \varepsilon$ up to logarithmic factors. We note that Ref.~\cite{lin2020near} gives two approaches; one where there is an a priori bound for the ground state energy and spectral gap, and one where there is no such assumption which uses more ancillary qubits and has equivalent runtime up to dominant order.
    \item The early fault-tolerant approach of Ref.~\cite{zhang2022computing} to sample properties of the ground state, which again requires a time evolution oracle. As a caveat, this approach can only estimate expectation values of unitary observables with respect to the desired state. Non-unitary observables can be taken into account with a block encoding, at the expense of more qubits and complexities augmented by the block encoding factor. For the sake of easy comparison, in Table \ref{table:gs} we assume that the observable is unitary.
    \item Our approach using Corollary \ref{cor:ground-state}, assuming no parallelization. 
\end{itemize}

\vspace{-1mm}

For the ground state preparation algorithms considered in Table \ref{table:gs}, we assume that the observable is measured incoherently, in order to establish the most competitive gate depth with regards to our approach. {We note that as in our discussion in Section \ref{sec:linear-systems-comparison}, coherent estimation of observables yields a faster overall runtime (linear scaling in $\OC(\varepsilon^{-1})$), at the expense of $\OC(\varepsilon^{-1})$ scaling in the gate depth, which can be exponentially worse} \cite{knill2007optimal, wang2019accelerated}. For all algorithms we assume that the Hamiltonian is provided classically in the form of its Pauli coefficients (as in Eq.~\eqref{eq:pauli-decomposition}), and any subsequent resources required for data access are noted. If the total number of Pauli terms $L$ is small, this allows for an efficient block encoding via the \texttt{SELECT} and \texttt{PREPARE} oracles, which is what we consider in Table \ref{table:gs}. We also consider Trotterized time evolution, which gives additional dependencies on commutators between terms \cite{childs2021theory}, which here we bound with the magnitude of the largest Pauli coefficient $\Lambda:=\max_{\ell}a_{\ell}$ for simplicity. {One can also opt for higher order Trotterized evolution, which gives improved scaling in $\varepsilon^{-1}$ but worst scaling in $L$ and $\lambda t$.} %We note that some approaches to ground state preparation have implicit dependencies on $\|H\|$, which we here omit by assuming $\|H\|=1$ for simplicity.

As with our linear systems algorithm, we focus on the quantum hardware requirements, which consists of the total number of logical qubits required for data access and to run the algorithm, as well as the gate depth. Similar to the linear systems task considered in Section \ref{sec:linear-systems-comparison}, our approach has the smallest space requirement, and there is scope for our algorithm to have competitive gate depth if the Pauli weight $\lambda$ is smaller than number of Pauli terms $L$ {(e.g. see values given in Table IX and X of Ref.~\cite{lee2021even}).} 
%Our algorithm additionally has no runtime dependence on $L$ at all, so there is scope for possible advantage over other approaches for problems with large $L$.

Finally, we note that in Table \ref{table:gs} we have not accounted for the runtime to prepare the trial state. Other coherent approaches require coherent calls to the unitary that prepares the trial state. Thus, if the runtime of this subroutine is significant, this would also cause discrepancies in the total runtimes listed.

%%%%%%%%%%%%%%%%%%%%%%%%%%%%%%%%%%%

\section{Gibbs state sampling}\label{sec:gibbs}

\subsection{Randomized quantum algorithm}

We can also consider randomizing prior approaches to Gibbs state preparation. Here the task is usually to recover an approximation to the Gibbs state $e^{-\beta H}/\ZC$ (also known as the thermal state)  given some information about the Hamiltonian $H$, where $\ZC:=\Tr[e^{-\beta H}]$ is the partition function and $\beta$ is a parameter physically corresponding to the inverse temperature. In our setting, we will recover observables with respect to the Gibbs state.

Under standard complexity-theoretic assumptions, this is expected to be hard in general \cite{bravyi2021complexity}. However, one prominent line of work has aimed to emulate the thermalization process, following the intuition that there should exist efficient algorithms for certain physical systems \cite{wocjan2021szegedy, yung2012quantum, temme2011quantum, chen2023quantum} (see Ref.~\cite{chen2021fast} for the current state-of-the-art approach). This leads to complexity dependent on the mixing time, which whilst small for physical systems, may be difficult to estimate or may require additional assumptions to bound. Similar to classical Monte Carlo approaches for classical Hamiltonians, this nevertheless could be a promising approach to achieve efficient practical performance for interesting problems.

In this section we follow a different line of work that aims to approximate the exponential operator directly for generic Hamiltonians \cite{chowdhury2016quantum, gilyen2019quantum, van2020quantum, van2019improvements, tong2021fast, holmes2022quantum}. When applied to one half of the maximally entangled state (with the correct normalization), this corresponds to the purification of the Gibbs state. Reflecting the expected hardness of the general problem, prior approaches have exponential runtime in the inverse temperature $\beta$ and operator norm $\|H\|$. A perturbative approach was proposed in Ref.~\cite{holmes2022quantum} where one assumes access to the purification of the Gibbs state of an intermediate Hamiltonian $H_0$. In this case, the dominant contribution to the complexity instead depends on the operator norm of the perturbation $V:=H-H_0$, which could in principle be much smaller. Additionally, this dependence can be further reduced with the action of a so-called non-equilibrium unitary, though we omit this discussion for simplicity here. We note that, in the trivial case $H_0 = \boldsymbol{0}$, this starting assumption reduces to starting with the maximally entangled state, which is the same as the other aforementioned approaches. Here the relevant function of interest is the exponential function of the work operator, defined as $W:= H \otimes \id-\id\otimes H_0^*$. In this section we import the insights of Ref.~\cite{holmes2022quantum} into our randomized framework which allows us to state the following corollary:

\begin{corollary}[Gibbs state property estimation]\label{cor:gibbs-state}
Suppose access to the Gibbs state of an intermediate Hamiltonian $H_0$ via its purification $\kt{\Psi_0}$ and the ability to measure the observable $O$. Further, assume the Pauli decompositions of $H_0$ and $H$ are known as in Eq.~\ref{eq:pauli-decomposition}, and $[H_0,H]=0$. Then, we give a randomized quantum algorithm to
\begin{equation}\label{eq:gibbs-task}
    \text{approximate $\frac{\Tr[e^{-\beta H} O]}{\ZC}$ to additive error $\varepsilon$}\,,
\end{equation}
with arbitrary constant success probability, utilizing
\begin{equation}
    \text{$\CC^{O}_{\emph{sample}} =\widetilde{\OC}\left(   e^{2\beta\|V\|} \frac{\ZC_0^2}{\ZC^2} \frac{e^{\sqrt{\ln (\|O\|/\varepsilon)} }}{\varepsilon^2} \right)$ circuit runs\,,}
\end{equation}
where $\ZC_0:=\Tr[e^{-\beta H_0}]$ and $\ZC:=\Tr[e^{-\beta H}]$ are the partition functions of $H_0$ and $H$ respectively, and where we denote $V=H-H_0$. Each circuit is of the form in Figure \ref{fig:circuits}(b), and uses one instance of $\kt{\Psi_0}$ with at most
\begin{align}
    \text{$\CC^{O}_{\emph{gate}} = \widetilde{\OC}\left(\beta^3 \lambda_W^2\|W\| \right)$ gate depth\,,}
\end{align}
where $\lambda_W$ is the Pauli weight of the work operator $W:= H \otimes \id-\id\otimes H_0^*$.
\end{corollary}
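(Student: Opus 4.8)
The plan is to reduce the task to the general sampling result of Theorem~\ref{thm:general-sampling} by writing the target as a normalized quantity of the form $\Tr[f(W)\rho f(W)^\dagger O']/q^2$, and then to supply the Fourier approximation of the exponential that makes the stated complexities fall out. First I would set up the purification identity. Writing the thermofield-double purification of the intermediate Gibbs state in the eigenbasis of $H_0$ as $\kt{\Psi_0} = \ZC_0^{-1/2}\sum_i e^{-\beta E_i^{(0)}/2}\kt{i}\otimes\kt{i^*}$, and using $[H_0,H]=0$ so that $H$ and $H_0$ share this eigenbasis, I would note that $W = H\otimes\id-\id\otimes H_0^*$ is diagonal in the product eigenbasis with eigenvalue $E_i-E_j^{(0)}$ on $\kt{i}\otimes\kt{j^*}$. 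Applying $e^{-\beta W/2}$ to $\kt{\Psi_0}$ then cancels the $H_0$ Boltzmann weights and reinstates those of $H$, giving $e^{-\beta W/2}\kt{\Psi_0} = (\ZC/\ZC_0)^{1/2}\kt{\Psi_H}$ with $\kt{\Psi_H}$ the purification of $e^{-\beta H}/\ZC$. Tracing out the reference register yields
\begin{equation}
\frac{\Tr[e^{-\beta H}O]}{\ZC} = \frac{\ZC_0}{\ZC}\,\br{\Psi_0}e^{-\beta W/2}(O\otimes\id)\,e^{-\beta W/2}\kt{\Psi_0}\,.
\end{equation}

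Second, I would recognize the right-hand side as exactly an instance of the quantity handled by Theorem~\ref{thm:general-sampling}, with the choices $f(x)=e^{-\beta x/2}$, $\rho=\kt{\Psi_0}\br{\Psi_0}$ (prepared by the purification oracle), $O'=O\otimes\id$, and normalization $q^2=\ZC/\ZC_0$. Since $W$ is Hermitian with a Pauli decomposition inherited from those of $H$ and $H_0$, and Pauli weight $\lambda_W$, the hypotheses of the theorem are satisfied as soon as a Fourier series for $f$ is fixed, and Algorithm~\ref{alg:observable} can be run directly on the doubled register with a single Hadamard-test ancilla.

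The crux is the third step: choosing the Fourier series and bounding its parameters, where I would exploit the relevant-subspace refinement noted in Section~\ref{sec:other-LCUs}. Because $[H_0,H]=0$, each $e^{it_k W}$ preserves the ``diagonal'' subspace $\mathrm{span}\{\kt{i}\otimes\kt{i^*}\}$ in which $\kt{\Psi_0}$ lives, and on that subspace $W$ acts identically to $V=H-H_0$, whose eigenvalues lie in $[-\|V\|,\|V\|]$. Consequently the series $s$ only needs $|s(x)-e^{-\beta x/2}|\leq\tilde\varepsilon$ for $x\in[-\|V\|,\|V\|]$ rather than on the full spectral range of $W$, which caps the coefficient weight at $\alpha=\widetilde{\OC}(e^{\beta\|V\|/2})$ instead of the far larger $e^{\beta\|W\|/2}$; this is precisely why the commutation assumption produces a $\|V\|$-dependence. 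I would then invoke a Fourier approximation of $e^{-\beta x/2}$ on this interval (e.g.\ a Gaussian-damped truncated Fourier integral or a Chebyshev-to-Fourier construction) attaining this near-optimal weight while using maximum frequency $t_{\max}=\widetilde{\OC}(\beta^{3/2}\|W\|^{1/2})$, the $\log(1/\tilde\varepsilon)$ overhead from balancing truncation error against weight being what generates the sub-polynomial factor $e^{\sqrt{\ln(\|O\|/\varepsilon)}}$.

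Finally, I would assemble the complexities by substituting into Proposition~\ref{prop:fourier-sampling}(b): the gate depth $\OC(\lambda_W^2 t_{\max}^2)$ becomes $\widetilde{\OC}(\beta^3\lambda_W^2\|W\|)$, and the sample complexity $\OC(\|O'\|^2\alpha^4/(\varepsilon^2 q^4))$ becomes $\widetilde{\OC}(e^{2\beta\|V\|}(\ZC_0^2/\ZC^2)\,e^{\sqrt{\ln(\|O\|/\varepsilon)}}/\varepsilon^2)$ after using $q^4=\ZC^2/\ZC_0^2$ and tuning $\tilde\varepsilon$ through Theorem~\ref{thm:general-sampling} so that the propagated error equals $\varepsilon$. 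The hard part will be exactly this quantitative control: exhibiting a Fourier approximation of the exponential that simultaneously attains weight within a sub-polynomial factor of the optimal $e^{\beta\|V\|/2}$ on the restricted interval, keeps $t_{\max}$ at $\widetilde{\OC}(\beta^{3/2}\|W\|^{1/2})$, and lets the error tuning close without inflating these, together with a rigorous argument that approximation on the relevant subspace alone controls both the bias and the variance of the full estimator.
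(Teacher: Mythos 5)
Your overall strategy matches the paper's: write the target as a work-operator transformation of the purified intermediate Gibbs state, decompose $e^{-\beta W/2}$ into a Fourier LCU, sample it with Algorithm~\ref{alg:observable}, and handle the normalization $\ZC/\ZC_0$ with a separate subroutine. Your reading of why the commutation assumption yields a $\|V\|$- rather than $\|W\|$-dependence in the weight (the estimator only ever probes $W$ on the diagonal subspace $\mathrm{span}\{\kt{i}\otimes\kt{i^*}\}$, where its spectrum is that of $V$) is also the right mechanism.

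The gap is that the entire quantitative content of the corollary sits in the step you defer. You correctly name the three bounds you need --- coefficient weight $\widetilde{\OC}(e^{\beta\|V\|/2})$, output-norm lower bound $\Omega(\sqrt{\ZC/\ZC_0})$, and $\tau_{\max}=\widetilde{\OC}(\beta^{3/2}\|W\|^{1/2})$ --- but you do not exhibit a Fourier series achieving them, and "a Gaussian-damped truncated Fourier integral or a Chebyshev-to-Fourier construction" is not enough to certify the unusual $\beta^{3/2}\|W\|^{1/2}$ frequency scaling or the $e^{\sqrt{\ln(\|O\|/\varepsilon)}}$ overhead. The paper does not construct this series either: it imports it wholesale as Lemma~\ref{lem:zoe}, adapted from Lemmas 3.1, 3.2 and 3.4 of Ref.~\cite{holmes2022quantum}, which delivers exactly those three bounds together with a trace-distance guarantee $\|\Tr_A[X\ketbra{\Psi_0}{\Psi_0}X^{\dag}]/\|X\kt{\Psi_0}\|^2-\gamma_\beta\|_1\leq 2\varepsilon$. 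Without citing or reproving that result, your argument does not close.

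A secondary point: you route the error analysis through Theorem~\ref{thm:general-sampling}, whose detailed version requires $\|f(W)-s(W)\|\leq\tilde\varepsilon$ in operator norm over the \emph{whole} space; that requirement is exactly what your subspace refinement is trying to evade, so the two halves of your plan are in tension. The paper sidesteps this by explicitly \emph{not} using Theorem~\ref{thm:general-sampling}: because the imported lemma already bounds the trace distance of the normalized output state, it feeds the LCU directly into Proposition~\ref{prop:fourier-sampling} and Lemma~\ref{lem:appdx-q-approximation}. If you insist on your route, you would need to redo the bias propagation with $\|(s(W)-e^{-\beta W/2})\kt{\Psi_0}\|_2$ in place of the operator norm (your variance claim is fine, since variance is controlled by the LCU weight alone), which is a workable but nontrivial modification of Lemma~\ref{lem:appdx-expectation-overlap2}.
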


Similar to before, these complexities can be established via Proposition \ref{prop:fourier-sampling} and Proposition \ref{prop:norm-sampling}, along with error bounds established in Ref.~\cite{holmes2022quantum}. We refer the reader to Appendix \ref{appdx:gibbs} for details. 

\subsection{Comparison}

In Table \ref{table:gibbs} we compare the complexities of our randomized approach against other algorithms for the task in Eq.~\eqref{eq:gibbs-task} with an operator $O$ with $\|O\|\leq 1$. The other approaches we consider are as follows:

\begin{itemize}
    \item An LCU approach based on the Hubbard-Stratonovich transform where one requires access to time evolution of an operator $\widetilde{H}$ which, when squared, recovers the action of ${H}$ conditioned on an ancillary register. We detail resources required to obtain a block encoding of $\widetilde{H}$ and using a qubitization approach for the time evolution \cite{low2019hamiltonian}. Specifically, this will incur runtime overhead in the number of Pauli terms $L_H$ and $\lambda_{\widetilde{H}}:=\sum_{\ell=1}^{L_H}\sqrt{a_{\ell}}$, where $a_{\ell}$ are the Pauli coefficients of $H$
    \item A QSVT approach to implement polynomial approximations of the  Gaussian and exponential function. We consider block encodings of $\widetilde{H}$ and ${H}$ for these approaches respectively, which again incurs overhead in $L_H$ and $\lambda_{\widetilde{H}}$, or  $\lambda_H$ which denote as the Pauli weight of $H$.
    \item The perturbutive approach of Ref.~\cite{holmes2022quantum} starting from the purified Gibbs state of an intermediate Hamiltonian $H_0$. An LCU approximation of $e^{-\beta W/2}$ is implemented. As with the above approaches, the exponentially costly step comes from the number of rounds of amplitude amplification required. As we compare complexities with our randomized version of this algorithm, we add a second row in Table \ref{table:gibbs} to demonstrate complexities using incoherent sampling rather than amplitude amplification, in which case the required gate depth is greatly reduced. Here, the runtimes are dependent on the number of Pauli terms constituting $W$ and its Pauli weight, which we denote as $L_W$ and $\lambda_W$ respectively.
    \item Our approach, given in Corollary \ref{cor:gibbs-state}, based on the insights of Ref.~\cite{holmes2022quantum}.
\end{itemize}

\begin{table*}[t]
\centering
{\renewcommand{\arraystretch}{2}%
\resizebox{\textwidth}{!}{%
    \begin{tabular}{ | P{5.8em} | P{5.6em} !{\vrule width 1.2pt} P{6em} | P{6em} | P{17.6em}  | P{15.6em} |}
    \hline\hline Algorithm & Access Model
    & Data Access Space & Algorithmic Space &  Quantum Gate Depth & Runtime  \\ \thickhline
    LCU \cite{chowdhury2016quantum} & Time evol.~$\widetilde{H}$ via qubitization \cite{low2019hamiltonian} & $\OC(L_H)$ bits , $\OC(\log L_H)$ qubits
    & $2n+$ $\OC(\log \beta\|H\|+\log\log\frac{1}{\varepsilon} ))$ qubits
    &  $\widetilde{\OC} \left( L_H \lambda_{\widetilde{H}}    \sqrt{\beta}\sqrt{\frac{2^n}{\ZC}} \log \frac{1}{\varepsilon} \right)$
    & $\widetilde{\OC} \left( L_H \lambda_{\widetilde{H}}    \sqrt{\beta}\sqrt{\frac{N}{\ZC}} \frac{1}{\varepsilon^2} \right)$
    \\ \hline
    \multirow{2}{6em}{\centering QSVT \cite{gilyen2019quantum}} & Block encoding $\widetilde{H}$  & $\OC(L_H)$ bits , $\OC(\log L_H)$ qubits
    & $2n+ 2$ qubits
    &  $\widetilde{\OC} \left( L_H \lambda_{\widetilde{H}}    \sqrt{\beta}\sqrt{\frac{2^n}{\ZC}} \log \frac{1}{\varepsilon} \right)$
    & $\widetilde{\OC} \left( L_H \lambda_{\widetilde{H}}    \sqrt{\beta}\sqrt{\frac{2^n}{\ZC}} \frac{1}{\varepsilon^2} \right)$
    \\ \cline{2-6} & Block encoding ${H}$  & $\OC(L_H)$ bits , $\OC(\log L_H)$ qubits
    & $2n+ 2$ qubits
    &  $\widetilde{\OC} \left( L_H \lambda_{{H}}    e^{\beta/2}\sqrt{\frac{2^n}{\ZC}} \log \frac{1}{\varepsilon} \right)$
    & $\widetilde{\OC} \left( L_H \lambda_{{H}}    e^{\beta/2}\sqrt{\frac{2^n}{\ZC}} \frac{1}{\varepsilon^2} \right)$
    \\ \hline
    \multirow{2}{6em}{\centering  Work operator LCU \cite{costa2021optimal}} & \multirow{2}{5.8em}{\centering Block encoding $W$} & \multirow{2}{6em}{\centering $\OC(L_W)$ bits , $\OC(\log L_W)$ qubits} & \multirow{2}{6em}{\centering $2n+3$ qubits} & $\widetilde{\OC} \left( L_W \lambda_W \sqrt{\|W\|}  \sqrt{\frac{\ZC_0}{\ZC}} e^{\beta\|V\|/2} e^{\sqrt{\log 1/\varepsilon}} \right)$ & $\widetilde{\OC} \left( L_W \lambda_W \sqrt{\|W\|} \sqrt{\frac{\ZC_0}{\ZC}} e^{\beta\|V\|/2} \frac{1}{\varepsilon^2} \right)$ \\
    \cline{5-6} & & & & $\widetilde{\OC} \left( L_W  \sqrt{\|W\|}\beta^{3/2} \left(\lambda_W + \|W\|\log\frac{1}{\varepsilon} \right)  \right)$ & $\widetilde{\OC} \left(L_W \lambda_W \sqrt{\|W\|} \frac{\ZC_0}{\ZC} e^{\beta\|V\|} \frac{1}{\varepsilon^2}\right)$ \\
    \hline
    Randomized Quantum (This Work) & Pauli coefficients of $W$ & $\OC(L_W)$ bits & $2n+1$ qubits  &  $\widetilde{\OC}\left( \lambda_W^2\|W\|\beta^3 + \log\frac{1}{\varepsilon} \right)$  & $\widetilde{\OC}\left( \lambda_W^2 \|W\| \frac{\ZC_0^2}{\ZC^2}     e^{2\beta\|V\|} \frac{1 }{\varepsilon^2} \right)$  \\ \hline\hline
    \end{tabular}
}
    \caption{End-to-end complexities of estimating the expectation value of an operator $O\,;\|O\|=\OC(1)$ to additive error $\varepsilon$ with respect to the Gibbs state of Hamiltonian $H$ on a system of $n$ qubits. {In the cells we have omitted subpolynomial factors to simplify expressions.} Certain approaches require access to a matrix $\widetilde{H}$ which recovers the action of $\sqrt{H}$ and has weight $\lambda_{\widetilde{H}}$ defined in the main text which satisfies $\sqrt{\lambda_H} \leq \lambda_{\widetilde{H}} \leq \sqrt{L_H\lambda_H}$. The pertubtive approach assumes access to a purification of the Gibbs state of an intermediate Hamiltonian $H_0$ with complexity dependent on the perturbation $V:=H-H_0$ and the work operator $W$ defined in the main text which satisfies $\|W\| \leq \|H_0\| + \|H\|$, $L_W = L_{H_0} + L_H$, $\lambda_W = \lambda_{H_0} + \lambda_H$. {For the algorithm of Ref.~\cite{holmes2022quantum} we have included two rows: first, the more runtime-efficient approach which uses amplitude amplification, and the measure-until-success approach which uses exponentially less gate depth in $\beta$.}} \label{table:gibbs}}
\end{table*}

Similar to our ground state property estimation comparison, in all the above settings we have considered incoherent sampling of the observable $O$ in order to give the most competitive gate depth. In all cases, we assume classical access to the Pauli decomposition of $H$ (and $H_0$) as a starting point. From then on, the classical and quantum resources required to process this data are recorded in the table. Specifically, all block encodings are presumed to be constructed via \texttt{SELECT} and \texttt{PREPARE} oracles. The weight of $\widetilde{H}$ can be found to satisfy $\sqrt{\lambda_H} \leq \lambda_{\widetilde{H}} \leq \sqrt{L_H\lambda_H}$. We also remark that $\|W\|\leq \|H\| + \|H_0\|$, and in the trivial case $H_0 = \boldsymbol{0}$ we have $\|W\| = \|V\| = \|H\|$. Additionally, the number of Pauli terms and Pauli weight simply follows as $L_W = L_{H_0} + L_H$, $\lambda_W = \lambda_{H_0} + \lambda_H$.  

In Table \ref{table:gibbs} we see that our randomized approach offloads the exponential complexity in $\beta\|V\|$ from the gate depth onto the sample overhead, compared to the fully coherent approach. However, compared to replacing amplitude amplification with incoherent sampling, our approach does not incur any explicit dependence on $L_W$ in the gate complexity or overall runtime. In addition, it uses fewer qubits than any other approach. Finally, we recall that in our framework our stated sample complexities serve as sufficient conditions, and in practice an $\varepsilon$-approximation of the observable could be achieved with a smaller circuit sample count. This could be especially relevant here, where our required gate depths are only polynomially large, in contrast to other approaches with exponentially large gate depth. Thus, given an efficient verifier, in our setting one could potentially collect samples from an efficient circuit and achieve convergence much faster than the stated runtime bounds.

%%%%%%%%%%%%%%%%%%%%%%%%%%%%%%%%%%%

\section{Application: Estimation of Green's functions}\label{sec:greens}

So far we have seen that our algorithms are naturally suited to settings where the matrix of interest is already given in the Pauli basis. Thus, our algorithms are suited to physically motivated problems. One natural application is the evaluation of single particle Green's functions in the context of many-body physics. Green's functions can be used to calculate single-particle expectation values such as the kinetic energy, as well as to determine the many-body density of states. For more detailed background we refer the reader to Refs.~\cite{giuliani2005quantum, fetter2012quantum}. Previous works have proposed quantum approaches for preparing Green's functions in both the frequency and time domains \cite{tong2021fast,keen2021quantum, bauer2016hybrid, wecker2015solving, kreula2016non, kreula2016few, rungger2019dynamical,endo2020calculation, kosugi2020construction}. One particular idea has been to use a quantum algorithm to evaluate Green's functions for the computationally expensive  subroutine of the quantum impurity problem in Dynamical Mean Field Theory (DMFT) calculations, in order to potentially extend their scope \cite{bauer2016hybrid, wecker2015solving, kreula2016non, kreula2016few}.

We define the advanced and retarded Green's function in the frequency domain (denoted as $G^{(+)}(\omega)$ and $G^{(-)}(\omega)$ respectively) as the matrix-valued functions with elements
\begin{align}
    G_{ij}^{(+)}(\omega) &:= \bbr{E_0}\hat{a}_i \left(\hbar\omega - (H-E_0) + i\eta \right)^{-1} \hat{a}^{\dag}_j \kkt{E_0}\,,\label{eq:greens1} \\
    G_{ij}^{(-)}(\omega) &:= \bbr{E_0}\hat{a}^{\dag}_i \left(\hbar\omega + (H-E_0) - i\eta \right)^{-1} \hat{a}_j \kkt{E_0}\,, \label{eq:greens2}
\end{align}
where $E_0$ is the ground state energy of $H$, $\eta$ is a broadening parameter that determines the resolution of the Green's function, and $\hat{a}^{\dag}_i,\hat{a}_i$ are Fermionic single-particle creation and annihilation operators. We note that these quantities are expectation values of an operator which contains the inverse of some matrix, where the expectation value is taken with respect to the ground state. Thus, our earlier results in Sections \ref{sec:linear-systems} and \ref{sec:gs} can be readily applied. 
The creation and annihilation operators can be expressed using Pauli operators via the Jordan-Wigner transformation
\begin{align}
    &\hat{a}_i=Z^{\otimes(i-1)} \otimes \frac{1}{2}(X+\mathrm{i} Y) \otimes I^{\otimes(N-i)}\,, \\ &\hat{a}_i^{\dagger}=Z^{\otimes(i-1)} \otimes \frac{1}{2}(X-\mathrm{i} Y) \otimes I^{\otimes(N-i)}\,.
\end{align}

We now present our result for Green's function estimation. From hereon we denote $\Gamma^{(+)}=\hbar\omega - (H-E_0) + i\eta $ and $\Gamma^{(-)}=\hbar\omega + (H-E_0) - i\eta $. We remark that whilst it is possible to use the algorithms of Section \ref{sec:linear-systems} and \ref{sec:gs} as separate subroutines to evaluate the Green's functions, it is beneficial to compile the ground state projection and matrix inversion all at once. This is the scheme we present in the following proposition. 

\begin{proposition}[Green's function estimation]\label{cor:greens-functions}
Consider a Hamiltonian $H=\sum_l E_l \ketbra{E_l}{E_l}$ with all eigenvalues $E_l \geq 0$, and known Pauli decomposition $H=\sum_{\ell}a_{\ell}P_{\ell}\,;\; \lambda_H:=\sum_{\ell}|a_{\ell}|$. Assume the spectral gap is lower bounded as $\Delta \leq E_1 - E_0$. Additionally, we suppose that we can freely prepare an initial trial state $\kt{\psi_0}$ with overlap with the ground state $\gamma:= |\braket{\psi_0}{E_0}|$. Given parameters $\omega, \eta$ and the ground state energy $E_0$, we give a random compiler to
\begin{align}
    \text{approximate $G_{ij}^{(+)}(\omega)$ \& $G_{ij}^{(-)}(\omega)$ up to additive error $\varepsilon$\,, }
\end{align}
and arbitrary constant success probability, each utilizing 
\begin{align}
    \text{$\widetilde{\OC}\left( \frac{\|(\Gamma^{(\pm)})^{-1}\|^2}{\gamma^4\varepsilon^2} \right)$ circuit runs\,, }
\end{align}
respectively, each consisting of at most
\begin{align}
   \text{$\widetilde{\OC}\left(\frac{\lambda_H^2}{\Delta^2} +  (|\hbar\omega \pm E_0|+ \lambda_{H}+\eta)^2\|\Gamma^{(\pm)-1}\|^2 \right)$ gate depth\,.} 
\end{align}
\end{proposition}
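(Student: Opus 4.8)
\emph{Proof plan.} The quantities $G_{ij}^{(\pm)}(\omega)$ are overlaps of the form $\bbr{E_0}\hat{a}_i^{(\dag)}\big(\Gamma^{(\pm)}\big)^{-1}\hat{a}_j^{(\dag)}\kkt{E_0}$, so the plan is to realize the entire operator product $\hat{a}_i^{(\dag)}(\Gamma^{(\pm)})^{-1}\hat{a}_j^{(\dag)}$, dressed by two ground-state filters, as a single linear combination of implementable unitaries and then to sample it with the Hadamard-test circuit of Figure \ref{fig:circuits}(a). First I would replace each copy of $\kkt{E_0}$ by the Gaussian ground-state filter $g(H)=\exp(-\tfrac12\tau^2H^2)$ applied to the trial state, exactly as in Corollary \ref{cor:ground-state}: for $\tau=\widetilde{\OC}(1/\Delta)$ one has $g(H)\kt{\psi_0}\approx\braket{E_0}{\psi_0}\kt{E_0}$ up to an error controlled by $\Delta$ and the ground-state-energy bound, so that
\begin{align}
G_{ij}^{(\pm)}(\omega)\;\approx\;\frac{1}{\gamma^2}\,\br{\psi_0}\,g(H)\,\hat{a}_i^{(\dag)}\,\big(\Gamma^{(\pm)}\big)^{-1}\,\hat{a}_j^{(\dag)}\,g(H)\,\kt{\psi_0}\,.
\end{align}
This reduces the task to estimating a single expectation $\br{\psi_0}M\kt{\psi_0}$ of $M=g(H)\hat{a}_i^{(\dag)}(\Gamma^{(\pm)})^{-1}\hat{a}_j^{(\dag)}g(H)$, with the factor $1/\gamma^2$ playing the role of an inverse normalization as in Theorem \ref{thm:general-sampling}.

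Next I would build a Fourier/LCU decomposition of each factor of $M$ separately and multiply them out. The filter already carries a Fourier series $\sum_k\alpha_k e^{it_kH}$ of $\ell_1$-weight $\widetilde{\OC}(1)$ and $t_{\max}=\widetilde{\OC}(1/\Delta)$ from Ref.~\cite{keen2021quantum} (as used in Corollary \ref{cor:ground-state}); the Jordan--Wigner images of $\hat{a}_i^{(\dag)},\hat{a}_j^{(\dag)}$ are length-two Pauli sums of weight $1$; and the resolvent is the inverse function applied to the shifted operator $\Gamma^{(\pm)}=(\hbar\omega\pm E_0\pm i\eta)\id\mp H$. The main obstacle is that $\Gamma^{(\pm)}$ is non-Hermitian because of the broadening $\pm i\eta$. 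I would handle this with the Hermitian dilation of the Remark following Corollary \ref{cor:linear-systems}, namely $\widetilde\Gamma^{(\pm)}=X\otimes\Re(\Gamma^{(\pm)})-\eta\,Y\otimes\id$ on one additional qubit, whose inverse has operator norm equal to $\|(\Gamma^{(\pm)})^{-1}\|$ and whose Pauli weight is $|\hbar\omega\pm E_0|+\lambda_H+\eta$; applying the inverse-function Fourier series of Ref.~\cite{childs2017quantum} to $\widetilde\Gamma^{(\pm)}$ then yields an approximation of $(\Gamma^{(\pm)})^{-1}$ with $\ell_1$-weight $\widetilde{\OC}(\|(\Gamma^{(\pm)})^{-1}\|)$ and $t_{\max}=\widetilde{\OC}(\|(\Gamma^{(\pm)})^{-1}\|)$. (Equivalently, since $\eta>0$ makes $(z-x)^{-1}$ smooth on the real axis, one can approximate this complex function of the Hermitian $H$ directly.)

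With all factors in hand, the product $M$ is a linear combination of terms of the form $e^{it_kH}\,Q\,e^{is_m\widetilde\Gamma^{(\pm)}}\,Q'\,e^{it_{k'}H}$, each a product of unitaries that I compile into strings of controlled Pauli gates and Pauli rotations via the random-compiler lemma of Eqs.~\eqref{eq:string-of-gates}--\eqref{eq:kianna-ham-simulation-simp}, and I then invoke Proposition \ref{prop:fourier-sampling}(a). The total $\ell_1$-weight is multiplicative, giving $\widetilde{\OC}(\|(\Gamma^{(\pm)})^{-1}\|)$ times the $\widetilde{\OC}(1)$ filter and annihilation-operator weights; together with the $1/\gamma^2$ prefactor amplifying the Monte Carlo error, Hoeffding then yields $\CC^{O}_{\text{sample}}=\widetilde{\OC}\big(\|(\Gamma^{(\pm)})^{-1}\|^2/(\gamma^4\varepsilon^2)\big)$. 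The gate depth of a single sampled circuit is additive across the compiled factors: $\widetilde{\OC}(\lambda_H^2 t_{\max,g}^2)=\widetilde{\OC}(\lambda_H^2/\Delta^2)$ from the two filters plus $\widetilde{\OC}(\lambda_{\widetilde\Gamma}^2 t_{\max,\Gamma}^2)=\widetilde{\OC}\big((|\hbar\omega\pm E_0|+\lambda_H+\eta)^2\|(\Gamma^{(\pm)})^{-1}\|^2\big)$ from the resolvent, matching the stated bound.

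Finally, the error analysis would split $\varepsilon$ into the filter error (Corollary \ref{cor:ground-state}), the resolvent-truncation error, and the Monte Carlo error, bounding the propagation of the first two through the fixed $\widetilde{\OC}(1)$ norms of the surrounding factors (using $\|\hat{a}^{(\dag)}\|=\widetilde{\OC}(1)$ and $\|g(H)\|\le 1$) and the $1/\gamma^2$ normalization, then setting each sub-error to a constant fraction of $\varepsilon$; this only introduces the polylogarithmic factors hidden in $\widetilde{\OC}$. The hard part will be the non-Hermitian, complex-valued resolvent: securing an $\ell_1$-weight and $t_{\max}$ that scale as $\|(\Gamma^{(\pm)})^{-1}\|$ rather than the cruder $1/\eta$, and composing its approximation error consistently with that of the Gaussian filter acting on the same $H$ on the dilated space.
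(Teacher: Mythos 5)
Your plan follows the paper's route almost exactly: the same Gaussian ground-state filter $e^{-\tau^2H^2/2}$ with $\tau=\widetilde{\OC}(1/\Delta)$, the same single-extra-qubit Hermitian dilation of the complex-shifted resolvent (the paper takes $Y^{(\pm)}=X\otimes((\hbar\omega\pm E_0)\id\mp H)+Y\otimes\eta\id$, with the same Pauli-weight bound $|\hbar\omega\pm E_0|+\lambda_H+\eta$), the inverse-function Fourier series of Ref.~\cite{childs2017quantum}, the random compiler, and a chain of triangle inequalities separating filter, truncation, and Monte Carlo errors. Your weight and $t_{\max}$ accounting, and hence the sample and gate-depth bounds, match the paper's.

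The one genuine gap is the normalization. Writing $G^{(\pm)}_{ij}\approx\gamma^{-2}\br{\psi_0}g(H)\hat{a}_i(\Gamma^{(\pm)})^{-1}\hat{a}^{\dag}_j g(H)\kt{\psi_0}$ is not an additive-$\varepsilon$ approximation: the ground-state component of $g(H)\kt{\psi_0}$ carries the factor $e^{-\tau^2E_0^2/2}\braket{E_0}{\psi_0}$, so the correct divisor is $q^2=\|g(H)\kt{\psi_0}\|^2$, which differs from $\gamma^2$ by a factor only guaranteed to lie between $1/4$ and $1$ under the assumption $E_0\tau\leq 1$ --- a constant multiplicative error, not an $\varepsilon$-additive one; moreover $\gamma$ is not an input to the proposition, so it cannot be divided out classically. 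The paper fixes this by running the separate norm-estimation subroutine of Proposition \ref{prop:norm-sampling} for $q^2$ and propagating its statistical error via Lemma \ref{lem:appdx-q-approximation} (term \eqref{eq:gf-5} of its error decomposition); you need to add this step, which does not change the asymptotics since the subroutine costs the same as the core routine. A smaller imprecision: the Gaussian-filter truncation error is not propagated only through ``$\widetilde{\OC}(1)$ norms of the surrounding factors'' --- it is amplified by the sandwiched resolvent of norm $\|(\Gamma^{(\pm)})^{-1}\|$, so the filter's Fourier accuracy must be tightened to $\varepsilon q^2/\|(\Gamma^{(\pm)})^{-1}\|$; this is harmless because it only enters logarithmically in the gate depth.
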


We provide a proof of this result in Appendix \ref{appdx:greens}. As with our previous algorithms, the scheme we use has an advantage over other algorithms in that it does not use any additional ancillary qubits and it does not have any explicit dependence on the number of Pauli terms $L$. We remark that in the algorithm of Proposition \ref{cor:appdx-greens-functions}, the ground state energy is given exactly as an input. If this is not available, the ground state energy may be approximated via the techniques of other early fault-tolerant schemes \cite{wan2021randomized, wang2022quantum, wang2023faster, lin2022heisenberg, dong2022ground}. Moreover, if the ground state energy is approximated to sufficiently small precision, error contribution to the Green's functions can be constrained (see Appendix \ref{appdx:greens} for more details). Finally, we remark that Green's functions in the (real) time domain consist of expectation values of time-evolved creation and annihilation operators. Thus, again using the Jordan-Wigner transformation this can be directly evaluated via Theorem \ref{thm:general-sampling} and the tools of our ground state property estimation algorithm, with the same asymptotic complexities as stated in Corollary \ref{cor:ground-state}. 

%%%%%%%%%%%%%%%%%%%%%%%%%%%%%%%%%%

\section{Outlook}\label{sec:discussion}

We presented a framework for sampling properties of general matrix Fourier series and applied this to give explicit algorithms for the linear systems problem and the ground state property estimation problem. By starting with a (classical) description of the matrix in the Pauli basis, we circumvent the need for coherent data structures, which adds to the hardware burden of other quantum algorithms. Another distinct feature of our approach is that there is no explicit dimension or sparsity dependence in our complexities; instead, the runtime depends on the norm of the Pauli coefficients for the matrix, which in principle can be much smaller than number of Pauli terms or the dimension of the system. As such, our framework is particularly suited to physically-motivated matrices, where the Pauli description is readily available and of low weight.

There are immediate open questions that have yet to be explored:
\begin{itemize}
    \item Is it possible for certain special classes of matrices to efficiently obtain the Pauli decomposition starting from a description in the computational basis? If the number of Pauli terms is known and small, up to a quadratic saving in the dimension can be made compared to the na\"ive approach, though this would still present a barrier to any possible superpolynomial quantum runtime advantage compared to classical schemes directly working in the computational basis. The question of obtaining the Pauli decomposition also has implications for near term schemes for linear systems that start with similar assumptions \cite{bravo2019variational, xu2019variational, huang2019near}. More broadly, this also leads into the question of what data structures or data sources are amenable to possible quantum speedups.

    \item { We have considered worst-case performance in two senses. First, bounds on sample complexities are constructed essentially by bounding the variances of estimators. Unlike some other approaches, we can halt our algorithm at any time if the solution is good enough with the obtained samples. It remains open whether there exists an efficient verifier for which solution quality can be checked, which could enable faster time to solution heuristically. Second, we have quoted maximum non-Clifford gate complexities, whereas in reality in our algorithm with each sample we implement circuits with varying gate complexities. Thus, a more refined measure of unparallelized total runtime should instead use the \textit{expected} gate complexity, which can lead to significant savings in asymptotic runtime bounds \cite{tosta2023randomized}. We leave it as an open question as to how far this quantity can be optimized for the problems of interest discussed in this work. }

    \item  A more precise numerical analysis of finite resource costs for concrete problems (such as the calculation of Green's functions for a particular problem of interest) would be illuminating for the feasibility of our schemes in the early fault-tolerant regime. {One can also explore numerically-obtained Fourier approximations to functions. }

    %\item We have considered worst-case performance guarantees for our algorithms, though it remains to see how practical performance behaves with increasing number of samples. Further, more precise numerical analysis of finite resource costs for concrete problems (such as the calculation of Green's functions for a particular problem of interest) would be illuminating for the feasibility of our schemes in the early fault-tolerant regime.

    \item Our linear systems algorithm could possibly undergo further refinements, inspired by classical algorithms. For instance, it is general practice to use preconditioners for linear systems solvers to effectively reduce the effect of the condition number on the runtime (e.g. see Refs.~\cite{demmel1997applied, golub2013matrix, saad2003iterative}), and analogous quantum preconditioners have been studied \cite{clader2013preconditioned, shao2018quantum, tong2021fast}. We leave it as future work to investigate whether such techniques and beyond can be efficiently transported to the early fault-tolerant setting. %One could also explore iterative refinement to improve the error scaling of our algorithm \cite{wilkinson1965rounding, moler1967iterative}.

    \item We discussed in Section \ref{sec:fourier-sampling} various properties of the Fourier series that influences the complexities of our algorithms. In particular, in order to constrain the sample complexity the variance of the randomized schemes should be constrained. It remains to be seen whether there are other functions of interest beyond the inverse function and Gaussian /exponential functions which have Fourier series approximations which can lead to algorithms with favorable complexities.
    
    %One could also look beyond Fourier series and ask if other families of functions, such as matrix polynomials, can be efficiently sampled from with randomized quantum routines.
\end{itemize}

%%%%%%%%%%%%%%%%%%%%%%%%%%%%%%%%%%

\textit{Note added.}  Concurrently with our work, 
Ref.~\cite{chakraborty2022implementing} of Chakraborty proposed a randomized scheme where starting from a Fourier decomposition of a function, one constructs a sampling algorithm which queries a controlled time evolution oracle.

\section*{Acknowledgments}

This project was started when MB and SW were additionally affiliated with the AWS Center for Quantum Computing, Pasadena, USA. We thank Earl Campbell for collaboration in the early stages of this project. We thank Leandro Aolita,
Fernando Brandao, Gian Camilo, Chi-Fang (Anthony) Chen, Alexander M.~Dalzell, Allan David Cony Tosta, Steven T.~Flammia, Zo\"e Holmes, Thais de Lima Silva, Thiago O.~Maciel, and Kianna Wan for discussions (in alphabetical order). SW and MB are supported by the EPSRC Grant number EP/W032643/1. SW is additionally supported by the Samsung GRP grant.

%%%%%%%%%%%%%%%%%%%%%%%%%%%%%%%%%%

%\bibliography{ref1.bib}
%\onecolumngrid
%\begin{spacing}{0.6}
%\tableofcontents
%\end{spacing}
\printbibliography

%%%%%%%%%%%%%%%%%%%%%%%%%%%%%%%%%

\section*{Appendices}
\addcontentsline{toc}{section}{\protect\numberline{}Appendices}
\appendix

\setcounter{section}{0}
\setcounter{proposition}{0}
\makeatletter
\@addtoreset{theorem}{section}
%\@addtoreset{proposition}{section}
\@addtoreset{corollary}{section}
\makeatother

%%%%%%%%%%%%%%%%%%%%%%%%%%%%%%%%%%

In these appendices we present detailed statements for our theoretical results, as well as the proofs thereof.

In Appendix \ref{appdx:fourier-sampling} we introduce the main technical results of the paper. Specifically, in Appendix \ref{appdx:fourier-sampling-prop} we first show how to prove the result in the warm-up problem (Proposition \ref{prop:fourier-sampling}) for sampling properties of a given Fourier series, before proving our main result (Theorem \ref{thm:general-sampling}) for sampling properties of matrix functions in Appendix \ref{appdx:thm1}.  In Appendix \ref{appdx:norm-sampling} we discuss the complexities of evaluating normalized quantities that correspond to normalized quantum states. Next, in Appendix \ref{appdx:randomized-encoding} we discuss how to statistically encode classical vectors as part of our randomized scheme (Proposition \ref{prop:statistical-encoding}). We then discuss the classical power of our randomized scheme for sampling properties of low-degree polynomials in Appendix \ref{appdx:classical} (Proposition \ref{prop:classical-simp}). 

In Appendix \ref{appdx:applications} we demonstrate how our algorithms for linear systems (Corollary \ref{cor:linear-systems}), ground state property estimation (Corollary \ref{cor:ground-state}), and Gibbs state property estimation (Corrolary \ref{cor:gibbs-state}) follow from our main results. We also show how combining the linear systems algorithm with the ground state property estimation algorithm allow for a scheme to evaluate Green's functions in many body physics (Proposition \ref{cor:greens-functions}).  

%%%%%%%%%%%%%%%%%%%%%%%%%%%%%%%%%%

\section{Fourier sampling}\label{appdx:fourier-sampling}

\subsection{Sampling from a given Fourier series - proof of Proposition \ref{prop:fourier-sampling}}
\label{appdx:fourier-sampling-prop}

We start by introducing the random compiler lemma of Ref.~\cite[{Lemma 2}]{wan2021randomized} which demonstrates how to decompose time evolution operators into Pauli gates and Pauli rotations.

\begin{lemma}[\textit{Random compiler lemma} - adapted from Lemma 2 of Ref.~\cite{wan2021randomized}]\label{lem:appdx-kianna-ham-simulation}
Let ${A}=\sum_{\ell} a_{\ell} P_{\ell}$ be a Hermitian operator that is specified as a linear combination of Pauli operators with Pauli weight $\lambda :=\sum_{\ell} |a_{\ell}|$ and real coefficients $a_{\ell}\in \mathbb{R}$. For any $t \in \mathbb{R}$ and any choice of $r \in \mathbb{N}:=\{1,2, \ldots\}$, there exists a linear decomposition
\begin{align}\label{eq:appdx-kianna-ham-simulation}
    e^{i {A} t}=\sum_{m \in T} \beta^{(r)}_m U^{(r)}_m\,,
\end{align}
for some index set $T$, unitaries $\{U^{(r)}_m\}_m$, and real numbers $\{\beta_{m}^{(r)}\}_m$ such that $\sum_{m \in T}\beta_{m}^{(r)} \leq \exp(\lambda^2 t^2/r)$. For all $m \in T$, the non-Clifford cost of controlled-$U^{(r)}_m$ is that of $r$ controlled single-qubit Pauli rotations.
\end{lemma}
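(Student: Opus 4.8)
The plan is to obtain the full decomposition from a single fractional step and then take an $r$-fold product. First I would set $s := t/r$ and invoke the single-step decomposition already recorded in Eq.~\eqref{eq:string-of-gates}: for the fractional evolution one has $e^{iAs} = \gamma \sum_{\ell\in T'} p_\ell\, u_\ell P_\ell\, \exp(i\theta_\ell P'_\ell)$, where the $p_\ell$ are probabilities with $\sum_\ell p_\ell = 1$, the $u_\ell \in \{\pm1,\pm i\}$ are phases, the $P_\ell, P'_\ell$ are Pauli operators, and the mixture weight obeys $\gamma \leq \exp(\lambda^2 s^2) = \exp(\lambda^2 t^2/r^2)$. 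This quadratic-in-$s$ weight bound is the genuinely technical input, adapted from Lemma~2 of Ref.~\cite{wan2021randomized}; it arises from a qDRIFT-type argument in which the first-order drift of $e^{iAs}$ is carried exactly by the Pauli rotation $\exp(i\theta_\ell P'_\ell)$ and only the second-order non-commuting remainder contributes to $\gamma$, which is the source of the improvement over the naive Taylor-series LCU. I would then absorb each phase $u_\ell$ into its unitary factor, so that the scalar coefficients of the single step are the nonnegative numbers $\gamma p_\ell$.

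Next I would compose $r$ copies. Writing $e^{iAt} = (e^{iAs})^r$ and distributing the product over the $r$-tuples $m = (\ell_1,\dots,\ell_r)$ gives
\[
e^{iAt} = \big(e^{iAs}\big)^r = \sum_{m\in T} \beta_m^{(r)}\, U_m^{(r)}, \qquad \beta_m^{(r)} = \gamma^r \prod_{j=1}^r p_{\ell_j}, \quad U_m^{(r)} = \prod_{j=1}^r u_{\ell_j} P_{\ell_j}\, \exp(i\theta_{\ell_j} P'_{\ell_j}),
\]
with the product read in the order fixed by the composition. Each $U_m^{(r)}$ is unitary and each $\beta_m^{(r)}\geq 0$, so the total weight telescopes: $\sum_{m\in T} \beta_m^{(r)} = \gamma^r \big(\sum_\ell p_\ell\big)^r = \gamma^r \leq \exp(\lambda^2 t^2/r)$, which is exactly the claimed bound.

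Finally I would count gates block by block. Each $U_m^{(r)}$ is a string of $r$ blocks, a block being a phased Pauli $u_{\ell_j}P_{\ell_j}$ followed by a Pauli rotation $\exp(i\theta_{\ell_j}P'_{\ell_j})$. Under control, a controlled phased Pauli is Clifford, so the only non-Clifford content of a controlled block is the controlled rotation. A controlled multi-qubit Pauli rotation reduces to a single controlled single-qubit rotation: choosing an (uncontrolled) Clifford $C$ with $P'_{\ell_j} = C^\dagger Z_k C$, one has $\mathrm{ctrl}\text{-}\exp(i\theta P'_{\ell_j}) = (I\otimes C^\dagger)\,\big(\mathrm{ctrl}\text{-}\exp(i\theta Z_k)\big)\,(I\otimes C)$, whose sole non-Clifford gate is the controlled single-qubit $Z$-rotation. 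Hence each block contributes exactly one controlled single-qubit Pauli rotation, and controlled-$U_m^{(r)}$ costs $r$ of them. The only remaining work is organizational — checking that the $r$-fold product really collapses to a string of $r$ Pauli-gate/Pauli-rotation pairs and that the orderings (and the conjugating Cliffords) are tracked consistently — so I expect no analytic difficulty once the single-step weight bound is in hand.
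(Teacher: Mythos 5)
Your proposal is correct and follows essentially the same route as the paper's proof: both import the single-fractional-step decomposition of Ref.~\cite{wan2021randomized} as the technical black box, absorb the signs/phases of the coefficients into the unitaries so that the scalar weights become nonnegative, take the $r$-fold product to get the multiplicative weight bound $\gamma^r \leq \exp(\lambda^2 t^2/r)$, and reduce each controlled multi-qubit Pauli rotation to one controlled single-qubit rotation via Clifford conjugation. The only cosmetic difference is that the paper handles general $\lambda$ by the rescaling $t\to\lambda t$ of the unit-weight case and notes explicitly that the (a priori unbounded) string of Pauli gates in each block compiles classically into a single Pauli, whereas you fold both points into the stated form of the single-step decomposition.
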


\begin{proof}
We note that Ref.~\cite{wan2021randomized} provided an explicit proof for operators with normalized Pauli weight and positive coefficients (i.e. the coefficients are probabilities). We leave the core proof to the Appendix C of Ref.~\cite{wan2021randomized}, and for completeness here we explicitly demonstrate how the proof extends to the more general setting.

First we consider the scenario where $\lambda = 1$ but $a_{\ell}$ may not necessarily be positive. Ref.~\cite{wan2021randomized} expresses an $r^{th}$ of the time evolution operator as
\begin{align}\label{eq:appdx-kianna-ham-simulation-2}
    e^{iAt/r} &= \sum_{n\ \text{even}} \gamma_n(t/r) \sum_{\ell_1, \ldots, \ell_n, \ell^{\prime}} a_{\ell_1} \ldots a_{\ell_n} a_{\ell^{\prime}} \, P_{\ell_1} \ldots P_{\ell_n} \exp\left(i \theta_n P_{\ell^{\prime}}\right) \\
    &= \sum_{n\ \text{even}} \gamma_n(t/r) \sum_{\ell_1, \ldots, \ell_n, \ell^{\prime}} |a|_{\ell_1} \ldots |a|_{\ell_n} |a|_{\ell^{\prime}} \cdot \textrm{sgn}\left(a_{\ell_1} \ldots a_{\ell_n} a_{\ell^{\prime}}\right) \, P_{\ell_1} \ldots P_{\ell_n} \exp\left(i \theta_n P_{\ell^{\prime}}\right) \,,
\end{align}
where $\gamma_n(t/r)$ are coefficients satisfying $\sum_{(n\, \text{even})} \gamma_n(t/r) \leq \exp (t^2/r^2)$. In the second line we see that we can proceed as if the coefficients were positive (probabilities), by absorbing their sign into the string of unitaries to implement $P_{\ell_1} \ldots P_{\ell_n} \exp\left(i \theta_n P_{\ell^{\prime}}\right)$. By considering the product of such fractional time evolution operators $(e^{iAt/r})^r$ we can then sample strings of Pauli operators and $r$ Pauli rotations (with an absorbed phase), with total weight $\leq (\exp (t^2/r^2))^r = \exp (t^2/r)$. This consists of $r$ pairs of controlled (multi-qubit) Pauli rotations and strings of controlled Pauli gates, whose number is in theory unbounded. We remark, however, that the number of Pauli gates is with high probability zero with exponentially decaying probability for increasing gate number \cite{wan2021randomized}. Further these gates can be efficiently compiled together classically into a single controlled (multi-qubit) Pauli gate using standard Pauli product rules.

Now we can consider how to deal with non-unit Pauli weight. For $H$ with non-unit Pauli weight $\lambda$ we can simply consider the equality $e^{iAt} = e^{i\hat{A}\lambda t}$, where $\hat{A} = A/\lambda$ now has unit Pauli weight. As Lemma \ref{lem:appdx-kianna-ham-simulation} holds for all $t\in\mathbb{R}$, the steps of the original proof can follow, by considering an extended time parameter $t\rightarrow \lambda t$.
\end{proof}

\begin{algorithm}[t]
\caption{Fourier sampling of $\Tr\left[s(A)\rho s(A)^{\dag} O\right]$}\label{alg:observable}
\raggedright
\vspace{2pt}
    \textbf{Problem input:}
    \vspace{-2pt}
    \begin{itemize}
        \item $N \times N$ Hermitian matrix ${A}=\sum_{\ell} a_{\ell} P_{\ell}$, with $a_{\ell}\in \mathbb{R}\ \forall \ell$, and known Pauli weight $\lambda = \sum_{\ell} |a_{\ell}|$
        \item Fourier parameters $\{\alpha_k\}_{k \in F}$, $\{t_k\}_{k \in F}$ for series $s(A)$ according to Eq.~\eqref{eq:fourier-A}
        \item Input state $\rho$ and measurement operator $O$.
    \end{itemize}
    \vspace{2pt}
    \textbf{Parameters:} (1) Approximation parameter $\varepsilon$;\; (2) Success probability $1-\delta$;\; (3) Runtime vector $\vec{r}\in\mathbb{N}^{|F|}$. \\ \vspace{5pt}
    \textbf{Output:} Approximation to $\Tr\left[s(A)\rho s(A)^{\dag} O\right]$ with additive error at most ${\varepsilon}$ and probability at least $1-\delta$.  
    \begin{enumerate}
    \itemsep -0.8\parsep
        \color{black!100}
        \item Compute coefficients $\{\beta_{km}^{(\vec{r})} \}_{(k,m) \in F\times T}$ as in Eq.~\eqref{eq:appdx-kianna-ham-simulation} for each time parameter $\{t_k \}_{k \in F}$.
        \item $R(\vec{r}) \leftarrow \sum_{(k,m) \in F\times T} \big|\alpha_k \beta_{km}^{(r_k)} \big|$.
        \item $M(\vec{r}) \leftarrow \left\lceil 4\ln (2 / \delta)\left({R(\vec{r})}/{\varepsilon}\right)^{2}\right\rceil$. 
        \item \textbf{for} $j \in \big[M(\vec{r})\big]:$
        \vspace{-2pt}
        \begin{enumerate}[label=(\roman*)]
            \item Sample two strings of gates independently as: \vspace{-3pt}
            \begin{enumerate}[(a)]
            \setlength{\itemsep}{0pt}
                \color{black!100}
                \item  Sample indices $(k',m')$ according to probability distribution $\{|\alpha_k \beta_{km}^{(r_k)} |/R(\vec{r})\}_k$ and select corresponding string of gates $U^{(r_{k'})}_{k'm'}$ 
                \color{black}
                \item Repeat with new independent sample $(k'',m'')$, select $U^{(r_{k''})}_{k'm'}$\,.
            \end{enumerate}
            \item Run $n+1$ qubit circuit in Figure \ref{fig:circuits}(b) with input state $\dya{0}\otimes \rho$,  controlled $U^{(r_{k'})}_{k'm'}$ and anti-controlled $U^{(r_{k''})}_{k''m''}$, and measurement $X\otimes O$. Record measurement outcome $o_j$.
            \item $z_j \leftarrow R(\vec{r})^2o_j $.
        \end{enumerate}
        \item  \textbf{end for} 
        \item $\overline{z}^{(M)} \leftarrow \sum_j z_j/M(\vec{r})$. \textbf{return} $\overline{z}^{(M)}$. 
        \color{black}
    \end{enumerate}
\end{algorithm}

With the above Lemma, we can now demonstrate how to solve our warm-up problem of sampling properties of a given Fourier series. In the following we present a more precise version of Proposition \ref{prop:fourier-sampling} from the main text.

\begin{proposition}[Sampling from Fourier series -- detailed version]\label{prop:appdx-fourier-sampling}
Suppose that we have a Fourier series 
\begin{equation}\label{eq:appdx-fourier-A}
    s(A) = \sum_{k\in F} \alpha_k \exp{(it_kA)}\,,
\end{equation}
in some $N\times N$ Hermitian matrix $A$, and denote the $\ell_1$-norm of the coefficients as $\alpha:=\sum_{k\in F} |\alpha_k|$. Suppose further that $A$ has known Pauli decomposition $A=\sum_{\ell} a_{\ell} P_{\ell}$ with Pauli weight $\lambda = \sum_{\ell} |a_{\ell}|$. Then, 
\begin{enumerate}[(a).]
    \item 
    Given a procedure to prepare the pure states $\kt{\psi}$, $\kt{\phi}$ with respective unitaries $U_{\psi}$, $U_{\phi}$ with respective gate depths $d_{\psi}$, $d_{\phi}$, we have a randomized quantum algorithm that uses $\log(N)+1$ qubits to approximate $\br{\phi} s(A)\kt{\psi}$ up to additive error $\varepsilon$ with probability at least $1-\delta$, using Algorithm \ref{alg:overlap}  with
    \begin{align}       {\CC}^{\phi}_{\emph{sample}} &=  \OC\left( \log \left(\frac{2}{\delta}\right)\frac{\alpha^2}{\varepsilon^2} \right)\,,
    \end{align} 
    circuit samples, where each circuit takes the form in Figure \ref{fig:circuits}(a) and has depth 
    \begin{align}
        {\CC}^{\phi}_{\emph{gate}} &= \OC\left(\lambda^2 t_{max}^2 + d_{\psi} + d_{\phi}\right)\,,
    \end{align}
    where we denote $t_{max} = \max_{k\in F}t_k$. 
    \item Given a procedure to prepare the quantum state $\rho$in depth $d_{\rho}$ and perform measurements with measurement operator $O$, we give a randomized quantum algorithm that uses $\log(N)+1$ qubits to approximate $\Tr\left[s(A)\rho s(A)^{\dag} O\right]$ up to additive error $\varepsilon$ with probability at least $1-\delta$ using Algorithm \ref{alg:observable} with 
    \begin{align}
        {\CC}^{O}_{\emph{sample}} &=  \OC\left( \log \left(\frac{2}{\delta}\right)\frac{\|O\|^2\alpha^4}{\varepsilon^2} \right)\,, 
    \end{align}
    circuit samples, where each circuit takes the form in Figure \ref{fig:circuits}(b), and has depth 
    \begin{align}
        {\CC}^{\phi}_{\emph{gate}} &= \OC\left(\lambda^2 t_{max}^2 + d_{\psi} + d_{\phi}\right)\,.
    \end{align}
\end{enumerate}
\end{proposition}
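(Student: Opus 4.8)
The plan is to reduce both estimation tasks to averaging the outcomes of the Hadamard-test circuits in Figure~\ref{fig:circuits} over a carefully chosen importance-sampling distribution, using the random compiler lemma (Lemma~\ref{lem:appdx-kianna-ham-simulation}) to turn each Fourier mode into an explicit linear combination of implementable gate strings. Concretely, I would first apply Lemma~\ref{lem:appdx-kianna-ham-simulation} to each exponential $e^{it_k A}$ with its own truncation parameter $r_k$, yielding $e^{it_k A} = \sum_{m\in T}\beta_{km}^{(r_k)} U_{km}^{(r_k)}$ with $\sum_m |\beta_{km}^{(r_k)}| \leq \exp(\lambda^2 t_k^2/r_k)$ and each controlled-$U_{km}^{(r_k)}$ costing $r_k$ controlled Pauli rotations plus a single compiled controlled Pauli. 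Substituting into Eq.~\eqref{eq:appdx-fourier-A} gives the master decomposition $s(A) = \sum_{(k,m)} \alpha_k \beta_{km}^{(r_k)} U_{km}^{(r_k)}$, reducing the target of part (a) to the weighted sum $\br{\phi}s(A)\kt{\psi} = \sum_{(k,m)} \alpha_k \beta_{km}^{(r_k)}\, \br{\phi} U_{km}^{(r_k)}\kt{\psi}$.

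For part (a) I would then define the total weight $R := \sum_{(k,m)} |\alpha_k \beta_{km}^{(r_k)}|$ and sample an index pair $(k,m)$ with probability $|\alpha_k\beta_{km}^{(r_k)}|/R$. The Hadamard test of Figure~\ref{fig:circuits}(a), run with the sampled string $U_{km}^{(r_k)}$ together with the preparation unitaries $U_\psi,U_\phi$, returns (via $G = \id$ and $G = S^\dagger$) unbiased estimators of the real and imaginary parts of $\br{\phi} U_{km}^{(r_k)}\kt{\psi}$, each single-shot outcome lying in $[-1,1]$. Rescaling by $R$ and multiplying by the phase of $\alpha_k\beta_{km}^{(r_k)}$ yields a complex random variable $z$ with $\mathbb{E}[z] = \br{\phi}s(A)\kt{\psi}$ and $|z| \leq \sqrt{2}\,R$. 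Applying Hoeffding's inequality to the real and imaginary parts separately, with a union bound over the two failure modes, shows that $M = \OC(\log(2/\delta)\,R^2/\varepsilon^2)$ samples suffice. The final step is to choose $r_k \propto \lambda^2 t_k^2$, which makes each factor $\exp(\lambda^2 t_k^2/r_k) = \OC(1)$ and hence $R \leq \OC(\alpha)$, simultaneously fixing the gate depth at $\OC(\lambda^2 t_{max}^2 + d_\psi + d_\phi)$ and the sample count at $\OC(\log(2/\delta)\,\alpha^2/\varepsilon^2)$.

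Part (b) follows the same template but with two independent samples, since $\Tr[s(A)\rho s(A)^\dagger O]$ expands as the double sum $\sum_{(k,m),(k',m')} \alpha_k\beta_{km}^{(r_k)}\,\overline{\alpha_{k'}\beta_{k'm'}^{(r_{k'})}}\,\Tr[U_{km}^{(r_k)}\rho (U_{k'm'}^{(r_{k'})})^\dagger O]$. Drawing $(k,m)$ and $(k',m')$ independently from the same distribution and running the circuit of Figure~\ref{fig:circuits}(b) with controlled $U_{km}^{(r_k)}$ and anti-controlled $U_{k'm'}^{(r_{k'})}$ followed by measurement of $X\otimes O$ gives, after rescaling by $R^2$, an unbiased estimator of the trace. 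The single-shot outcome is now bounded in magnitude by $\|O\|$, so the estimator satisfies $|z| \leq \|O\|\,R^2$; Hoeffding then yields $M = \OC(\log(2/\delta)\,\|O\|^2 R^4/\varepsilon^2) = \OC(\log(2/\delta)\,\|O\|^2\alpha^4/\varepsilon^2)$ under the same choice of $r_k$, which explains both the quartic dependence on $\alpha$ and the quadratic dependence on $\|O\|$.

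The main obstacle is not any single inequality but the bookkeeping coupling the two halves of the claim: the truncation parameter $r_k$ must be large enough that the per-mode weight $\exp(\lambda^2 t_k^2/r_k)$ collapses to a constant, so that $R$ stays $\OC(\alpha)$ and the variance bound holds, yet $r_k$ directly sets the gate depth, so one must verify that $r_k \propto \lambda^2 t_k^2$ is the sweet spot delivering both claimed complexities at once. A secondary care point is phase bookkeeping: the unitaries emerging from Lemma~\ref{lem:appdx-kianna-ham-simulation} carry phases $u_\ell\in\{\pm 1,\pm i\}$ and the coefficients $\alpha_k\beta_{km}^{(r_k)}$ can be complex, so the estimator is genuinely complex-valued; one must track these phases through the rescaling step and invoke concentration for bounded complex random variables (treating real and imaginary parts independently) rather than the real-valued Hoeffding bound directly.
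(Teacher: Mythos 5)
Your proposal is correct and follows essentially the same route as the paper's proof: decompose each $e^{it_kA}$ via the random compiler lemma, importance-sample from the induced weighted distribution, estimate each term with the circuits of Figure~\ref{fig:circuits}, rescale by the total weight $R(\vec{r})$, and apply Hoeffding with the choice $r_k \propto \lambda^2 t_k^2$ to pin $R$ at $\OC(\alpha)$. The bookkeeping points you flag (phase absorption into the sampled unitaries and treating real and imaginary parts separately) are exactly how the paper handles them.
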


\begin{proof}[Proof of Proposition \ref{prop:appdx-fourier-sampling}]
We note that from Lemma \ref{lem:appdx-kianna-ham-simulation}, $s(\varepsilon,A)$ can be decomposed via a linear combination as 
\begin{align}
    s(A) = \sum_{k\in F} \alpha_k \exp{(it_kA)} = \sum_{(k,m) \in F\times T} \alpha_k \beta_{km}^{(r_k)}U^{(r_k)}_{km} \,,
\end{align}
where $\sum_{m \in T}\beta_{km}^{(r_k)} \leq \exp(\lambda^2t_k^2/r_k)$ and $U^{(r_k)}_{km}$ consists of $r_k$ non-Clifford operations. The above expression can further be seen as a quantity that is proportional to a probability distribution over unitaries 
\begin{equation}\label{eq:appdx-s-decomposition}
    s(A) = R(\vec{r}) \sum_{(k,m) \in F\times T} \frac{|\alpha_k \beta_{km}^{(r_k)}|}{R(\vec{r})} \widetilde{U}^{(r_k)}_{km} = {R(\vec{r})} \sum_{(k,m) \in F\times T}  p^{(r_k)}_{km} \widetilde{U}^{(r_k)}_{km}\,,
\end{equation}
where $p^{(r_k)}_{km} = \frac{|\alpha_k \beta_{km}^{(r_k)}|}{R(\vec{r})}$ are probabilities, $\widetilde{U}^{(r_k)}_{km}= U^{(r_k)}_{km}\frac{\alpha_k \beta_{km}^{(r_k)}}{|\alpha_k \beta_{km}^{(r_k)}|}$ are unitaries that absorb the phase of the coefficients, and $R(\vec{r})$ is the weight of the linear combination, which satisfies
\begin{align}
    R(\vec{r}) &= \sum_{(k,m) \in F\times T} \left|\alpha_k \beta_{km}^{(r_k)} \right|\\
    &\leq \sum_{k \in F} \left|\alpha_k\right| \exp(\lambda^2t_{k}^2/r_k)\\ %\\
%    &\leq \sum_{k \in F} \left|\alpha_k\right| \exp(t_{max}^2/r_k) \\
    &=\alpha \exp(\lambda^2t_{k}^2/r_k)\,,
\end{align}
where in the last line we have denoted $\alpha:=\sum_{k \in F} \left|\alpha_k\right|$. We note that one is free to tune $\vec{r}=(r_1,...,r_{|F|})$, and in doing so, change the gate depth of the circuit to apply each unitary, whilst also changing $R(\vec{r})$ which will feed into the sample complexity. One simple choice is to set $r_k =\lambda^2t_k^2$ for all $k$, which gives $R(\vec{r}) \leq \alpha e$, where we recall $\alpha$ is the weight of the coefficients in the Fourier series. We now consider our two settings separately.

(i) \textit{Preparation of $\Tr\left[s(\varepsilon,A)\rho s(\varepsilon,A)^{\dag} O\right]$}. Using Eq.~\eqref{eq:appdx-s-decomposition} can now express this quantity as 
\begin{align}
    \Tr\left[s(A)\rho s(A)^{\dag} O\right] = R(\vec{r})^2 \sum_{(k,m),(k',m') \in F\times T} p^{(r_k)}_{km}p^{(r_{k'})}_{k'm'}  \Tr\left[\widetilde{U}^{(r_k)}_{km}\rho \left(\widetilde{U}^{(r_{k'})}_{k'm'}\right)^{\dag} O\right]\,.
\end{align}
%where we have set $r_k = r'_{k'}$ if $k=k'$ for simplicity. 
As described in Ref.~\cite{faehrmann2022randomizing}, given controlled access to $\widetilde{U}^{(r_k)}_{km}$ and $\widetilde{U}^{(r_{k'})}_{k'm'}$, we can collect measurement shots corresponding to the quantity  $\frac{1}{2}\left( \Tr[\widetilde{U}^{(r_k)}_{km}\rho (\widetilde{U}^{(r_{k'})}_{k'm'})^{\dag} O] + \Tr[\widetilde{U}^{(r_{k'})}_{k'm'}\rho (\widetilde{U}^{(r_{k})}_{km})^{\dag} O] \right)$ with the quantum circuit in Figure \ref{fig:circuits}(b). This leads to an unbiased estimator for $\frac{1}{R(\vec{r})^2}\Tr\left[s(A)\rho s(A)^{\dag} O\right] $. Moreover, according to Born's rule the individual measurement outcomes $o_j$ take values in the interval $[-\|O\|, \|O\|]$. We now propose the following (informal) algorithm: for some choice of $\vec{r}$, (1) sample according to probability distribution $\{p^{(r_k)}_{km}p^{(r_{k'})}_{k'm'} \}_{kmk'm'}$; (2) prepare the circuit for $\Tr[\widetilde{U}^{(r_k)}_{km}\rho (\widetilde{U}^{(r_{k'})}_{k'm'})^{\dag} O]$ and collect measurement result $z_j$; (3) multiply result by $R(\vec{r})^2$ to get $z_j$; (4) repeat procedure and average over $M$ samples. We present the full formal steps in Algorithm \ref{alg:observable}. As we are effectively sampling numbers in the interval $[-\|O\|R(\vec{r})^2, \|O\|R(\vec{r})^2]$, Hoeffding's inequality specifies that $\overline{z}^{(M)}:=\frac{1}{M}\sum^{M}_{i=1}z_j$ satisfies
\begin{equation}
    \mathrm{Prob}\bigg(\, \Big|\overline{z}^{(M)} - \Tr\left[s(A)\rho s(A)^{\dag} O\right] \Big|\geq \varepsilon \bigg) \leq 2\exp\left(-\frac{M\varepsilon^2}{2\|O\|^2R(\vec{r})^4} \right)\,.
\end{equation}
This implies that, in order to guarantee $\left|\overline{z}^{(M)} - \Tr\left[s(A)\rho s(A)^{\dag} O\right] \right| \leq \varepsilon$ with probability at least $1-\delta$, it is sufficient to perform
\begin{equation}
    M \geq \log\left(\frac{2}{\delta}\right) \frac{2\|O\|^2R(\vec{r})^4}{\varepsilon^2}
\end{equation}
circuit samples.

We now impose the choice $r_k =\lambda^2t_k^2$ for all $k$ which sets $R=\OC(\alpha)$. As specified by Lemma \ref{lem:appdx-kianna-ham-simulation}, this means each $\widetilde{U}^{(r_k)}_{km}$ consists of layers of controlled Pauli gates in between $\lambda^2t_k^2$ single-qubit controlled Pauli rotations. Thus in order to obtain an $\varepsilon$-close approximation with probability at least $1-\delta$ it is sufficient to take number of samples and non-Clifford gates satisfying
\begin{align}
    {\CC}^{O}_{\emph{sample}} =  2e^4\log \left(\frac{2}{\delta}\right)\frac{\|O\|^2\alpha^4}{\varepsilon^2} \,,\quad {\CC}^O_{\emph{gate}} = 2\lambda^2 t_{max}^2+d_{\rho}\,,
\end{align}
respectively, where we have denoted $t_{max} = \max_{k\in F}t_k$.

(ii) \textit{Preparation of $\br{\phi} s(A)\kt{\psi}$}. We can express this quantity as
\begin{equation}
    \br{\phi} s(A)\kt{\psi}  = R(\vec{r}) \sum_{(k,m) \in F\times T}  p^{(r_k)}_{km} \br{0}U^{\dag}_{\phi} \widetilde{U}^{(r_k)}_{km} U_{\psi}\kt{0}\,.
\end{equation}
The real and imaginary parts of the quantity $\br{0}U^{\dag}_{\phi} \widetilde{U}^{(r_k)}_{km} U_{\psi}\kt{0}$ can be recovered separately by the circuit in Fig.~\ref{fig:circuits}(a). In both cases, individual measurement outcomes lie in the interval $[-1,1]$. As before, Hoeffding's inequality tells us that 
\begin{equation}
    M \geq \log\left(\frac{2}{\delta}\right) \frac{4R(\vec{r})^2}{\varepsilon^2}
\end{equation}
shots are required to recover both the imaginary and real parts with probability at least $1-\delta$, to additive error $\varepsilon$. We can again choose $r_k =\lambda^2t_k^2$ for all $k$, giving respective sample and gate counts: 
\begin{align}
    {\CC}^{\phi}_{\emph{sample}} =  4e^2\log \left(\frac{2}{\delta}\right)\frac{\alpha^2}{\varepsilon^2} \,,\quad {\CC}^{\phi}_{gate} = \lambda^2 t_{max}^2+ d_{\phi} + d_{\psi}\,.
\end{align}
\end{proof}

\textit{Constant factor trade offs.} In the above we set the elements of the runtime vector to the value $r_k =\lambda^2t_k^2$ for all $k$. However, we remark that in general by tuning $\vec{r}$ slightly one can make small constant factor trade offs between the sample and gate complexity. Namely, by instead setting $r_k =a\lambda^2t_k^2$ for some constant $a$, one can reduce the sample complexity by a factor $e^{2-2/a}$ for problem (a) and a factor $e^{4-4/a}$ for problem (b), at a cost of a increase in gate depth by a factor $a$.

%%%%%%%%%%%%%%%%%%%%%%%%%%%%%%%%%%%%

start

\subsection{Classical preprocessing cost}\label{appdx:preprocessing}

The focus of this work is to characterize cumulative quantum runtime. This is motivated by the fact that we expect quantum clock speeds to be significantly slower than classical ones. However, it is still useful to quantify any classical preprocessing costs in order to carry out full end-to-end comparisons with classical algorithms, which is what we do in this section.

To this end, we indicate the complexities for the classical preprocessing in Algorithm \ref{alg:overlap}, which is equivalent to those for Algorithm \ref{alg:observable}. As in the analysis above in Section \ref{appdx:fourier-sampling-prop} we assume that we set all elements of the runtime vector to be  $r_k =\lambda^2t_k^2$. The two preprocessing overheads that we need to characterize are first to evaluate the weight $R(\vec{r}) = \sum_{(k,m) \in F\times T} \big|\alpha_k \beta_{km}^{(r_k)} \big|$ of the sampling protocol (step (2) of Algorithm \ref{alg:overlap}) and second any further preprocessing in order to construct sampling access to our desired probability distribution (allowing step (4) of Algorithm \ref{alg:overlap}).

\textit{Evaluating $R(\vec{r})$.} In Ref.~\cite[Appendix C]{wan2021randomized} it is given that the weight of the coefficients in Eq.~\eqref{eq:appdx-kianna-ham-simulation} explicitly satisfy
\begin{equation}\label{eq:appdx-beta-sum}
    \sum_{m \in T}\beta_{m}^{(r_k)} = \sum_{n=0}^{\infty} \frac{1}{(2 n) !} \left(\frac{1}{\lambda t_k}\right)^{2 n} \sqrt{1+\left(\frac{(1/\lambda t_k)}{2 n+1}\right)^2}\,.
\end{equation}
If there are $K=|F|$ Fourier terms in the Fourier series in Eq.~\eqref{eq:appdx-fourier-A}, then $K$ such quantities need to be evaluated. 
%If only an approximation is possible, it is sufficient to evaluate each such quantity to $1/\alpha_k$. 
If there is no efficient way to evaluate the sum in Eq.~\eqref{eq:appdx-beta-sum}, we remark the sum has exponentially vanishing terms, and can also be truncated, which we discuss more below. We denote the truncation degree to the sum in Eq.~\eqref{eq:appdx-beta-sum} which leads to additive error $\varepsilon_k$ as $J(\varepsilon_k)$. In order to keep the total additive error $\varepsilon$, for each Fourier term $k$ it is sufficient to pick $\varepsilon_k=\OC(\varepsilon/\alpha_k)\geq \OC(\varepsilon/\alpha)$. Thus, it is sufficient to approximate $R(\vec{r})$ in time $\OC(K\cdot J(\varepsilon/\alpha))$.

\textit{Sample access preprocessing.} In order to sample from a discrete probability distribution with $S$ unique probabilities in $\OC(1)$ time, it is sufficient to spend $\OC(S)$ memory and preprocessing time. This can be achieved with (for instance) the alias method \cite{walker1974new}. In step (4) of Algorithm \ref{alg:overlap} we must sequentially sample the index $k\in F$ followed by $m \in T$. 
In order to sample the index $m$, we sample from the probability distribution proportional to the linear combination in Eq.~\eqref{eq:appdx-kianna-ham-simulation}, which consists of sampling a Taylor series order $n$ followed by sampling a Pauli term $n+1$ times.
Setting up sampling access to the Pauli coefficients of $A$ costs $\OC(L)$ classical overhead. Setting up sampling overhead to the Taylor series order has linear overhead in the truncation order. Again denoting the truncation order of Eq.~\eqref{eq:appdx-beta-sum} which leads to additive error $\varepsilon/\alpha$ as $J(\varepsilon/\alpha)$, we can write the classical preprocessing overhead to construct sample access as $\OC(K + L+ J(\varepsilon/\alpha))$. Additionally, each time the loop in step (4) repeats (corresponding to one quantum sample), there is at most $\OC(J(\varepsilon/\alpha))$ overhead corresponding to sampling from the probability distribution according to the Pauli coefficients of $A$ $\OC(n)$ times.

It is shown in Ref.~\cite[Theorem 4]{wan2021randomized} that it is sufficient to pick a truncation order $J(\varepsilon_k)=\OC(\log(\lambda^2t^2_k/\varepsilon_k))$ to approximate Eq.~\eqref{eq:appdx-beta-sum} to a given error $\varepsilon_k$. Thus, in order to preserve the complexity guarantees specified by Proposition \ref{prop:appdx-fourier-sampling} with error in solution $\varepsilon$, it is sufficient to choose truncation order $J(\varepsilon/\alpha) = \OC(\log(\alpha\lambda^2t^2_k/\varepsilon))$ for each Fourier term labelled by $k$. Putting everything together, we have preprocessing overhead $\OC(K + L+ J(\varepsilon/\alpha) + K\log(\alpha\lambda^2t^2_{max}/\varepsilon)) = \OC(L+ K\log(\alpha\lambda^2t^2_{max}/\varepsilon))$. Additionally, each quantum sample comes with a classical overhead.

We comment that in Fourier approximation approaches to quantum algorithms, a classical complexity depending on $K$ is generic, as at minimum one must store $K$ values and process them. For the algorithms we consider in the rest of this manuscript, $K$ is polynomial in all problem parameters. One thing that is beneficial with the randomized framework is that $K$ only appears in the classical preprocessing overhead. In contrast, for instance, in the LCU approach $K$ would appear logarithmically in the qubit overhead and linearly in the gate overhead. Similarly, a classical overhead of $\OC(L)$ is generic for our Pauli access model. In order to obtain a better complexity, one may hope for additional structure in the Pauli basis, for instance if the Pauli coefficients only take a small number of unique values.  This can be thought of as analogous to the sparse access model, where in order to avoid $\OC(N)$ data access overhead it is not sufficient with known techniques simply to have a sparse matrix; the values must also be efficiently computable. 

end

%%%%%%%%%%%%%%%%%%%%%%%%%%%%%%%

\subsection{Generalized Fourier sampling - proof of Theorem \ref{thm:general-sampling}}\label{appdx:thm1}

We first establish a lemma relating the closeness of operators to the closeness of expectation values constructed from them.

\begin{lemma}[Tightness of expectation values pt.~1]\label{lem:appdx-expectation-overlap}
Consider two operators $Y$ and $D$ that satisfy the closeness relation $\|Y-D\| \leq \varepsilon \leq 1$, $\|Y\|\geq 1$. Then we have 
\begin{align}
    \big|\Tr[ OY\rho Y^\dag] - \Tr[ OD \rho D^\dag]\big| \leq 3\|O\| \|Y\|\varepsilon\,.
\end{align}
\end{lemma}

\begin{proof}
We have \begin{align}
    \big|\Tr[ OY\rho Y^\dag] - \Tr[ OD \rho D^\dag]\big| &= \big|\Tr[ O(Y-D)\rho Y^\dag] + \Tr[ O(D-Y) \rho (Y^\dag-D^\dag)]+ \Tr[ OY \rho (Y^\dag-D^\dag)]\big|\\
    &\leq \|O\|\left( \|Y-D\| \|Y^\dag\| + \|Y-D\|^2 + \|Y\| \|Y-D\| \right)\\
    &\leq \|O\|\left( 2\|Y\|\varepsilon + \varepsilon^2 \right)\\
    &\leq 3\|O\| \|Y\|\varepsilon\,,
\end{align}
where in the first line we have added and subtracted $\Tr[ OD\rho Y^\dag]+ \Tr[ OY\rho Y^\dag]+ \Tr[ OY\rho D^\dag]$; the first inequality is due to the triangle inequality, H\"older's tracial matrix inequality and the sub-multiplicativity of the operator norm; and the second and third inequalities are due to our starting assumptions.
\end{proof}

We can now proceed with the proof of the main theorem (Theorem \ref{thm:general-sampling}). We suppose we have a matrix function $f(A)$ that is approximated by a Fourier series as
\begin{equation}
    \left\| f(A)-s(\varepsilon,A) \right\| \leq \varepsilon\,; \quad s(\varepsilon,A) = \sum_{k\in F_{\varepsilon,A}} \alpha_k(\varepsilon,A) \exp{\left(it_k(\varepsilon,A)A\right)}\,,
\end{equation}
for some set of Fourier parameters $\alpha_k(\varepsilon,A)$ and $t_k(\varepsilon,A)$, for any $\varepsilon \leq 1$. 
Denote the $\ell_1$-norm of the coefficients as $\alpha(\varepsilon):=\sum_k |\alpha_k(\varepsilon)|$. Suppose further that $A$ has known Pauli decomposition $A=\sum_{\ell} a_{\ell} P_{\ell}$ with Pauli weight $\lambda = \sum_{\ell} |a_{\ell}|$, and that we are given some normalization constant $q$.

\begin{theorem}[Generalized sampling from Fourier approximations  -- detailed version]\label{thm:appdx-general-sampling}
Suppose we have a matrix function $f(A)$ that is approximated by a Fourier series as
\begin{equation}
    \left\| f(A)-s(\tilde{\varepsilon},A) \right\| \leq \tilde{\varepsilon}\leq1\,; \quad s(\tilde{\varepsilon},A) = \sum_{k\in F_{\tilde{\varepsilon},A}} \alpha_k(\tilde{\varepsilon},A) \exp{\big(it_k(\tilde{\varepsilon},A)A\big)}\,,
\end{equation}
where $A$ is some $N\times N$ Hermitian matrix, and for some set of Fourier parameters $\alpha_k(\tilde{\varepsilon},A)$ and $t_k(\tilde{\varepsilon},A)$ with tunable error parameter $\tilde{\varepsilon}$. Denote the $\ell_1$-norm of the coefficients as $\alpha(\tilde{\varepsilon},A):=\sum_k |\alpha_k(\tilde{\varepsilon},A)|$. Suppose further that $A$ has known Pauli decomposition $A=\sum_{\ell} a_{\ell} P_{\ell}$ with Pauli weight $\lambda = \sum_{\ell} |a_{\ell}|$. Then, 
\begin{enumerate}[(a).]
    \item Given unitaries $U_{\phi}$, $U_{\psi}$ to prepare $\kt{\phi}$, $\kt{\psi}$ respectively in depth $d_{\phi}, d_{\psi}$,  we give a randomized quantum algorithm that uses $\log(N)+1$ qubits to approximate the quantity $\frac{1}{q}\br{\phi} f(A)\kt{\psi}$ up to additive error $\varepsilon\leq \frac{1}{q}$ with probability at least $1-\delta$, using
    \begin{align}\label{eq:thm1-overlap}
        {\CC}^{\phi}_{\emph{sample}} &=  \OC\left( \log \left(\frac{2}{\delta}\right)\frac{[\alpha(\varepsilon q/2,A)]^2}{\varepsilon^2q^2} \right)\,,\quad {\CC}^{\phi}_{\emph{gate}} = \OC\left(\lambda^2 [t_{max}(\varepsilon q/2,A)]^2 + d_{\phi}+ d_{\psi} \right)\,,
    \end{align} 
    circuit samples (each calling $U_{\phi}$ and $U_{\psi}$ one time) and gate depth respectively, where each circuit takes the form in Figure \ref{fig:circuits}(a).
    \item Given a procedure to prepare the quantum state $\rho$ in gate depth $d_{\rho}$ and perform measurements with measurement operator $O$, and given normalization constant $q$, there exists a randomized quantum algorithm that uses $\log(N)+1$ qubits to approximate $\frac{1}{q^2}\Tr\left[f(A)\rho f(A)^{\dag} O\right]$ up to additive error $\varepsilon\leq \frac{\|O\|\|f(A)\|}{q^2}$ with probability at least $1-\delta$, using 
    \begin{align}\label{eq:thm1-expectation}
        {\CC}^O_{\emph{sample}} &=  \OC\left( \log \left(\frac{2}{\delta}\right)\frac{\|O\|^2\left[\alpha\big(\frac{\varepsilon q^2}{6\|O\|\|f(A)\|},A\big)\right]^4}{\varepsilon^2 q^4} \right)\,,\quad {\CC}^O_{\emph{gate}} = \OC\left(\lambda^2 \Big[t_{max}\Big(\frac{\varepsilon q^2}{6\|O\|\|f(A)\|},A\Big)\Big]^2 + d_{\rho}\right)\,,
    \end{align}
    circuit samples and non-Clifford gates respectively, where we denote $t_{max}(\varepsilon,A) = \max_{k\in F_{\varepsilon,A}}t_k(\varepsilon,A)$ and each circuit takes the form in Figure \ref{fig:circuits}(b).
\end{enumerate}
\end{theorem}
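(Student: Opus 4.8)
The plan is to treat the Fourier series $s(\tilde\varepsilon,A)$ as a proxy for $f(A)$ and to split the total error by the triangle inequality into an \emph{approximation error}, incurred by replacing $f(A)$ with $s(\tilde\varepsilon,A)$, and a \emph{sampling error}, incurred by the Monte Carlo estimation of the proxy quantity via Proposition~\ref{prop:appdx-fourier-sampling}. The free parameter $\tilde\varepsilon$ is then tuned so that the approximation error is at most $\varepsilon/2$, and Proposition~\ref{prop:appdx-fourier-sampling} is invoked to drive the sampling error below $\varepsilon/2$ as well; the two halves combine to give the claimed bound. Since $s(\tilde\varepsilon,A)$, and hence its $\ell_1$-weight $\alpha(\tilde\varepsilon,A)$ and maximal time $t_{\max}(\tilde\varepsilon,A)$, depend on $\tilde\varepsilon$, the final complexities are precisely those of Proposition~\ref{prop:appdx-fourier-sampling} evaluated at the chosen $\tilde\varepsilon$.

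For part (a), I would first bound the approximation error by $|\br{\phi}(f(A)-s(\tilde\varepsilon,A))\kt{\psi}| \leq \|f(A)-s(\tilde\varepsilon,A)\| \leq \tilde\varepsilon$, which holds because $\kt{\phi},\kt{\psi}$ are unit vectors. Dividing by $q$ and choosing $\tilde\varepsilon = \varepsilon q/2$ makes $\frac{1}{q}|\br{\phi}(f(A)-s(\tilde\varepsilon,A))\kt{\psi}| \leq \varepsilon/2$, while the constraint $\varepsilon \leq 1/q$ guarantees $\tilde\varepsilon \leq 1/2 \leq 1$, so that the Fourier approximation hypothesis is legitimately in force. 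I then apply Proposition~\ref{prop:appdx-fourier-sampling}(a) to estimate $\br{\phi}s(\tilde\varepsilon,A)\kt{\psi}$ to additive error $\varepsilon q/2$ (equivalently, the normalized quantity to error $\varepsilon/2$); substituting $\alpha=\alpha(\varepsilon q/2,A)$ and target error $\varepsilon q/2$ into the sample count $\OC(\log(2/\delta)\,\alpha^2/(\varepsilon q/2)^2)$ reproduces Eq.~\eqref{eq:thm1-overlap}, and the gate depth follows directly.

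Part (b) follows the same template but uses Lemma~\ref{lem:appdx-expectation-overlap} to control the approximation error of the quadratic quantity. With $Y=f(A)$ and $D=s(\tilde\varepsilon,A)$, that lemma gives $|\Tr[f(A)\rho f(A)^{\dagger} O]-\Tr[s(\tilde\varepsilon,A)\rho s(\tilde\varepsilon,A)^{\dagger} O]| \leq 3\|O\|\,\|f(A)\|\,\tilde\varepsilon$, provided $\tilde\varepsilon\leq 1$ and the norm hypothesis of the lemma holds. Dividing by $q^2$ and setting $\tilde\varepsilon = \varepsilon q^2/(6\|O\|\|f(A)\|)$ makes the normalized approximation error at most $\varepsilon/2$, and the stated constraint $\varepsilon \leq \|O\|\|f(A)\|/q^2$ is exactly what forces $\tilde\varepsilon \leq 1/6 \leq 1$. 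Applying Proposition~\ref{prop:appdx-fourier-sampling}(b) to estimate $\Tr[s(\tilde\varepsilon,A)\rho s(\tilde\varepsilon,A)^{\dagger} O]$ to error $\varepsilon q^2/2$ and feeding in $\alpha = \alpha(\varepsilon q^2/(6\|O\|\|f(A)\|),A)$ then yields Eq.~\eqref{eq:thm1-expectation}.

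The conceptual content sits entirely in Proposition~\ref{prop:appdx-fourier-sampling} and the random compiling lemma, so the hard part here is not any new estimate but the bookkeeping: correctly propagating the normalization $q$ and the factor $3\|O\|\|f(A)\|$ through the error budget so that the arguments of $\alpha(\cdot,A)$ and $t_{\max}(\cdot,A)$ come out as stated. The one genuine subtlety to watch is the hypothesis $\|f(A)\|\geq1$ of Lemma~\ref{lem:appdx-expectation-overlap} in part (b): I would verify that the working regime $\varepsilon \leq \|O\|\|f(A)\|/q^2$, together with $\tilde\varepsilon\leq 1$, is consistent with (or can be reduced to) the lemma's hypotheses, since otherwise the constant multiplying the approximation error would need to be re-derived in the case $\|f(A)\|<1$.
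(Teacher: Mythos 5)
Your proposal is correct and follows essentially the same route as the paper's own proof: a triangle-inequality split into Fourier-approximation error and Monte Carlo sampling error, with $\tilde\varepsilon=\varepsilon q/2$ for part (a) and $\tilde\varepsilon=\varepsilon q^2/(6\|O\|\|f(A)\|)$ for part (b) via Lemma~\ref{lem:appdx-expectation-overlap}, then invoking Proposition~\ref{prop:appdx-fourier-sampling} for the statistical half. The subtlety you flag about the hypothesis $\|f(A)\|\geq 1$ in Lemma~\ref{lem:appdx-expectation-overlap} is a fair observation — the paper invokes the lemma without checking it either — but it only affects the constant in the approximation-error bound and not the stated asymptotics.
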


\begin{proof}[Proof of Theorem \ref{thm:general-sampling} (Generalized sampling from Fourier approximations)] We consider the two cases separately.

(i) \textit{Preparation of $\frac{1}{q^2}\Tr\left[f(A)\rho f(A)^{\dag} O\right]$.} We will use Proposition \ref{prop:appdx-fourier-sampling} to statistically approximate the quantity $\frac{1}{q^2}\Tr\left[s(\tilde{\varepsilon},A)\rho s(\tilde{\varepsilon},A)^{\dag} O\right]$. Using Lemma \ref{lem:appdx-expectation-overlap} we will then show that, for appropriately chosen $\tilde{\varepsilon}$, this leads to an $\varepsilon$-close approximation of $\frac{1}{q^2}\Tr\left[f(A)\rho f(A)^{\dag} O\right]$.

Recall we denote $\overline{z}^{(M)}$ as the statistical approximation to $\Tr\left[s(\tilde{\varepsilon},A)\rho s(\tilde{\varepsilon},A)^{\dag} O\right]$ using Algorithm \ref{alg:observable} with $M$ shots. We would like to find parameters such that 
\begin{equation}\label{eq:thm1-error}
    \left| \frac{1}{q^2}\overline{z}^{(M)} - \frac{1}{q^2}\Tr\left[f(A)\rho f(A)^{\dag} O\right] \right| \leq \varepsilon\,.
\end{equation}
We first note that due to the triangle inequality, we can write
\begin{align}\label{eq:approx-error-decomposition}
    \left| \frac{1}{q^2}\overline{z}^{(M)} - \frac{1}{q^2}\Tr\left[f(A)\rho f(A)^{\dag} O\right] \right| &\leq  \left| \frac{1}{q^2}\overline{z}^{(M)} - \frac{1}{q^2}\Tr\left[s(\tilde{\varepsilon},A)\rho s(\tilde{\varepsilon},A)^{\dag} O\right] \right| + \nonumber \\ 
    &\quad + \left| \frac{1}{q^2}\Tr\left[s(\tilde{\varepsilon},A)\rho s(\tilde{\varepsilon},A)^{\dag} O\right] - \frac{1}{q^2}\Tr\left[f(A)\rho f(A)^{\dag} O\right] \right|\,.
\end{align}
This separates the approximation error into two components; the statistical contribution of sampling from the Fourier series, and the exact contribution from the quality of the Fourier series approximation. It is sufficient to satisfy Eq.~\eqref{eq:thm1-error} by requiring both terms on the right hand side of Eq.~\eqref{eq:approx-error-decomposition} to each by bounded by $\frac{1}{2}\varepsilon$.

From Lemma \ref{lem:appdx-expectation-overlap}, we can see that the second term in Eq.~\eqref{eq:approx-error-decomposition} is bounded by $\frac{1}{2}\varepsilon$ by finding Fourier parameters such that $\left\| f(A)-s(\tilde{\varepsilon},A) \right\| \leq \frac{1}{6} \varepsilon \frac{q^2}{\|O\|\|f(A)\|}\leq 1$. Thus, it is sufficient to pick $\tilde{\varepsilon} = \frac{1}{6} \varepsilon \frac{q^2}{\|O\|\|f(A)\|}$. From Proposition \ref{prop:fourier-sampling}, the first term in Eq.~\eqref{eq:approx-error-decomposition} is bounded by $\frac{1}{2}\varepsilon$ by choosing $M\geq 8e^4\log \left(\frac{2}{\delta}\right)\frac{\|O\|^2\alpha(\varepsilon q^2/6\|O\|\|f(A)\|)^4}{\varepsilon^2 q^4}$, $\quad {\CC}_{\emph{gate}} = 2\lambda^2 t_{max}^2(\varepsilon q^2/6\|O\|\|f(A)\|) + d_{\rho}$.

(ii) \textit{Preparation of $\frac{1}{q}\br{\phi}f(A)\kt{\psi}$}. Denote as $T'^{(M')}$ the statistical approximation to $\br{\phi}f(A)\kt{\psi}$ using Algorithm \ref{alg:observable} with $M'$ shots. Again, we can split up the approximation error into respective statistical and exact contributions as
\begin{equation}\label{eq:approx-error-decomposition2}
    \left| \frac{1}{q}T'^{(M')} -  \frac{1}{q}\br{\phi}f(A)\kt{\psi} \right| \leq \left| \frac{1}{q}T'^{(M')} -  \frac{1}{q}\br{\phi}s(\tilde{\varepsilon},A)\kt{\psi} \right| + \left| \frac{1}{q}\br{\phi}s(\tilde{\varepsilon},A)\kt{\psi} -  \frac{1}{q}\br{\phi}f(A)\kt{\psi} \right|
\end{equation}

The second term in Eq.~\eqref{eq:approx-error-decomposition2} (exact contribution) is simply bounded as 
\begin{equation}
    \left| \frac{1}{q}\br{\phi}s(\tilde{\varepsilon},A)\kt{\psi} -  \frac{1}{q}\br{\phi}f(A)\kt{\psi} \right| \leq \frac{1}{q} \| s(\tilde{\varepsilon},A) - f(A) \|\,.
\end{equation}
Thus, in order to bound the exact contribution by $\frac{1}{2}\varepsilon$, using Lemma \ref{lem:appdx-expectation-overlap}, is sufficient to find Fourier parameters such that $\| s(\tilde{\varepsilon},A) - f(A) \|\leq \varepsilon q/2 \leq 1$, and we can assign $\tilde{\varepsilon} = \varepsilon q/2$. The first term in Eq.~\eqref{eq:approx-error-decomposition2} (statistical contribution) is then bounded by $\frac{1}{2}\varepsilon$ by choosing $M'\geq 8e^4\log \left(\frac{2}{\delta}\right)\frac{\|O\|^2\alpha(\varepsilon q/2)^4}{\varepsilon^2 q^2}$, $\quad {\CC}_{\emph{gate}} = 2\lambda^2 t_{max}^2(\varepsilon q/2) + d_{\phi} + d_{\psi}$. 
\end{proof}

We remark in the above proof we have split the error into statistical and exact contributions evenly. In many cases, $\alpha(\varepsilon,A)$ can have a subpolynomial dependence on $1/\varepsilon$, such as for all three of our example algorithms. In this case by distributing the error unevenly one can make a constant factor saving to the sample complexity.%$\lesssim 4$

%%%%%%%%%%%%%%%%%%%%%%%%%%%%%%%%%

\subsection{Sampling normalization constant - proof of Proposition \ref{prop:norm-sampling}}\label{appdx:norm-sampling}

In this section we investigate the complexity of sampling $q=\|f(A)\kt{\phi}\|$ for use as a normalization constant, where $\kt{\phi}$ is some given preparable input state. In the previous sections we found an estimator $\overline{z}^{(M)}:=\frac{1}{M}\sum^M_{j=1}z_j$ constructed from samples $\{z_j\}_{j=1}^M$ which is $\varepsilon$-close to a desired property of a matrix function for sufficiently large $M$. We similarly denote our statistical approximation to $q^2$ as $Q^{(M_q)}$. In the following lemma we establish how well $Q^{(M_q)}$ must approximate $q^2$ such that normalized quantities are additive $\varepsilon$-close.

\begin{lemma}[Normalization precision]\label{lem:appdx-q-approximation}
Suppose we have some $\overline{z}^{(M)} \in \mathbb{R}$, pure state $\kt{\phi}$, and operators $O, f(A)$ that satisfy
\begin{equation}
    \left|\frac{1}{q^2}\overline{z}^{(M)} - \frac{1}{q^2}\br{\phi}f(A)Of(A)^{\dag}\kt{\phi}\right| \leq \varepsilon \leq 1\,,
\end{equation}
with $\|O\|\geq 1$. Further, denote $q = \|f(A)\kt{\phi}\|_2$. Then, if $Q^{(M_q)}$ is the statistical approximation of $q^2$, one can achieve 
\begin{equation}
    \left|\frac{1}{Q^{(M_q)}}\overline{z}^{(M)} - \frac{1}{q^2}\br{\phi}f(A)Of(A)^{\dag}\kt{\phi}\right| \leq 3\varepsilon\,,
\end{equation}
if $Q^{(M_q)}$ satisfies $|Q^{(M_q)}-q^2|\leq \frac{1}{2}\frac{\varepsilon q^2 }{\|O\|}$.
\end{lemma}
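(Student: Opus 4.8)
The plan is to isolate the additional error introduced by replacing the exact normalization $q^2$ with its statistical estimate $Q^{(M_q)}$, and to control this separately from the already-bounded statistical error on the numerator. Writing $E := \br{\phi}f(A)Of(A)^{\dag}\kt{\phi}$ for the target quantity, the triangle inequality gives
\begin{align}
\left|\frac{\overline{z}^{(M)}}{Q^{(M_q)}} - \frac{E}{q^2}\right| \leq \left|\frac{\overline{z}^{(M)}}{Q^{(M_q)}} - \frac{\overline{z}^{(M)}}{q^2}\right| + \left|\frac{\overline{z}^{(M)}}{q^2} - \frac{E}{q^2}\right|\,,
\end{align}
where the second term is at most $\varepsilon$ by hypothesis. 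The substance of the argument is to bound the first term by $2\varepsilon$.

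First I would control the magnitude of $\overline{z}^{(M)}$. The hypothesis gives $|\overline{z}^{(M)} - E| \leq \varepsilon q^2$, so $|\overline{z}^{(M)}| \leq |E| + \varepsilon q^2$. Since $A$ is Hermitian, $f(A)$ is normal, whence $\|f(A)^{\dag}\kt{\phi}\| = \|f(A)\kt{\phi}\| = q$; writing $\kt{w} := f(A)^{\dag}\kt{\phi}$ this yields $|E| = |\br{w}O\kt{w}| \leq \|O\|\,\|w\|^2 = \|O\| q^2$. Combining with $\varepsilon \leq 1 \leq \|O\|$ gives $|\overline{z}^{(M)}| \leq (\|O\| + \varepsilon) q^2 \leq 2\|O\| q^2$. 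Next I would lower bound the estimated denominator: from $|Q^{(M_q)} - q^2| \leq \tfrac{1}{2}\varepsilon q^2/\|O\|$ and $\varepsilon/\|O\| \leq 1$ we obtain $Q^{(M_q)} \geq q^2(1 - \tfrac{1}{2}) = \tfrac{1}{2}q^2$.

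Rewriting the first term as $|\overline{z}^{(M)}|\cdot|q^2 - Q^{(M_q)}|/(Q^{(M_q)} q^2)$ and inserting the three bounds,
\begin{align}
\left|\frac{\overline{z}^{(M)}}{Q^{(M_q)}} - \frac{\overline{z}^{(M)}}{q^2}\right| \leq 2\|O\| q^2 \cdot \frac{\tfrac{1}{2}\varepsilon q^2/\|O\|}{\tfrac{1}{2}q^2 \cdot q^2} = 2\varepsilon\,,
\end{align}
and adding the two contributions gives the claimed bound of $3\varepsilon$.

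The main obstacle — the one genuinely substantive point rather than routine algebra — is establishing $|E| \leq \|O\| q^2$ with $q = \|f(A)\kt{\phi}\|$ rather than $\|f(A)^{\dag}\kt{\phi}\|$; this relies on the normality of $f(A)$, which holds precisely because $A$ is Hermitian, so that $f(A) = \sum_l f(E_l)\dya{E_l}$ is diagonal in the eigenbasis of $A$ and $|f(A)|$ and $|f(A)^{\dag}|$ act identically on $\kt{\phi}$. The remaining delicacy is ensuring the estimated denominator stays bounded away from zero, which is exactly what the hypothesis $|Q^{(M_q)} - q^2| \leq \tfrac{1}{2}\varepsilon q^2/\|O\|$ together with $\varepsilon \leq \|O\|$ guarantees.
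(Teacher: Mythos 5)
Your proof is correct and follows essentially the same route as the paper's: the same triangle-inequality split, the same bound $|\overline{z}^{(M)}|\leq \|O\|q^2+\varepsilon q^2$ via H\"older-type estimates, and the same control of $|1/Q^{(M_q)}-1/q^2|$ from the hypothesis on $|Q^{(M_q)}-q^2|$. Your lower bound $Q^{(M_q)}\geq \tfrac{1}{2}q^2$ is a cleaner way to handle the denominator than the paper's polynomial expansion of $(1-\nu/q^2)^{-1}$, and your explicit appeal to normality of $f(A)$ to identify $\|f(A)^{\dag}\kt{\phi}\|$ with $q$ makes precise a step the paper leaves implicit, but these are presentational differences only.
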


\begin{proof}
First, we note that due to the triangle inequality and sub-multiplicativity we have
\begin{align}
    \left|\frac{1}{Q^{(M_q)}}\overline{z}^{(M)} - \frac{1}{q^2}\br{\phi}f(A)Of(A)^{\dag}\kt{\phi}\right| &\leq \left|\frac{1}{Q^{(M_q)}}\overline{z}^{(M)} - \frac{1}{q^2}\overline{z}^{(M)}\right| + \left|\frac{1}{q^2}\overline{z}^{(M)} - \frac{1}{q^2}\br{\phi}f(A)Of(A)^{\dag}\kt{\phi}\right|\\
    &\leq \left|\frac{1}{Q^{(M_q)}} - \frac{1}{q^2}\right| \left| \overline{z}^{(M)} \right| + \varepsilon \\
    &\leq \left|\frac{1}{Q^{(M_q)}} - \frac{1}{q^2}\right| (q^2\|O\| + q^2\varepsilon) + \varepsilon\,, \label{eq:appdx-norm-proof1}%\\
    %&\leq \left|\frac{1}{Q^{(M_q)}} - \frac{1}{q^2}\right|(2q^2) + \varepsilon \,,
\end{align}
assuming $\varepsilon\leq 1$, where in the penultimate inequality we have used the fact that $\left| \overline{z}^{(M)} \right| \leq \br{\phi}f(A)Of(A)^{\dag}\kt{\phi} + q^2\varepsilon$ where $\br{\phi}f(A)Of(A)^{\dag}\kt{\phi} \leq \|O\|_{\infty}\|f(A)\dya{b}f(A)^{\dag}\|_1 \leq \|O\|_{\infty}\|f(A)\kt{\phi}\|^2_2$. Denoting $|Q^{(M_q)}-q^2|\leq \nu$, we can observe 
\begin{align}
    \left|\frac{1}{Q^{(M_q)}} - \frac{1}{q^2}\right| &\leq \frac{\nu}{Q^{(M_q)}q^2} \\
    &\leq \frac{\nu}{q^2(q^2-\nu)} \\
    &= \frac{\nu}{q^4(1-\frac{\nu}{q^2})}\\
    &\leq \frac{\nu}{q^4}\left(8\left( \frac{\nu}{q^2}\right)^2 - 4 \frac{\nu}{q^2} + 2\right)\,.
\end{align}
Then one can check that $\nu= \frac{1}{2}\frac{\varepsilon q^2}{\|O\|}$ gives $\left|\frac{1}{Q^{(M_q)}} - \frac{1}{q^2}\right| \leq \frac{\varepsilon^2}{q^2\|O\|^2}(\frac{\varepsilon}{\|O\|}-1) + \frac{\varepsilon}{q^2\|O\|} \leq \frac{\varepsilon}{q^2\|O\|}$ and so, returning to Eq.~\eqref{eq:appdx-norm-proof1}, under this condition we have $\left|\frac{1}{Q^{(M_q)}}\overline{z}^{(M)} - \frac{1}{q^2}\br{\phi}f(A)Of(A)^{\dag}\kt{\phi}\right| \leq 3\varepsilon$ for $\varepsilon\leq 1$, $\|O\|\geq 1$. 
\end{proof}

By similar reasoning, we also have the following lemma for quantities of the form of a state overlap, rather than expectation values of general measurement observables.

\begin{lemma}[Normalization precision pt.~2]
\label{lem:appdx-q-approximation2}
Suppose we have some $\overline{z}^{(M)} \in \mathbb{R}$, pure states $\kt{\phi}$ $\kt{\psi}$, and an operator $f(A)$ that satisfy
\begin{equation}
    \left|\frac{1}{q}\overline{z}^{(M)} - \frac{1}{q}\br{\phi}f(A)\kt{\psi}\right| \leq \varepsilon \leq 1\,,
\end{equation}
where we denote $q = \|f(A)\kt{\phi}\|_2$. Then, if $Q^{(M_q)}$ is the statistical approximation of $q^2$, one can achieve 
\begin{equation}
    \left|\frac{1}{\sqrt{Q^{(M_q)}}}\overline{z}^{(M)} - \frac{1}{q}\br{\phi}f(A)\kt{\psi}\right| \leq 3\varepsilon\,,
\end{equation}
if $Q^{(M_q)}$ satisfies $|Q^{(M_q)}-q^2|\leq \frac{1}{2}\varepsilon q^2$.    
\end{lemma}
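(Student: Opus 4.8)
The plan is to mirror the proof of Lemma \ref{lem:appdx-q-approximation}, the only genuine difference being that the normalization now enters through $1/\sqrt{Q^{(M_q)}}$ rather than through $1/Q^{(M_q)}$, so the perturbation estimate for the reciprocal must be replaced by one for the reciprocal square root. First I would apply the triangle inequality and sub-multiplicativity to split the error into a ``normalization'' contribution and the given statistical contribution:
\begin{align}
    \left|\frac{1}{\sqrt{Q^{(M_q)}}}\overline{z}^{(M)} - \frac{1}{q}\br{\phi}f(A)\kt{\psi}\right| &\leq \left|\frac{1}{\sqrt{Q^{(M_q)}}} - \frac{1}{q}\right|\left|\overline{z}^{(M)}\right| + \left|\frac{1}{q}\overline{z}^{(M)} - \frac{1}{q}\br{\phi}f(A)\kt{\psi}\right|\\
    &\leq \left|\frac{1}{\sqrt{Q^{(M_q)}}} - \frac{1}{q}\right|\left|\overline{z}^{(M)}\right| + \varepsilon\,,
\end{align}
where the second term is bounded directly by the hypothesis.

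Next I would bound $|\overline{z}^{(M)}|$. Since $A$ is Hermitian, $f(A)$ is normal, whence $\|f(A)^{\dag}\kt{\phi}\| = \|f(A)\kt{\phi}\| = q$, and Cauchy--Schwarz gives $|\br{\phi}f(A)\kt{\psi}| \leq \|f(A)^{\dag}\kt{\phi}\|\,\|\kt{\psi}\| = q$. Combining this with the hypothesis $|\overline{z}^{(M)} - \br{\phi}f(A)\kt{\psi}| \leq \varepsilon q$ yields $|\overline{z}^{(M)}| \leq (1+\varepsilon)q \leq 2q$ for $\varepsilon \leq 1$.

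The remaining and main step is to control $|1/\sqrt{Q^{(M_q)}} - 1/q|$, which is where this lemma departs from Lemma \ref{lem:appdx-q-approximation}. Writing $Q^{(M_q)} = q^2(1+\xi)$, the hypothesis $|Q^{(M_q)}-q^2| \leq \tfrac{1}{2}\varepsilon q^2$ means $|\xi| \leq \varepsilon/2 \leq 1/2$, and rationalizing gives
\begin{equation}
    \left|\frac{1}{\sqrt{Q^{(M_q)}}} - \frac{1}{q}\right| = \frac{1}{q}\left|\frac{1}{\sqrt{1+\xi}} - 1\right| = \frac{1}{q}\,\frac{|\xi|}{\sqrt{1+\xi}\,\bigl(1+\sqrt{1+\xi}\bigr)}\,.
\end{equation}
For $|\xi|\leq 1/2$ the denominator is at least $\sqrt{1/2}\,(1+\sqrt{1/2}) > 1$, so this factor is at most $|\xi|/q \leq \varepsilon/q$. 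The only subtlety is keeping this constant honest so that the final bound closes at exactly $3\varepsilon$; this reciprocal--square--root estimate is the sole nontrivial calculation. Substituting $|\overline{z}^{(M)}| \leq 2q$ and $|1/\sqrt{Q^{(M_q)}} - 1/q| \leq \varepsilon/q$ into the split above gives $\tfrac{\varepsilon}{q}\cdot 2q + \varepsilon = 3\varepsilon$, as claimed.
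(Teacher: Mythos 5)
Your proof is correct and follows essentially the same route as the paper's: the same triangle-inequality split, the same bound $|\overline{z}^{(M)}|\leq q(1+\varepsilon)$ via $|\br{\phi}f(A)\kt{\psi}|\leq q$, and the same target estimate $\bigl|1/\sqrt{Q^{(M_q)}}-1/q\bigr|\leq \varepsilon/q$. The only (cosmetic) difference is how that last estimate is obtained — you rationalize $1/\sqrt{1+\xi}-1$ directly, while the paper first bounds $|\sqrt{Q^{(M_q)}}-q|\leq \nu/q$ and then uses a quadratic majorant of $1/(1-x)$; both are elementary and give the same constant.
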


\begin{proof}
Similar to the proof of Lemma \ref{lem:appdx-q-approximation} we have 
\begin{align}
    \left|\frac{1}{\sqrt{Q^{(M_q)}}}\overline{z}^{(M)} - \frac{1}{q}\br{\phi}f(A)\kt{\psi}\right|  \leq \left|\frac{1}{\sqrt{Q^{(M_q)}}} - \frac{1}{q} \right| q (1+\varepsilon) + \varepsilon\,.
\end{align}
Moreover, if $|Q^{(M_q)}-q^2|\leq \nu$ then
\begin{align}
    \left|\sqrt{Q^{(M_q)}}-q\right|\leq \frac{\nu}{\sqrt{Q^{(M_q)}}+q} \leq \frac{\nu}{q}\,.
\end{align}
Thus,
\begin{align}
    \left|\frac{1}{\sqrt{Q^{(M_q)}}} - \frac{1}{q}\right| 
    &\leq \frac{\left|\sqrt{Q^{(M_q)}}-q\right|}{\sqrt{Q^{(M_q)}}q} \\
    &\leq \frac{\nu}{\sqrt{Q^{(M_q)}}q^2} \\
    &\leq \frac{\nu}{q^2(q-\nu/q)} \\
    &\leq \frac{\nu}{q^3}\left(8\left( \frac{\nu}{q^2}\right)^2 - 4 \frac{\nu}{q^2} + 2\right)\,,
\end{align}
and the choice $\nu= \frac{1}{2}{\varepsilon q^2}$ gives
\begin{align}
     \left|\frac{1}{\sqrt{Q^{(M_q)}}} - \frac{1}{q}\right| \leq \frac{\varepsilon}{q}\,,
\end{align}
which leads to the desired result, under the assumption $\varepsilon\leq 1$.
\end{proof}

We will also need the following lemma, which gives a tighter result than Lemma \ref{lem:appdx-expectation-overlap} for pure states.

\begin{lemma}[Tightness of expectation values pt.~2]\label{lem:appdx-expectation-overlap2}
Given two operators $C$ and $D$, where $C$ is Hermitian, and given a pure state $\kt{\phi}$, we have
\begin{equation}
    \left|\br{\phi}D^\dag O D\kt{\phi}- \br{\phi}COC\kt{\phi}\right| \leq 3 \left\|C\kt{\phi}\right\|_2 \|C-D\|_{\infty} \|O\|_{\infty}\,,
\end{equation}
where we have assumed that $\|C-D\|_{\infty}\leq \left\|C\kt{\phi}\right\|_2$.
\end{lemma}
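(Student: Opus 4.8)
The plan is to mirror the add-and-subtract strategy of Lemma~\ref{lem:appdx-expectation-overlap}, but to exploit both the Hermiticity of $C$ and the fact that the perturbation is measured against the \emph{vector} $C\kt{\phi}$ rather than against the operator norm of $C$. First I would set $E := D - C$, so that $\|E\|_{\infty} = \|C-D\|_{\infty}$, and substitute $D = C + E$ into $D^\dag O D$. Expanding and using $C^\dag = C$ gives the exact identity
\begin{equation}
\br{\phi}D^\dag O D\kt{\phi} - \br{\phi}COC\kt{\phi} = \br{\phi}COE\kt{\phi} + \br{\phi}E^\dag OC\kt{\phi} + \br{\phi}E^\dag OE\kt{\phi}\,,
\end{equation}
so that the task reduces to bounding these three cross terms individually.

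For the two linear terms I would rewrite the expectation values as inner products of explicit vectors: since $C$ is Hermitian, $\br{\phi}C = (C\kt{\phi})^\dag$, whence $\br{\phi}COE\kt{\phi} = (C\kt{\phi})^\dag O (E\kt{\phi})$ and $\br{\phi}E^\dag OC\kt{\phi} = (E\kt{\phi})^\dag O (C\kt{\phi})$. Applying the Cauchy--Schwarz inequality together with submultiplicativity of the operator norm, and using $\|E\kt{\phi}\|_2 \leq \|E\|_{\infty}$ (valid since $\kt{\phi}$ is normalized), each linear term is bounded by $\|C\kt{\phi}\|_2\,\|O\|_{\infty}\,\|E\|_{\infty}$.

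The quadratic term is the only genuinely non-automatic step, and is where the stated hypothesis enters. Here $\br{\phi}E^\dag OE\kt{\phi} = (E\kt{\phi})^\dag O (E\kt{\phi})$, which Cauchy--Schwarz bounds by $\|O\|_{\infty}\|E\kt{\phi}\|_2^2 \leq \|O\|_{\infty}\|E\|_{\infty}^2$. On its own this contributes a term quadratic in $\|E\|_{\infty}$, which is of the wrong form; to absorb it into the desired linear-in-$\|E\|_{\infty}$ bound I would invoke the assumption $\|C-D\|_{\infty} = \|E\|_{\infty} \leq \|C\kt{\phi}\|_2$ to trade one factor of $\|E\|_{\infty}$ for $\|C\kt{\phi}\|_2$, giving $\|O\|_{\infty}\|E\|_{\infty}\|C\kt{\phi}\|_2$. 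Summing the three contributions via the triangle inequality then produces the factor $3$ and the claimed bound $3\|C\kt{\phi}\|_2\|C-D\|_{\infty}\|O\|_{\infty}$. The only subtlety to watch is ensuring that the operator-norm-to-vector-norm estimate $\|E\kt{\phi}\|_2 \leq \|E\|_{\infty}$ is applied consistently in all three terms, so that the state-dependent quantity $\|C\kt{\phi}\|_2$ enters exactly once per term and the bound remains tighter than the cruder operator-norm estimate of Lemma~\ref{lem:appdx-expectation-overlap}.
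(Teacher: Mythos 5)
Your proof is correct and follows essentially the same route as the paper's: the expansion of $D^\dag O D$ with $D=C+E$ yields exactly the three cross terms the paper obtains by adding and subtracting, and the hypothesis $\|C-D\|_{\infty}\leq\|C\kt{\phi}\|_2$ is used in the same way to absorb the quadratic term. The only cosmetic difference is that you bound each term by vector-level Cauchy--Schwarz on $(C\kt{\phi})^\dag O(E\kt{\phi})$ etc., whereas the paper routes the same estimate through the tracial H\"older inequality and Schatten-norm Cauchy--Schwarz applied to the rank-one operators $\dya{\phi}C$; these are equivalent.
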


\begin{proof}
We directly bound the difference in expectation values as
\begin{align}
    \left|\br{\phi}D^\dag O D\kt{\phi}- \br{\phi}COC\kt{\phi}\right| &= \Big| \br{\phi}C O (D-C)\kt{\phi}+ \br{\phi}(D^{\dag}-C)OC\kt{\phi} + \br{\phi}(D^{\dag}-C)O(D-C)\kt{\phi} \Big| \\
    &\leq \left\| \dya{\phi}C \right\|_1 \|O(D-C)\|_{\infty} + \left\| C\dya{\phi} \right\|_1 \|(D^{\dag}-C)O\|_{\infty} + \nonumber\\
    &\quad+ \left\| \dya{\phi} \right\|_1 \|(D^{\dag}-C)O(D-C)\|_{\infty} \\
    &\leq \left\| \dya{\phi} \right\|_2 \left\| \dya{\phi}C \right\|_2 \|O\|_{\infty}\|D-C\|_{\infty} + \left\| C\dya{\phi} \right\|_2 \left\| \dya{\phi} \right\|_2  \|D^{\dag}-C\|_{\infty}\|O\|_{\infty} + \nonumber \\ 
    &\quad + \|D^{\dag}-C\|_{\infty}\|O\|_{\infty}\|D-C\|_{\infty} \\
    &= 2\|C\kt{\phi}\|_2\|C-D\|_{\infty}\|O\|_{\infty} + \|C-D\|^2_{\infty}\|O\|_{\infty}\\
    &\leq 3\|C\kt{\phi}\|_2\|C-D\|_{\infty}\|O\|_{\infty}\,,
\end{align}
where in the first equality we have added and subtracted $\br{\phi}COD\kt{\phi}+\br{\phi}D^{\dag}OC\kt{\phi} + \br{\phi}COC\kt{\phi}$, in the first inequality we have used the triangle inequality and the tracial matrix H\"older's inequality, in the second inequality we have used the Cauchy-Schwarz inequality and the sub-multiplicativity of the operator norm, in the subsequent equality we have used the definition of the vector $2$-norm and the fact that $C$ is Hermitian, and in the final inequality we have used our starting assumption that $\|C-D\|_{\infty}\leq \left\|C\kt{\phi}\right\|_2$.
\end{proof}

Lemmas \ref{lem:appdx-q-approximation} and \ref{lem:appdx-q-approximation2} demonstrate that if one has a normalization factor that also needs to be statistically approximated, then one can simply modify the error parameter in Theorem \ref{thm:appdx-general-sampling} by a factor 3, and use a separate algorithm to approximate the normalization factor. In the case of the state normalization factor $q=\|f(A)\kt{\psi}\|$, we now demonstrate the complexities required for a randomized quantum algorithm that approximates it to sufficient error as specified by Lemmas \ref{lem:appdx-q-approximation}, \ref{lem:appdx-q-approximation2} and \ref{lem:appdx-expectation-overlap2} (see Proposition \ref{prop:norm-sampling} of the main text). Following previous sections we denote our matrix function of interest as $f(A)$ given matrix $A$, and denote its Fourier series approximation as $s(\varepsilon,A)$.

\begin{proposition}[Sampling normalization constant -- detailed version]\label{prop:appdx-norm-sampling}
If $q=\|f(A)\kt{\psi}\|$ where $\kt{\psi}$ is a preparable input state in depth $d_{\psi}$ with unitary $U_{\psi}$, then we give a randomized algorithm to approximate $q$ for the two algorithms in Theorem \ref{thm:appdx-general-sampling} that has success probability at least $(1-\delta)$ and complexity
\begin{equation}\label{eq:norm-complexity}
    {\CC}_{\emph{sample}} =  \OC\left( \log \left(\frac{2}{\delta}\right)\frac{c^2\,[\alpha(\frac{\varepsilon q}{12c}, A)]^4}{\varepsilon^2 q^4} \right)\,,\quad 
    {\CC}_{\emph{gate}}= \OC\left(\lambda^2 \big[t_{max}\big(\frac{\varepsilon q}{12c}, A \big)\big]^2 + d_{\psi}\right)\,,
\end{equation}
where $c = 1$ for the algorithm that prepares $\frac{1}{q}\br{\psi}U f(A)V\kt{\psi}$, and $c=\|O\|$ for the algorithm that prepares $\frac{1}{q^2}\Tr\left[f(A)\dya{\psi} f(A)^{\dag} O\right]$.
\end{proposition}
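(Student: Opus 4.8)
The plan is to recognise $q^2 = \|f(A)\kt{\psi}\|_2^2$ as a special case of the observable-estimation quantity already handled by Algorithm~\ref{alg:observable}, and then use the precision requirements of Lemmas~\ref{lem:appdx-q-approximation} and~\ref{lem:appdx-q-approximation2} to fix the target error. Since $f$ is real-valued and $A$ Hermitian, $f(A)$ is Hermitian, so
\begin{equation}
    q^2 = \|f(A)\kt{\psi}\|_2^2 = \br{\psi}f(A)^{\dag}f(A)\kt{\psi} = \Tr\!\left[f(A)\dya{\psi}f(A)^{\dag}\,\id\right].
\end{equation}
This is exactly the quantity estimated by Algorithm~\ref{alg:observable} with input state $\rho=\dya{\psi}$ (prepared by $U_{\psi}$ in depth $d_{\psi}$, so $d_{\rho}=d_{\psi}$), measurement operator $O=\id$ (hence $\|O\|=1$), and trivial normalisation. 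I would therefore run Algorithm~\ref{alg:observable} to produce a statistical estimate $Q^{(M_q)}$ of $q^2$.

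First I would determine the additive error $\varepsilon_q$ to which $q^2$ must be approximated. Lemma~\ref{lem:appdx-q-approximation} (for the observable algorithm) asks for $|Q^{(M_q)}-q^2|\leq \tfrac12 \varepsilon q^2/\|O\|$, while Lemma~\ref{lem:appdx-q-approximation2} (for the overlap algorithm) asks for $|Q^{(M_q)}-q^2|\leq \tfrac12 \varepsilon q^2$; both have the form $\varepsilon_q := \tfrac12 \varepsilon q^2/c$ with $c=\|O\|$ and $c=1$ respectively, matching the two cases in the statement. Following the structure of the proof of Theorem~\ref{thm:appdx-general-sampling}, I would split $|Q^{(M_q)}-q^2|$ into a statistical contribution (from the finite sample average) and an exact contribution (from approximating $f(A)$ by its Fourier series $s(\tilde\varepsilon,A)$), requiring each to be at most $\tfrac12\varepsilon_q$.

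The crucial step is to bound the exact contribution using the \emph{pure-state} estimate of Lemma~\ref{lem:appdx-expectation-overlap2} rather than the generic Lemma~\ref{lem:appdx-expectation-overlap}. Taking $C=f(A)$, $D=s(\tilde\varepsilon,A)$, $O=\id$ and $\kt{\phi}=\kt{\psi}$ there gives $\left|\br{\psi}s^{\dag}s\kt{\psi}-\br{\psi}f(A)^2\kt{\psi}\right|\leq 3\|f(A)\kt{\psi}\|_2\,\|f(A)-s(\tilde\varepsilon,A)\| = 3q\,\tilde\varepsilon$, so demanding $3q\tilde\varepsilon\leq \tfrac12\varepsilon_q$ fixes $\tilde\varepsilon=\varepsilon_q/(6q)=\varepsilon q/(12c)$; this also satisfies the hypothesis $\tilde\varepsilon\leq q$ of Lemma~\ref{lem:appdx-expectation-overlap2} for small enough $\varepsilon$. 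The statistical contribution is then controlled directly by Proposition~\ref{prop:appdx-fourier-sampling}(b) with $\|O\|=1$, statistical target $\tfrac12\varepsilon_q$, Fourier weight $\alpha(\tilde\varepsilon,A)$ and $d_{\rho}=d_{\psi}$, yielding $\OC(\log(2/\delta)\,[\alpha(\tilde\varepsilon,A)]^4/\varepsilon_q^2)$ samples and $\OC(\lambda^2[t_{max}(\tilde\varepsilon,A)]^2+d_{\psi})$ gate depth. Substituting $\varepsilon_q=\varepsilon q^2/(2c)$ and $\tilde\varepsilon=\varepsilon q/(12c)$ then collapses these expressions to Eq.~\eqref{eq:norm-complexity}.

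I expect the main obstacle to be conceptual rather than computational: the stated argument $\varepsilon q/(12c)$ of $\alpha$ emerges only because the tighter Lemma~\ref{lem:appdx-expectation-overlap2} replaces the operator norm $\|f(A)\|$ by $q=\|f(A)\kt{\psi}\|_2$ in the exact-error bound (this is legitimate precisely because $\rho=\dya{\psi}$ is pure); using Lemma~\ref{lem:appdx-expectation-overlap} instead would leave a stray $\|f(A)\|$ and a weaker argument $\varepsilon q^2/(12c\|f(A)\|)$. A secondary caveat worth flagging is the mild circularity that the unknown $q$ enters the parameter choices $\varepsilon_q$ and $\tilde\varepsilon$; as is standard, the complexities are quoted in terms of the true value of $q$, and in practice a coarse a priori lower bound on $q$ (or a doubling search) suffices to set the algorithm parameters.
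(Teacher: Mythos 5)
Your proposal is correct and follows essentially the same route as the paper: estimate $q^2=\br{\psi}s(\tilde\varepsilon,A)^{\dag}s(\tilde\varepsilon,A)\kt{\psi}$ by the randomized Fourier-sampling routine, split the error into statistical and exact parts, and crucially invoke the pure-state tightness bound (Lemma~\ref{lem:appdx-expectation-overlap2}) so that the Fourier truncation error is measured against $q$ rather than $\|f(A)\|$, which is exactly what produces the argument $\varepsilon q/(12c)$ in Eq.~\eqref{eq:norm-complexity}. The only differences are immaterial constants (you target $\tfrac12\varepsilon q^2/c$ from the lemmas directly where the paper rescales to $\tfrac16\varepsilon q^2/c$) and whether the quadratic quantity is phrased as Algorithm~\ref{alg:observable} with $O=\id$ or as the Hadamard-test circuit on the doubled LCU, both of which give the same $\alpha^4$ sample scaling.
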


\begin{proof}[Proof of Proposition \ref{prop:appdx-norm-sampling}]
Lemmas \ref{lem:appdx-q-approximation} and \ref{lem:appdx-q-approximation2}  specify that, in order to statistically approximate the normalized expectation value to additive error $\varepsilon$, one requires a statistical approximation to $q^2$ using $M_q$ shots, which we denote as $Q^{(M_q)}$, that satisfies $|Q^{(M_q)}-q^2|\leq \frac{1}{6}\varepsilon q^2$ for problem (a) and satisfies $|Q^{(M_q)}-q^2|\leq \frac{1}{6}\frac{\varepsilon q^2}{\|O\|}$ for problem (b). We thus hereon deal with problem (b), and note that the resources required for problem (a) can be accounted for by setting $\|O\|\rightarrow 1$. 

We propose to construct $Q^{(M_q)}$ by sampling $\br{\phi}s(\varepsilon,A)^{\dag} s(\varepsilon,A)\kt{\phi}$ via Algorithm \ref{alg:overlap} and the Hadamard test circuit in Fig.~\ref{fig:circuits}(a). We require
\begin{equation}\label{eq:q-approx-breakdown}
    \left| Q^{(M_q)} - \br{\phi}f(A)^2\kt{\phi} \right| \leq \left| Q^{(M_q)} - \br{\phi}s(\varepsilon,A)^{\dag}s(\varepsilon,A)\kt{\phi} \right| + \left| \br{\phi}s(\varepsilon,A)^{\dag}s(\varepsilon,A)\kt{\phi} - \br{\phi}f(A)^2\kt{\phi} \right|\,,
\end{equation}
to be smaller than $\frac{1}{6}\frac{\varepsilon q^2}{\|O\|}$. The second term on the right hand side is bounded by Lemma \ref{lem:appdx-expectation-overlap2} as
\begin{align}
    \left| \br{\phi}s(\varepsilon,A)^{\dag}s(\varepsilon,A)\kt{\phi} - \br{\phi}f(A)^2\kt{\phi} \right| \le 3q \left\|s(\varepsilon,A) - f(A)\right\| \,.
\end{align}
Thus, evenly distributing error across statistical and exact contributions, we require a Fourier series approximation that is tight up to $\left\|s(\varepsilon,A) - f(A)\right\| \leq \frac{1}{36} \frac{\varepsilon q}{\|O\|}$, which sets the second term in Eq.~\eqref{eq:q-approx-breakdown} to be smaller than $\frac{1}{12}\frac{\varepsilon q^2}{\|O\|}$. Using Proposition \ref{prop:fourier-sampling}, we can bound the first term in Eq.~\eqref{eq:q-approx-breakdown} by the value $\frac{1}{12}\frac{\varepsilon q^2}{\|O\|}$ by using $M_q  \geq 2e^4\log\left(\frac{2}{\delta}\right)\frac{\alpha(\frac{1}{36} \frac{\varepsilon q}{\|O\|})^2}{\left(\frac{1}{12}\frac{\varepsilon q^2}{\|O\|}\right)^2}= 288e^4\log\left(\frac{2}{\delta}\right)\frac{\|O\|^2\alpha(\frac{1}{12} \frac{\varepsilon q}{\|O\|})^2}{\varepsilon^2 q^4}$ shots and gate depth $\CC_{gate} = 2\lambda^2t^2_{max}(\frac{1}{12} \frac{\varepsilon q}{\|O\|})$.
\end{proof}

%%%%%%%%%%%%%%%%%%%%%%%%%%%%%%%%%

\subsection{Statistical encoding of classical vectors}\label{appdx:randomized-encoding}

In this section we show how to statistically "encode" a classical vector $\vec{b}=(b_1,...,b_N)$ to recover its properties when determining state overlaps and expectation values using our randomized algorithms.
We show that with $\OC(s)$ classical pre-processing steps one can start from classical access to a vector $\vec{b}$ and invoke it as part of a larger randomized algorithm with a cost of increased circuit samples depending on $\|\vec{b}\|_1$, and minimal quantum gate overhead. This can then replace the usual (quantum) state oracle in linear algebra algorithms, such as in our statistical algorithm for linear systems as described in Proposition \ref{prop:statistical-encoding} in the main text. %Further, unlike the usual quantum state oracle, we allow for arbitrary normalization, so $\vec{b}$ does not need to be encoded as a normalized quantum state.
We note that, traditionally, quantum algorithms that act on classical data assume an encoding is given via a normalized input vector of the form $\kt{\vec{b}} := \frac{1}{\|\vec{b}\|_2} \sum_i {b_i} \kt{i}$. However, our compilation scheme allows for arbitrary normalization, thus we can recover the action of the true vector $\vec{b}$ and take into account its magnitude.

\begin{definition}[Classical sparse access model]\label{def:sparse-access}
We say that a classical vector $\vec{b}=(b_1,...,b_N)$ is stored with sparse access if the following set of tuples is stored and accessible in classical memory:
\begin{equation}
    \BC = \{(b_i,i)\, |\, b_i \neq 0 \}. 
\end{equation}
\end{definition}

We suppose that we wish to randomly compile properties of a matrix $G$ decomposed into a linear combination of implementable unitaries $G=\sum_jg_jV_j$ with weight of coefficients $g:=\sum_j|g_j|$. Using our randomized sampling scheme with the circuits in Figure \ref{fig:circuits} (a) and (b), we can prepare quantities of the form $\br{\psi} G\kt{\vec{b}}$ and  $\br{\vec{b}}G^{\dag}O G\kt{\vec{b}}$ with respective sample complexities $\OC(g^2/\varepsilon^2)$ and $\OC(g^4\|O\|^2/\varepsilon^2)$ and gate depth scaling with the largest gate depth in $\{ V_j\}_j$. This presumes that there is a procedure to prepare the states $\kt{\psi}$, $\kt{\vec{b}}$. If $\kt{\vec{b}}$ is not provided via an oracle, the following proposition shows how to statistically recover the encoding, and how the sample complexities change.

\begin{proposition}[Random compiling classical input vector -- detailed version]\label{prop:appdx-compiling-vector}
Suppose that a vector $\vec{b}=(b_1,...,b_N)$ with sparsity $s$ is stored classically with sparse access as defined in Definition \ref{def:sparse-access}. Suppose further that we have access to some measurement observable $O$ and a unitary $U_{\psi}$ to prepare state $\kt{\psi}$. Then, by providing classical pre-processing in $\OC(s)$ time, there exists a randomized quantum algorithm that returns an approximation of (a) $\br{\psi}G\kt{\vec{b}}$; (b)  $\br{\vec{b}}G^{\dag}OG\kt{\vec{b}}$ to additive error $\varepsilon$ and constant arbitrary probability, using 
\begin{equation}
    \CC^{\psi}_{\emph{sample}} = \OC\left(\frac{g^2}{\varepsilon^2} \frac{\|\vec{b}\|_1^2}{\|\vec{b}\|_2^2}\right)\,, \; \CC^{O}_{\emph{sample}} = \OC\left(\frac{g^4\|O\|^2}{\varepsilon^2} \frac{\|\vec{b}\|_1^4}{\|\vec{b}\|_2^4} \right)\,,
\end{equation}
for scenarios (a) and (b) respectively, where each circuit uses one application of $U_{\phi}$. The additional gate overhead for randomly compiling the input vector is one or two controlled multi-qubit NOT gates.
\end{proposition}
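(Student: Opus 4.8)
The plan is to observe that the classical input vector can itself be written as a linear combination of implementable unitaries acting on $\kt{0}^{\otimes \ceil{\log N}}$, and then to fold this decomposition into the existing LCU $G=\sum_j g_j V_j$ so that the composite object $G\kt{\vec{b}}$ becomes a single LCU to which the Monte Carlo machinery of Proposition \ref{prop:appdx-fourier-sampling} applies essentially verbatim. Concretely, for each index $i$ let $X^{(i)}$ denote the $\ceil{\log N}$-qubit Pauli-$X$ string (a multi-qubit NOT) satisfying $X^{(i)}\kt{0}^{\otimes \ceil{\log N}} = \kt{i}$. Defining $B := \sum_{i\,:\,b_i\neq 0} b_i X^{(i)}$, one has $B\kt{0}^{\otimes \ceil{\log N}} = \sum_i b_i\kt{i} = \|\vec{b}\|_2\,\kt{\vec{b}}$, and the $\ell_1$-weight of this decomposition is exactly $\sum_i |b_i| = \|\vec{b}\|_1$.

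First I would form the composite operator $GB = \sum_{i,j} g_j b_i\,(V_j X^{(i)})$, a linear combination over the implementable unitaries $V_j X^{(i)}$ with total weight $\sum_{i,j}|g_j b_i| = g\|\vec{b}\|_1$. Signs and complex phases of the $b_i$ (and of the $g_j$) are absorbed into the corresponding unitaries exactly as in the random compiler Lemma \ref{lem:appdx-kianna-ham-simulation}, so that sampling is carried out against the probability distribution $\{|g_j b_i|/(g\|\vec{b}\|_1)\}_{i,j}$. Each drawn term is realized by appending a single controlled multi-qubit $X$-gate ($X^{(i)}$) to the controlled-$V_j$ already present in the circuits of Figure \ref{fig:circuits}; this accounts for the claimed gate overhead of one controlled Pauli-$X$ string for problem (a) and two (one on each side) for problem (b).

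Next I would invoke the two parts of Proposition \ref{prop:appdx-fourier-sampling} with the role of $s(A)$ played by $GB$. The only new ingredient is the normalization: since $\br{\psi}G\kt{\vec{b}} = \tfrac{1}{\|\vec{b}\|_2}\br{\psi}(GB)\kt{0}$ and $\br{\vec{b}}G^{\dag} O G\kt{\vec{b}} = \tfrac{1}{\|\vec{b}\|_2^2}\br{0}(GB)^{\dag} O (GB)\kt{0}$, approximating the left-hand sides to additive error $\varepsilon$ is equivalent to approximating the unnormalized quantities to error $\varepsilon\|\vec{b}\|_2$ and $\varepsilon\|\vec{b}\|_2^2$ respectively. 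Substituting the weight $g\|\vec{b}\|_1$ and these rescaled errors into the Hoeffding bounds of Proposition \ref{prop:appdx-fourier-sampling} yields $\CC^{\psi}_{\text{sample}} = \OC\!\big((g\|\vec{b}\|_1)^2/(\varepsilon\|\vec{b}\|_2)^2\big) = \OC\!\big(\tfrac{g^2}{\varepsilon^2}\tfrac{\|\vec{b}\|_1^2}{\|\vec{b}\|_2^2}\big)$ and $\CC^{O}_{\text{sample}} = \OC\!\big((g\|\vec{b}\|_1)^4/(\varepsilon\|\vec{b}\|_2^2)^2\big) = \OC\!\big(\tfrac{g^4}{\varepsilon^2}\tfrac{\|\vec{b}\|_1^4}{\|\vec{b}\|_2^4}\big)$, which are the stated complexities. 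The classical preprocessing amounts to a single pass over the $s$ stored tuples $(b_i,i)$ in the set $\BC$ to compute $\|\vec{b}\|_1$, $\|\vec{b}\|_2$, and the sampling weights, giving the $\OC(s)$ cost.

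The step I expect to require the most care is the normalization bookkeeping: ensuring the factor $\|\vec{b}\|_2$ enters so as to produce the ratio $\|\vec{b}\|_1^2/\|\vec{b}\|_2^2$ rather than an absolute $\|\vec{b}\|_1$ dependence, and confirming that $\|\vec{b}\|_2$ is available at the required precision — here exactly, since it is computed classically from the sparse access set. A secondary subtlety is the consistent absorption of the phases of $b_i$ into the sampled Pauli-$X$ strings, which is routine given Lemma \ref{lem:appdx-kianna-ham-simulation} but must be tracked so that the resulting estimator remains unbiased and the Hoeffding interval retains the correct width $g\|\vec{b}\|_1/\|\vec{b}\|_2$ (respectively its square).
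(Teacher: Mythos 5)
Your proposal is correct and follows essentially the same route as the paper: the paper likewise writes $\kt{\vec{b}}$ as $\frac{\|\vec{b}\|_1}{\|\vec{b}\|_2}\sum_i p^{(\vec{b})}_i\,\mathrm{sgn}(b_i)\,X_i\kt{0}$ with $p^{(\vec{b})}_i=|b_i|/\|\vec{b}\|_1$, folds this into the LCU for $G$, samples the product distribution, and applies Hoeffding with the rescaled range $g\|\vec{b}\|_1/\|\vec{b}\|_2$ (respectively its square), yielding the identical sample complexities, gate overhead, and $\OC(s)$ preprocessing. Your packaging via the operator $B$ with $B\kt{0}=\|\vec{b}\|_2\kt{\vec{b}}$ is only a cosmetic reorganization of the same argument.
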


\begin{proof}[Proof of Proposition \ref{prop:appdx-compiling-vector}] 
We first give an overview of the main steps of our idea, before giving more precise analysis of quantum and classical resources required. We express $\kt{\vec{b}}=\sum_{i=1}^{N} b_i \kt{i}/\|\vec{b}\|_2$ as a weighted probabilistic sum over states
\begin{equation}\label{eq:b-decomposition}
    \kt{\vec{b}} =\frac{\|\vec{b}\|_1}{\|\vec{b}\|_2}\sum_{i=1}^{N} p^{(\vec{b})}_i\cdot \textrm{sgn}(b_i)\kt{i}=\frac{\|\vec{b}\|_1}{\|\vec{b}\|_2}\sum_{i=1}^{N} p_i\cdot \textrm{sgn}(b_i)X_i\kt{0}\,,
\end{equation}
where we denote the probabilities $p^{(\vec{b})}_i = {|b_i|}/{\|\vec{b}\|_1}$, we denote $\textrm{sgn}(b_i) = b_i/|b_i|$, and $X_i$ is the $\ceil{\log N}$-qubit Pauli operator in $\{ \id, X\}^{\otimes \ceil{\log N}}$ that corresponds to the binary representation of $i$. If the description in Eq.~\eqref{eq:b-decomposition} is classically accessible, then $\kt{\vec{b}}$ can be statistically encoded by sampling from this distribution as follows.

(a). \textit{Recovering $\br{\psi}G\kt{\vec{b}}$}. We can express $\br{\psi}G\kt{\vec{b}}$ as the probabilistic sum
\begin{equation}
    \br{\psi}G\kt{\vec{b}} =\frac{\|\vec{b}\|_1}{\|\vec{b}\|_2}\sum_{i=1}^{N} p^{(\vec{b})}_i\cdot \textrm{sgn}(b_i)\br{0}U_{\psi}^{\dag}GX_i\kt{0} = \frac{\|\vec{b}\|_1}{\|\vec{b}\|_2} g\sum_{i=1}^{N} p^{(\vec{b})}_i p^{(G)}_j\cdot \textrm{sgn}(b_i g_j)\br{0}U_{\psi}^{\dag} V_j X_i\kt{0}\,,
\end{equation}
where in the first equality we have used the fact that $\kt{\psi}=U_{\psi}\kt{0}$ and Eq.~\eqref{eq:b-decomposition}, and in the second equality we have decomposed $G$ into its constituent unitaries as $G=\sum_jg_jV_j$ with weight of coefficients $g:=\sum_j|g_j|$.  The quantity $\br{\psi}G\kt{\vec{b}}$ can then be statistically recovered as follows: (1) Sample indices $i',j'$ from the probability distribution $\{p^{(\vec{b})}_i p^{(G)}_i\}_{i,j}$. (2) Run two Hadamard test circuits (see Fig.~\ref{fig:circuits}(a)) to obtain one measurement sample each of $\mathrm{Re}(\br{0}U_{\psi}^{\dag} V_{j'} X_{i'}\kt{0})$ and $\mathrm{Im}(\br{0}U_{\psi}^{\dag} V_{j'} X_{i'}\kt{0})$ respectively. (3) Classically multiply the result by $\|\vec{b}\|_1g\,\textrm{sgn}(b_{i'} g_{j'})/\|\vec{b}\|_2$ and store the result. (4) Repeat process $\CC^{\phi}_{\emph{sample}}$ times and average over results. Due to Hoeffding's inequality, it is sufficient to take
\begin{equation}
    \CC^{\phi}_{\emph{sample}} = 8\log\left(\frac{2}{\delta}\right)\frac{g^2}{\varepsilon^2} \frac{\|\vec{b}\|_1}{\|\vec{b}\|_2}
\end{equation}
samples to attain an answer within additive error $\varepsilon$ and probability at least $(1-\delta)$. (The additional factor of $4$  comes from approximating $\mathrm{Re}(\br{0}U_{\psi}^{\dag} V_j X_i\kt{0})$ and $\mathrm{Im}(\br{0}U_{\psi}^{\dag} V_j X_i\kt{0})$ each to additive error $\varepsilon/\sqrt{2}$ separately.)

(b). \textit{Recovering $\br{\vec{b}}G^{\dag}OG\kt{\vec{b}}$}. Similar to the above, we can write
\begin{align}
    \br{\vec{b}}G^{\dag}OG\kt{\vec{b}} 
    &= \frac{\|\vec{b}\|^2_1}{\|\vec{b}\|^2_2}\sum_{i,\ell}p_i^{(\vec{b})}p_{\ell}^{(\vec{b})}\cdot \textrm{sgn}(b_ib_{\ell})\br{0}X_i G^{\dag}OG X_{\ell}\kt{0}\,. \\
    &= \frac{\|\vec{b}\|^2_1}{\|\vec{b}\|^2_2} g^2\sum_{i,j,k,\ell}p_i^{(\vec{b})}p_j^{(G)}p_k^{(G)}p_{\ell}^{(\vec{b})}\cdot \textrm{sgn}(b_ig_jg_kb_{\ell})\br{0}X_i V_j^{\dag}OV_k X_{\ell}\kt{0}\,.
\end{align}
We consider a very similar protocol to before: (1) Sample indices $i',j',k',\ell'$ from the probability distribution $\{p_i^{(\vec{b})}p_j^{(G)}p_k^{(G)}p_{\ell}^{(\vec{b})}\}_{ijk\ell}$. (2) Run the  circuit in Fig.~\ref{fig:circuits} to obtain one measurement sample of $\br{0}X_i V_j^{\dag}OV_k X_{\ell}\kt{0}$. (3) Classically multiply the result by $\|\vec{b}\|_1^2g^2\,\textrm{sgn}(b_ig_jg_kb_{\ell}))/\|\vec{b}\|^2_2$ and store the result. (4) Repeat process $\CC^{O}_{\emph{sample}}$ times and average over results. Due to Hoeffding's inequality, it is sufficient to take
\begin{equation}
    \CC^{O}_{\emph{sample}} = 2\log\left(\frac{2}{\delta}\right)\frac{g^4\|O\|^2}{\varepsilon^2} \frac{\|\vec{b}\|^4_1}{\|\vec{b}\|^4_2}
\end{equation}
shots to attain an answer within additive error $\varepsilon$ and probability at least $(1-\delta)$.

Now we consider the classical resources required to determine $\|\vec{b}\|_1$, $\|\vec{b}\|_2$, $\{p_i\}$, $\{\textrm{sgn}(b_i)\}$ given $\BC$. As one can obtain the set of tuples $\overline{\BC}=\big\{\big(|b_i|,i\big)\, |\, b_i \neq 0 \big\}$ in $\OC(s)$ steps from $\BC$, it then follows that all four sets of quantities can be obtained in $\OC(s)$ additional steps by first obtaining $\overline{\BC}$, then combining quantities in $\BC$ and $\overline{\BC}$.
\end{proof}

\begin{remark}[Arbitrary normalization]
We can additionally consider arbitrarily normalized states with normalization constant $m$ as $\frac{1}{m}\kt{\vec{b}}$ by changing the sample complexity by factor $\|\vec{b}\|_1 \rightarrow \frac{1}{m}\|\vec{b}\|_1$. %Specifically, with the choice $N=1/\|\vec{b}\|_2$, one can recover the "true" magnitude of the classical vector $\|\vec{b}\|_2\cdot\kt{\vec{b}}=\sum_{i=1}^{N} b_i \kt{i}$.
\end{remark}

%%%%%%%%%%%%%%%%%%%%%%%%%%%%%%%%%

\subsection{Sampling from output vector}\label{appdx:vector-sampling}

In this section we demonstrate a modification of Theorem \ref{thm:appdx-general-sampling} that allows sampling from the vector corresponding to sampling $G\ket{\psi}$ in the computational basis for some state $\ket{\psi}$ and operator $G$ of interest, which is a generalization of our algorithmic framework. As in the previous section, we suppose that there is a known decomposition $G= \sum_i g_i V_i$ into implementable unitaries with $g_i \in \mathbb{C}$. For further simplicity we denote $\sum_i g_i V_i = g \sum_i p_i U_i$ where $g:=\sum_j|g_j|$, $p_i := |g_i|/g$ and $U_i:= (g_i/|g_i|) V_i$ is a unitary that absorbs the phase of $g_i$. Our discussion will remain general to any randomized algorithm that samples from such a collection of unitaries. In what follows we show that an unbiased estimator can be constructed for the vector whose entries take the values $|\bra{\vec{z}_n}G\ket{\psi}|^2$ for each $\vec{z}_n \in \{0,1\}^n$. This mimics the more common approach in quantum algorithms where one prepares the quantum state $G\ket{\psi}/\|G\ket{\psi}\|$ and collects measurement samples in the computational basis. As with Theorem \ref{thm:appdx-general-sampling}, our scheme moves part of the (coherent) quantum  complexity into sample complexity as now the operator $G$ does not need to be materialized quantumly.

In order to produce our desired estimator, we will simply implement the gates in the circuit corresponding to the generalized Hadamard test of Figure~\ref{fig:circuits}(b) and measure all qubits in the computational basis, with minor classical postprocessing. Consider the circuit in Figure~\ref{fig:vectorsampling}, which explicitly shows the gates in the circuit in Figure~\ref{fig:circuits}(b) before computational basis measurement.

\begin{figure}[h]
  \centering
    \includegraphics[width=0.35\columnwidth]{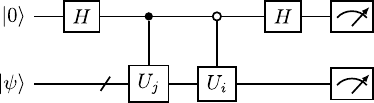}
  \caption{\textbf{Generalized Hadamard test.} We use this circuit along with a modified measurement in Algorithm \ref{alg:observable} to sample from the solution vector. Rather than measuring the observable $Z \otimes O$ we measure all qubits in the computational basis. \label{fig:vectorsampling}}
\end{figure}

The state before measurement is $\frac{1}{2}\big(\ket{0}(U_i+U_j)\ket{\psi} + \ket{1}(U_i-U_j)\ket{\psi}\big)$. Measurement in the computational basis yields 
\begin{align}
    &(0,\vec{z}_{n})\quad \text{with probability}\quad\frac{1}{4} \left| \bra{\vec{z}_{n}} (U_i+U_j)\ket{\psi} \right|^2\,, \\
    &(1,\vec{z}_{n})\quad \text{with probability}\quad\frac{1}{4} \left| \bra{\vec{z}_{n}} (U_i-U_j)\ket{\psi} \right|^2\,,
\end{align}
for all $\vec{z}_{n} \in \{0,1\}^n$.  

Our procedure is as follows. As in Algorithm \ref{alg:observable}, we sample two unitaries $(U_i, U_j)$ independently at a time with probability $p_i p_j$. We run the circuit in Figure \ref{fig:vectorsampling} concluding with computational basis measurement on all qubits, and assign vector element $g^2 (-1)^{z} \ket{\vec{z}_n}$ upon receiving string $(z,\vec{z}_{n})$. One can check that this gives an unbiased estimator for the vector 
\begin{align}
    \sum_{ij} p_i p_j  \sum_{\vec{z}_{n} \in (0,1)^n} & \sum_{z \in \{ 0,1\}}  \mathrm{Prob}\big( (z,\vec{z}_{n})\big| U_i, U_j \big) \cdot g^2 (-1)^{z} \ket{\vec{z}_n} = \\
    &=\sum_{ij} p_i p_j  \sum_{\vec{z}_{n} \in (0,1)^n} \left(\frac{1}{4} \left| \bra{\vec{z}_{n}} (U_i+U_j)\ket{\psi} \right|^2 \cdot (g^2 \ket{\vec{z}_n}) + \frac{1}{4} \left| \bra{\vec{z}_{n}} (U_i-U_j)\ket{\psi} \right|^2 \cdot (-1 g^2 \ket{\vec{z}_n}) \right) \\
    &= \sum_{\vec{z}_{n} \in (0,1)^n} \sum_{ij} g^2\, p_i p_j \left( \frac{1}{2} \bra{\psi}U_i^{\dag}\dya{\vec{z}_{n}}U_j\ket{\psi} \cdot \ket{\vec{z}_{n}} + \frac{1}{2} \bra{\psi}U_j^{\dag}\dya{\vec{z}_{n}}U_i\ket{\psi} \cdot \ket{\vec{z}_{n}} \right)\\
    &= \sum_{\vec{z}_{n} \in (0,1)^n} \big| \bra{\vec{z}_{n}} G \ket{\psi} \big|^2 \cdot \ket{\vec{z}_{n}}\,,
\end{align}
which indeed is the vector with entries $\sum_{\vec{z}_{n} \in (0,1)^n} \big| \bra{\vec{z}_{n}} G \ket{\psi} \big|^2$ as desired, which we from hereon denote as $\vec{G}$. 

How many shots do we need to take in order to obtain good convergence? We appeal to vector Bernstein inequalities. Ref.~\cite{kohler2017sub} essentially gives a vector Bernstein inequality (adapted from Refs.~\cite{gross2011recovering, ledoux1991probability}) for the sample mean $\vec{Y}_M = \sum_{i=1}^M \vec{X}_i$ of $M$ i.d.d.~random variables $\vec{X}_i \in \mathbb{R}^N$ of
\begin{equation}\label{eq:vec-bernstein}
    \mathrm{Prob}\left( \left\| \vec{Y}_M - \mathbb{E}[\vec{Y}_M]\right\| \geq \varepsilon \right) \leq \exp \left( - \frac{M \varepsilon^2}{8 \sigma^2} \right)\,,
\end{equation}
where $\sigma^2$ is an upper bound on the second moment satisfying $\sigma^2 \geq \mathbb{E}\big[\|\vec{X}_i-\mathbb{E}[\vec{X}_i]\|^2_2\big]$ for all $i \in [N]$, and $\mathbb{E}[\vec{Y}_M] = \vec{G}$. Due to the linearlity of the trace we have $\mathbb{E}\big[\|\vec{X}_i-\mathbb{E}[\vec{X}_i]\|^2_2\big] \leq \mathbb{E}\big[\|\vec{X}_i\|_2^2\big]$, and thus we can set $\sigma^2 = g^4$. This implies we obtain an estimator $\vec{Y}_{\CC_{sample}}$ satisfying
\begin{equation}
    \left\| \vec{Y}_{\CC_{sample}} - \vec{G} \right\| \leq \varepsilon \,,
\end{equation}
with probability $(1-\delta)$, using a number of shots satisfying
\begin{equation}
    \CC_{sample} = \OC\left( \frac{g^4}{\varepsilon^2} \log\left(\frac{1}{\delta} \right) \right)\,,
\end{equation}
which is the same sample complexity as specified in Theorem \ref{thm:appdx-general-sampling} to recover a general observable. Thus, here we have shown that instead of measuring an observable, one can use the same number of shots (asymptotically) to recover an approximation of the classical vector corresponding to output probabilities in the computational basis.
end

%%%%%%%%%%%%%%%%%%%%%%%%%%%%%

\subsection{Classical algorithm for matrix polynomials}\label{appdx:classical}

In this section we show that, given a bounded degree polynomial series $h(A) = \sum_{k\in F_d} \alpha_k A^k$ in matrix $A$ where $A$ has known Pauli decomposition with bounded Pauli weight, then one can approximate quantities of the form $\br{t}h(A)\kt{s}$ efficiently classically by means of a randomized algorithm. This is  a more detailed re-statement of Proposition \ref{prop:classical-simp} in the main text.

\begin{proposition}[Sampling from polynomial series -- detailed version]\label{prop:appdx-classical}
Suppose that we have a polynomial series of degree $d$
\begin{equation}
    h(A) = \sum_{k\in F_d} \alpha_k A^k\,,
\end{equation}
in some $N\times N$ Hermitian matrix $A$, where $F_d \subseteq [d]$, and denote the $\ell_1$-norm of the coefficients as $\alpha:=\sum_{k\in F_d} |\alpha_k|$. Suppose further that $A$ has known Pauli decomposition $A=\sum_{\ell} a_{\ell} P_{\ell}$ with Pauli weight $\lambda = \sum_{\ell} |a_{\ell}|$. 
Then, given pure states $\kt{t}$ and $\kt{s}$ which are implementable by initializing in the $\kt{0}$ state and performing $\OC(m)$ Clifford gates, one can classically approximate $\br{t}h(A)\kt{s}$ up to additive error $\varepsilon$ with probability at least $1-\delta$, using 
    \begin{align}
        {\CC}_{\emph{sample}} &=  \OC\left( \log \left(\frac{2}{\delta}\right)\frac{(\alpha \lambda^d)^2}{\varepsilon^2} \right)\,, 
    \end{align}
calls to independent classical subroutines, each requiring $\OC(\log^2(N))$ bits and at most $\OC(d\log^2(N) + m\log(N))$ time.
\end{proposition}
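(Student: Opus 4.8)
The plan is to expand $h(A)$ as a linear combination of multi-qubit Pauli operators, exhibit this combination as a scalar weight times a probability distribution over signed Pauli strings, and then estimate $\br{t}h(A)\kt{s}$ by Monte Carlo sampling, exploiting the Gottesman--Knill theorem to evaluate each sampled term exactly on a classical machine. First I would substitute $A=\sum_\ell a_\ell P_\ell$ into each power and collect terms,
\begin{equation}
h(A) = \sum_{k \in F_d} \alpha_k \sum_{\ell_1,\ldots,\ell_k} a_{\ell_1}\cdots a_{\ell_k}\, P_{\ell_1}\cdots P_{\ell_k}\,,
\end{equation}
noting that each product $P_{\ell_1}\cdots P_{\ell_k}$ equals $u_{\vec{\ell}}\,Q_{\vec{\ell}}$ for a single Pauli $Q_{\vec{\ell}}$ and a phase $u_{\vec{\ell}}\in\{\pm1,\pm i\}$. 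Using $\sum_{\ell_1,\ldots,\ell_k}|a_{\ell_1}|\cdots|a_{\ell_k}| = (\sum_\ell|a_\ell|)^k = \lambda^k$, the total $\ell_1$-weight of this decomposition is $R := \sum_{k\in F_d}|\alpha_k|\lambda^k$, which is bounded by $\alpha\lambda^d$ when $\lambda\geq 1$ (the regime $\lambda<1$ only sharpens the bound).

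Dividing through by $R$ identifies $h(A)/R$ as an average over a product distribution: sample $k$ with probability $\propto |\alpha_k|\lambda^k$, then draw $\ell_1,\ldots,\ell_k$ independently from $\{|a_\ell|/\lambda\}_\ell$. For each draw I would form the estimator $z = R\,s_{\vec{\ell}}\,\br{t}Q_{\vec{\ell}}\kt{s}$, where $s_{\vec{\ell}}$ folds together $u_{\vec{\ell}}$ and $\mathrm{sgn}(\alpha_k a_{\ell_1}\cdots a_{\ell_k})$ so that $|s_{\vec{\ell}}|=1$. A short calculation gives $\mathbb{E}[z] = \br{t}h(A)\kt{s}$, and since $Q_{\vec{\ell}}$ is unitary and $\kt{s},\kt{t}$ are normalized we have $|z|\leq R$.

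The crucial point is that each sampled amplitude $\br{t}Q\kt{s}$ is computable exactly and efficiently classically. Since $Q$ is a Pauli and $\kt{s},\kt{t}$ are stabilizer states, both $Q\kt{s}$ and $\kt{t}$ are stabilizer states, so $\br{t}Q\kt{s}$ is a stabilizer inner product. Concretely I would: multiply the $k\leq d$ sampled Paulis into a single $\ceil{\log N}$-qubit Pauli, each multiplication XOR-ing symplectic vectors and updating the phase in $\OC(\log N)$ time for $\OC(d\log N)$ total; apply the depth-$\OC(m)$ Clifford circuits $C_s,C_t$ preparing $\kt{s},\kt{t}$ to the tableaux in $\OC(m\log N)$ time; and extract the amplitude via the standard stabilizer inner-product routine in $\OC(\log^2 N)$ time. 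The working registers are $\OC(\log N)$-qubit tableaux occupying $\OC(\log^2 N)$ bits, and absorbing lower-order terms yields the claimed per-subroutine cost $\OC(d\log^2 N + m\log N)$.

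Finally I would control the statistical error by treating the real and imaginary parts of $z$ separately and applying Hoeffding's inequality to these bounded variables; the empirical mean over
\begin{equation}
\CC_{\mathrm{sample}} = \OC\!\left(\log\!\left(\tfrac{2}{\delta}\right)\frac{R^2}{\varepsilon^2}\right) = \OC\!\left(\log\!\left(\tfrac{2}{\delta}\right)\frac{(\alpha\lambda^d)^2}{\varepsilon^2}\right)
\end{equation}
independent runs then lies within $\varepsilon$ of $\br{t}h(A)\kt{s}$ with probability at least $1-\delta$. The main obstacle is the penultimate paragraph rather than the sampling bookkeeping: one must verify carefully that the stabilizer overlap stays within the stated $\OC(\log^2 N)$ time and space, i.e.\ that the Gottesman--Knill simulation of the Clifford layer together with the inner-product extraction is genuinely polylogarithmic in $N$, and that the phase and sign bookkeeping through the Pauli multiplication and the complex estimator is carried through consistently.
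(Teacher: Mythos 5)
Your proposal is correct and follows the same core strategy as the paper: expand $h(A)$ into signed products of Paulis, identify the $\ell_1$-weight $R=\sum_{k}|\alpha_k|\lambda^k\leq \alpha\lambda^d$, importance-sample terms from the induced product distribution, evaluate each sampled term via the stabilizer formalism, and apply Hoeffding's inequality to the bounded estimator. The one place you diverge is in how each sampled term is evaluated: the paper classically \emph{simulates the Hadamard-test circuit} of Figure~\ref{fig:circuits}(a) with the $\OC(d\log N)$ controlled single-qubit Paulis and the $\OC(m)$ state-preparation Cliffords, recording a single random measurement outcome per sample (cost $\OC(d\log^2 N + m\log N)$ time and $\OC(\log^2 N)$ bits via Aaronson--Gottesman), whereas you compute the \emph{exact} phase-sensitive amplitude $\br{t}Q_{\vec{\ell}}\kt{s}$ for each sampled Pauli. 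Your route is legitimate and even reduces the conditional variance, but the asserted $\OC(\log^2 N)$ cost for the exact stabilizer inner product (including its global phase) is the step that needs justification --- textbook routines for phase-sensitive stabilizer overlaps use Gaussian elimination and run in $\OC(\log^3 N)$ time, so as written your per-subroutine bound is not fully substantiated. This is easily repaired by falling back to the paper's device of simulating one Hadamard-test measurement outcome per sample, which keeps the estimator unbiased, preserves the bound $|z_j|\leq R$, and lands exactly on the stated $\OC(d\log^2 N + m\log N)$ time and $\OC(\log^2 N)$ space.
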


\begin{proof}[Proof of Proposition \ref{prop:appdx-classical}]
We first note that we can write 
\begin{equation}
    A = \lambda \sum_{\ell} p_{\ell} \cdot \textrm{sgn}(a_{\ell}) P_{\ell} \,,
\end{equation}
where $\lambda = \sum_{\ell}|a_{\ell}|$ is the Pauli weight and we denote $p_{\ell} = \frac{|a_{\ell}|}{\lambda}$ and $\textrm{sgn}(a_{\ell}) = \frac{a_{\ell}}{|a_{\ell}|}$. We can then write
\begin{align}
    h(A) &= \sum_{k\in F_d} \alpha_k \lambda^k\, \sum_{{\ell}_1,..., {\ell}_k} p_{{\ell}_1} \cdots p_{{\ell}_k}\, \textrm{sgn}(a_{{\ell}_1})\cdots\,\textrm{sgn}(a_{{\ell}_k})\, P_{{\ell}_1}\cdots P_{{\ell}_k} \\
    &= R_{h(A)} \sum_{k\in F_d} \sum_{{\ell}_1,..., {\ell}_k} q_k\, p_{{\ell}_1}  \,  \cdots p_{{\ell}_k}\, \phi(\alpha_k)\,\textrm{sgn}(a_{{\ell}_1})\cdots\,\textrm{sgn}(a_{{\ell}_k})\, P_{{\ell}_1}\cdots P_{{\ell}_k}\,,
\end{align}
where we denote $R_{h(A)} = \sum_{k\in F_d} |\alpha_k\lambda^k|\leq \alpha\lambda^d$, $q_k=\frac{|\alpha_k|\lambda^k}{R_{h(A)}}$, $\phi(\alpha_k) = \frac{\alpha_k}{|\alpha_k|}$. We note that $p_{{\ell}_1},...,p_{{\ell}_k}$ and $q_k$ are probabilities. This then allows us to statistically recover $h(A)$ with weight $R_{h(A)}$. For simplicity we subsume all indices and write 
\begin{equation}\label{eq:h-decomposition}
    h(A) = G \sum_{i \in S} \tilde{p}_i \tilde{\phi}_i \widetilde{P}_i\,,
\end{equation}
where $\tilde{p}_i=q_k\, p_{{\ell}_1}  \,  \cdots p_{{\ell}_k}$ subsumes all probabilities, $\tilde{\phi}_i=\phi(\alpha_k)\,\textrm{sgn}(a_{{\ell}_1})\cdots\,\textrm{sgn}(a_{{\ell}_k})$ subsumes all phases, $\widetilde{P}_i = P_{{\ell}_1}\cdots P_{{\ell}_k}$ is a product of $k$ Paulis.

\textit{Preparation of $\br{t}h(A)\kt{s}$}. Denote the collection of Cliffords to prepare $\kt{t}$ and $\kt{s}$ as $U_t$ and $U_s$ respectively. Using Eq.~\eqref{eq:h-decomposition} we can simply write 
\begin{align}
    \br{t} h(A)\kt{s} = R_{h(A)} \sum_{i \in S} \tilde{p}_i \tilde{\phi}_i  \br{0}U_t^{\dag} \widetilde{P}_i U_s\kt{0}\,.
\end{align}
Inspecting this, we can consider the following protocol: (1) sample from the above probability distribution, obtaining index $i$ with probability $\tilde{p}_i$; (2) sample one measurement result corresponding to $\br{0}U_t^{\dag} \widetilde{P}_i U_s\kt{0}$ by running the circuit in Figure \ref{fig:circuits}(a); (3) multiply the result by $R_{h(A)} \tilde{\phi}_i$; (4) repeat $M$ times and take mean over results.

We note that in step (2) one requires sampling one measurement result from a circuit that contains at most $2d$ controlled n-qubit Pauli operations plus $\OC(m)$ other Clifford operations. The former part can simply be decomposed into at most $2dn$ controlled single qubit Pauli operations. In order to simulate the circuit overall one requires $\OC(n^2)$ bits and $\OC(dn^2+mn)$ time \cite{aaronson2004improved, gidney2021stim}. Due to Hoeffding's inequality, in order to recover $\br{t}h(A)\kt{s}$ to additive error $\varepsilon$ with probability at least $(1-\delta)$, it is sufficient to use $M$ shots for any
\begin{equation}
    M \geq 2\log\left(\frac{2}{\delta}\right) \frac{R_{h(A)}^2}{\varepsilon^2}\,.
\end{equation}
Using the upper bound $R_{h(A)} \leq \alpha\lambda^d$ we observe that it is sufficient to perform 
\begin{equation}
    \CC_{sample} = 2\log\left(\frac{2}{\delta}\right) \frac{(\alpha\lambda^d)^2}{\varepsilon^2} 
\end{equation}
shots. 

Note that quantities of the form $\Tr\left[h(A)\rho\, h(A)^{\dag} O\right]$ (where $\rho$, $O$ are implementable by Cliffords) can be obtained by taking the modulus squared of quantities of the form $\br{t}h(A)\kt{s}$. Alternatively, they can also be prepared by simulating the circuit in Figure \ref{fig:circuits}(b).

\end{proof}

The above proposition can be trivially extended to consider polynomials of multiple matrices. This implies that subroutines such as matrix multiplication can be efficiently sampled from, given the matrices have {bounded} Pauli weights.

%%%%%%%%%%%%%%%%%%%%%%%%%%%%%%%%%%%%%%%%%%%%%%%%%%%%%%%%%%%%%

\section{Application-specific results}\label{appdx:applications}

\subsection{Linear systems}\label{appdx:linear-systems}

In this section we detail our results for linear systems. We first summarize a result from Ref.~\cite{childs2017quantum}. In this work, the authors find an approximation of $A^{-1}$ as a linear combination of unitaries $\sum_i \alpha_i \exp(-iAz_j)$. More precisely, they find a Fourier representation of the inverse function as

\begin{equation}
    \frac{1}{x}=\frac{i}{\sqrt{2 \pi}} \int_{0}^{\infty} \mathrm{d} y \int_{-\infty}^{\infty} \mathrm{d} z\, z e^{-z^{2} / 2} e^{-i x y z}\,.
\end{equation}
Further, They show that when truncating the integration range as
\begin{equation}\label{eq:g-def}
    g(x):=\frac{i}{\sqrt{2 \pi}} \int_{0}^{y_{J}} \mathrm{~d} y \int_{-z_{K}}^{z_{K}} \mathrm{d} z\, z e^{-z^{2} / 2} e^{-i x y z}\,,
\end{equation}
we have
\begin{equation}\label{eq:g-approximation}
    \left| g(x) - \frac{1}{x} \right| \leq \frac{1}{|x|} e^{-\left(x y_J\right)^2 / 2}+\frac{2}{|x|} e^{-z_K^2 / 2} \leq \varepsilon\,,
\end{equation}
on the domain $D_{b}:=[-1,-1 / b] \cup[1 / b, 1]$ for some $y_{J}(\varepsilon,b)=\Theta(b \sqrt{\log (b / \varepsilon)})$ and $z_{K}(\varepsilon,b)=\Theta(\sqrt{\log (b / \varepsilon)})$ (Ref.~\cite{childs2017quantum}, Lemma 12). An important point is that this choice is totally independent of the upper limit of the domain, and is wholly dependent on $b$, that is, the same statement holds over the domain $[-a,-1 / b] \cup[1 / b, a]$ with arbitrary $a\geq 1/b$. This is due to the fact that the upper bound on the approximation error in Eq.~\eqref{eq:g-approximation} is a decreasing function in $|x|$. The integral in Eq.~\eqref{eq:g-def} can be discretized as
\begin{equation}\label{eq:approx-1/x}
    h(x):=\frac{i}{\sqrt{2 \pi}} \sum_{j=0}^{J-1} \Delta_{y} \sum_{k=-K}^{K} \Delta_{z} z_{k} e^{-z_{k}^{2} / 2} e^{-i x y_{j} z_{k}}\,,
\end{equation}
where the integration range has been discretized into $J$ and $2K+1$ steps of size $\Delta_{y}$ and $\Delta_{z}$ respectively. Moreover, taking step sizes $\Delta_y=\Theta(\varepsilon / \sqrt{\log (b / \varepsilon)})$ and $\Delta_z = \Theta((b \sqrt{\log (b / \varepsilon)})^{-1})$ guarantees that $h(x)$ is $\varepsilon$-close to $1/x$ on the domain $D_b$. Note, however, as we will only sample from this distribution, for our purposes the resolution of this discretization can be taken to be arbitrarily small. By inspecting Eq.~\eqref{eq:approx-1/x}, one can observe that this corresponds to a Fourier series with maximum time parameter
\begin{equation}
    t_{\max} = \Theta\left(y_J(\varepsilon,b)z_K(\varepsilon,b)\right) = \Theta\left(b \log (b / \varepsilon)\right)\,,
\end{equation}
and coefficients with weight
\begin{equation}
    \frac{1}{\sqrt{2 \pi}} \sum_{j=0}^{J-1} \Delta_{y} \sum_{k=-K}^{K} \Delta_{z}\left|z_{k}\right| e^{-z_{k}^{2} / 2}
    = \Theta\left(y_J(\varepsilon,b)\right)
    = \Theta\left( b \sqrt{\log (b / \varepsilon)} \right)\,.
\end{equation}

By mapping the domain $D_b$ to the spectrum of some matrix $A$, this allows us to establish the following lemma.

\begin{lemma}[Fourier series approximation of inverse operator, adapted from Ref.~\cite{childs2017quantum}]\label{lem:fourier-linear-systems}
Given some matrix $A$ with finite $\|A^{-1}\|$ we have
\begin{equation}\label{eq:fourier-operator}
    \Big\|A^{-1} - \sum_{i \in S_{\varepsilon,A}} \alpha_i(\varepsilon,A) \exp(-iAz_j(\varepsilon,A)) \Big\| \leq \varepsilon\,,
\end{equation}
where $S_{\varepsilon,A}$ is some index set, with maximum time parameter $t_{max}:=\max_{i \in S_{\varepsilon,A}}(z_j)$ satisfying
\begin{align}\label{eq:t-linear-systems}
    t_{max}(\varepsilon,A) = \Theta\left(\|A^{-1}\| \log \left(\frac{\|A^{-1}\|}{\varepsilon} \right) \right)\,,
\end{align}
and Fourier coefficients $\alpha_i(\varepsilon,A)$ with $\ell_1$-norm satisfying
\begin{align}\label{eq:alpha-linear-systems}
    \alpha(\varepsilon,A) := \sum_i \left|\alpha_i(\varepsilon,A)\right| = \Theta\left( \|A^{-1}\| \sqrt{\log \left(\frac{\|A^{-1}\|}{\varepsilon} \right)} \right)\,.
\end{align}
\end{lemma}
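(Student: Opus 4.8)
The plan is to lift the scalar Fourier approximation of $1/x$ from Eqs.~\eqref{eq:approx-1/x} and \eqref{eq:g-approximation} to the matrix setting via functional calculus, exploiting the observation recorded just above the lemma that the scalar error bound in Eq.~\eqref{eq:g-approximation} is a decreasing function of $|x|$, so that the cost parameters are governed solely by the lower cutoff $1/b$ of the domain and are independent of its upper limit $a$. Since $A$ is Hermitian (non-Hermitian $A$ being handled by the embedding discussed in the main text), first I would diagonalize it, writing $A=\sum_{\lambda}\lambda\,\Pi_{\lambda}$ for its spectral decomposition into orthogonal projectors $\Pi_{\lambda}$. Every eigenvalue then obeys $|\lambda|\geq 1/\|A^{-1}\|$, because $\|A^{-1}\|=1/\min_{\lambda}|\lambda|$, together with $|\lambda|\leq\|A\|$; hence $\mathrm{spec}(A)\subseteq[-\|A\|,-1/\|A^{-1}\|]\cup[1/\|A^{-1}\|,\|A\|]$.

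The key step is the choice of scalar parameters. I would set $b:=\|A^{-1}\|$, i.e.\ lower cutoff $1/b=1/\|A^{-1}\|$, and upper limit $a:=\|A\|\geq 1/b$. By the scalar construction and the upper-limit-independence just recalled, the discretized series $h$ of Eq.~\eqref{eq:approx-1/x}, which I write as $h(x)=\sum_{i}\alpha_i e^{-ixz_i}$, then satisfies $|h(x)-1/x|\leq\varepsilon$ for all $x\in[-a,-1/b]\cup[1/b,a]$, a set that contains every eigenvalue of $A$. Defining the operator $\sum_i\alpha_i\exp(-iAz_i)$ via functional calculus, I would convert this pointwise control into operator-norm control: both $A^{-1}=\sum_{\lambda}\lambda^{-1}\Pi_{\lambda}$ and $\sum_i\alpha_i\exp(-iAz_i)=\sum_{\lambda}h(\lambda)\Pi_{\lambda}$ act diagonally in the common eigenbasis, so their difference is normal with eigenvalues $\{1/\lambda-h(\lambda)\}$, and therefore
\[
\Big\|A^{-1}-\sum_i\alpha_i\exp(-iAz_i)\Big\|=\max_{\lambda}\Big|\tfrac{1}{\lambda}-h(\lambda)\Big|\leq\varepsilon,
\]
which is exactly Eq.~\eqref{eq:fourier-operator}. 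The parameter estimates \eqref{eq:t-linear-systems} and \eqref{eq:alpha-linear-systems} then follow by substituting $b=\|A^{-1}\|$ into the scalar bounds $t_{max}=\Theta(b\log(b/\varepsilon))$ and $\sum_i|\alpha_i|=\Theta(b\sqrt{\log(b/\varepsilon)})$.

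The only genuinely delicate point is the choice of $b$ in the second step: one must invoke the upper-limit-independence to tie the cost to the spectral distance $1/\|A^{-1}\|$ of the spectrum from zero, rather than to the condition number $\kappa=\|A^{-1}\|\|A\|$ that the naive rescaling $A\mapsto A/\|A\|$ (which would force the whole spectrum into $[-1,1]$) would introduce. Once this is set correctly, everything else is bookkeeping, since functional calculus reduces the operator norm of the matrix approximation error to the supremum of the already-established scalar error over the spectrum.
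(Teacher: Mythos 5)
Your proposal is correct and follows essentially the same route as the paper: the paper likewise fixes the scalar approximation on $D_b=[-a,-1/b]\cup[1/b,a]$ with $b=\|A^{-1}\|$, invokes the independence of the error bound from the upper limit $a$, and then "maps the domain to the spectrum of $A$" — which is precisely the functional-calculus step you spell out explicitly. Your version is in fact slightly more careful than the paper's one-line justification, but the content and the key observation (tying the cost to $\|A^{-1}\|$ rather than to the condition number) are identical.
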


Our result for linear systems (Corollary \ref{cor:linear-systems} in the main text) then follows directly from Theorem \ref{thm:appdx-general-sampling} and Proposition \ref{prop:appdx-norm-sampling} as follows.

\begin{corollary}[Linear systems -- detailed version]\label{cor:appdx-linear-systems}
Consider a Hermitian matrix $A$ with known Pauli decomposition $A=\sum_{\ell}a_{\ell}P_{\ell}\,;\; \lambda:=\sum_{\ell}|a_{\ell}|$. Denote $q$ as a freely chosen normalization parameter. Finally, suppose we have ability to prepare state $\kt{\vec{b}}$ in $\OC(d_{\vec{b}})$ depth.
\begin{itemize}
    \item Given ability to implement $U_{\psi}\kt{0} = \kt{\psi}$ in gate depth $d_{\psi}$, there exists a randomized quantum algorithm that returns $\frac{1}{q}\br{\psi}A^{-1}\kt{\vec{b}}$ up to additive error $\varepsilon$ with probability at least $1-\delta$, utilizing $\OC\left( \log \left(\frac{2}{\delta}\right)\frac{\|A^{-1}\|^2}{\varepsilon^2q^2}\log \left(\frac{\|A^{-1}\|^2}{\varepsilon q^2} \right)\right)$ circuit runs each with gate depth $\OC\left(\|A^{-1}\|^2\lambda^2 \log^2\left(\frac{\|A^{-1}\|^2}{\varepsilon q^2} \right) + d_{\psi} + d_{\vec{b}} \right)$.

    \item Given ability to measure observable $O\,;\; \|O\|\leq 1$, there exists a randomized quantum algorithm that returns $\frac{1}{q^2}\br{\vec{b}}A^{-1}OA^{-1}\kt{\vec{b}}$ up to additive error $\varepsilon$ with probability at least $1-\delta$, utilizing $\OC\left( \log \left(\frac{2}{\delta}\right)\frac{\|A^{-1}\|^4}{\varepsilon^2q^4}\log^2 \left(\frac{\|A^{-1}\|^2}{\varepsilon q^2} \right)\right)$ circuit runs each of gate depth $\OC\left(\|A^{-1}\|^2\lambda^2 \log^2\left(\frac{\|A^{-1}\|^2}{\varepsilon q^2}  \right) + d_{\vec{b}} \right)$.
    
    \item In the case $q=\|A^{-1}\kt{\vec{b}}\|$ and whose value is not given, there exists an auxiliary algorithm that approximates the value of $q$ for the above algorithms with the sample complexity $\OC\left( \log \left(\frac{2}{\delta}\right)\frac{\|A^{-1}\|^4}{\varepsilon^2q^4}\log^2 \left(\frac{\|A^{-1}\|}{\varepsilon q} \right)\right)$ and gate depth $\OC\left(\|A^{-1}\|^2\lambda^2 \log^2\left(\frac{\|A^{-1}\|}{\varepsilon q} \right) + d_{\vec{b}} \right)$.
\end{itemize}
\end{corollary}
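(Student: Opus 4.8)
The plan is to specialize Theorem \ref{thm:appdx-general-sampling} and Proposition \ref{prop:appdx-norm-sampling} to the inverse function $f(A)=A^{-1}$, feeding in the explicit Fourier series of Lemma \ref{lem:fourier-linear-systems}. Since $A$ is invertible the quantity $\|A^{-1}\|$ is finite, so Lemma \ref{lem:fourier-linear-systems} applies and supplies, for any target Fourier accuracy $\tilde{\varepsilon}$, the scalings $\alpha(\tilde{\varepsilon},A)=\Theta\big(\|A^{-1}\|\sqrt{\log(\|A^{-1}\|/\tilde{\varepsilon})}\big)$ and $t_{max}(\tilde{\varepsilon},A)=\Theta\big(\|A^{-1}\|\log(\|A^{-1}\|/\tilde{\varepsilon})\big)$. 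The rest is bookkeeping: substitute these into the complexity formulas of the theorem, taking care that the accuracy passed to the Fourier series is not the target error $\varepsilon$ but a rescaled quantity carrying factors of $q$ and, in the expectation-value case, of $\|A^{-1}\|$.

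For part (a) I would apply Theorem \ref{thm:appdx-general-sampling}(a) under the identifications $\kt{\phi}\mapsto\kt{\psi}$, $\kt{\psi}\mapsto\kt{\vec{b}}$, $d_{\phi}\mapsto d_{\psi}$, $d_{\psi}\mapsto d_{\vec{b}}$, and $f(A)\mapsto A^{-1}$. The theorem fixes the Fourier accuracy at $\tilde{\varepsilon}=\varepsilon q/2$, so $\alpha(\varepsilon q/2,A)^2=\Theta\big(\|A^{-1}\|^2\log(\|A^{-1}\|/\varepsilon q)\big)$ enters the sample complexity and $t_{max}(\varepsilon q/2,A)^2=\Theta\big(\|A^{-1}\|^2\log^2(\|A^{-1}\|/\varepsilon q)\big)$ enters the gate depth; bounding the logarithm by the convenient form $\log(\|A^{-1}\|^2/\varepsilon q^2)$ yields the stated bounds, together with the inherited validity condition $\varepsilon\leq 1/q$. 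For part (b) I apply Theorem \ref{thm:appdx-general-sampling}(b) with $\rho=\dya{\vec{b}}$, $f(A)\mapsto A^{-1}$ and $\|O\|\leq 1$; here the theorem fixes $\tilde{\varepsilon}=\varepsilon q^2/(6\|O\|\|A^{-1}\|)$, so that $\log(\|A^{-1}\|/\tilde{\varepsilon})=\Theta\big(\log(\|A^{-1}\|^2/\varepsilon q^2)\big)$. Substituting into $\|O\|^2\alpha(\tilde{\varepsilon},A)^4/(\varepsilon^2 q^4)$ and $\lambda^2 t_{max}(\tilde{\varepsilon},A)^2$, and using $\|O\|\leq 1$, produces the claimed sample count (two powers of the logarithm, from $\alpha^4$) and gate depth (two powers, from $t_{max}^2$), with validity condition $\varepsilon\leq \|O\|\|A^{-1}\|/q^2$.

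For part (c) the normalization subroutine follows directly from Proposition \ref{prop:appdx-norm-sampling} applied to $q=\|A^{-1}\kt{\vec{b}}\|$, with $c=\|O\|\leq 1$ (or $c=1$ for the overlap case) and $d_{\psi}\mapsto d_{\vec{b}}$. The proposition evaluates the Fourier parameters at accuracy $\varepsilon q/(12c)$, giving $\log(\|A^{-1}\|/\tilde{\varepsilon})=\Theta\big(\log(\|A^{-1}\|/\varepsilon q)\big)$, and substituting into its complexity expressions reproduces the stated sample complexity and gate depth. By Lemmas \ref{lem:appdx-q-approximation} and \ref{lem:appdx-q-approximation2} this auxiliary estimate is precise enough that replacing the exact normalization by its statistical approximation only rescales the target error by a constant factor, so it composes with parts (a) and (b) without changing their asymptotic complexities.

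The only real subtlety I anticipate is tracking the argument of the logarithm through these error rescalings: because $\tilde{\varepsilon}$ absorbs $q$ (and $\|A^{-1}\|$ in case (b)), the factor $\log(\|A^{-1}\|/\tilde{\varepsilon})$ must be expanded and then bounded by the single convenient expression quoted in the statement, and one should verify this bound is an overestimate, which holds since the suppressed constants and the extra $\|A^{-1}\|,q$ factors inside the logarithm only enlarge its argument. Everything else is a direct substitution, so there is no genuinely hard analytic step once Lemma \ref{lem:fourier-linear-systems}, Theorem \ref{thm:appdx-general-sampling}, and Proposition \ref{prop:appdx-norm-sampling} are in hand.
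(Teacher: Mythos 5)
Your proposal matches the paper's own proof essentially line for line: the paper likewise specializes Theorem \ref{thm:appdx-general-sampling} and Proposition \ref{prop:appdx-norm-sampling} to $f(A)=A^{-1}$, plugs in the scalings $\alpha(\tilde{\varepsilon},A)=\Theta\big(\|A^{-1}\|\sqrt{\log(\|A^{-1}\|/\tilde{\varepsilon})}\big)$ and $t_{max}(\tilde{\varepsilon},A)=\Theta\big(\|A^{-1}\|\log(\|A^{-1}\|/\tilde{\varepsilon})\big)$ from Lemma \ref{lem:fourier-linear-systems} at the rescaled accuracies $\varepsilon q$ and $\varepsilon q^2/\|A^{-1}\|$, and invokes the normalization lemmas for part (c). Correct, and no meaningful difference in route.
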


\begin{proof}[Proof of Corollary \ref{cor:linear-systems} (Linear systems)]
In order to quantify the complexity of preparing $\frac{1}{q}\br{\psi}U f(A)V\kt{\psi}$, we now require $\alpha(\varepsilon q,A)$ and $t_{max}(\varepsilon q,A)$, as specified by Eq.~\eqref{eq:thm1-overlap} of Theorem \ref{thm:general-sampling}. Again inspecting Eqs.~\eqref{eq:t-linear-systems} and \eqref{eq:alpha-linear-systems} of Lemma \ref{lem:fourier-linear-systems}, these quantities scale as
\begin{equation}
    t_{max}(\varepsilon q,A)  = \Theta\left(\|A^{-1}\| \log \left(\frac{\|A^{-1}\|}{\varepsilon q} \right) \right)\,,
\end{equation}
and
\begin{equation}
    \alpha(\varepsilon q,A) =  \Theta\left( \|A^{-1}\| \sqrt{\log \left(\frac{\|A^{-1}\|}{\varepsilon q} \right)} \right)\,.
\end{equation}
Now, substituting the above into Eq.~\eqref{eq:thm1-overlap} of Theorem \ref{thm:general-sampling}, we obtain the desired result.

Similarly, in order to quantify the complexity of preparing $\frac{1}{q^2}\Tr\left[f(A)\rho f(A)^{\dag} O\right]$, we need to evaluate the quantities $\alpha\big(\frac{\varepsilon q^2}{\|f(A)\|},A\big)$ and $t_{max}\Big(\frac{\varepsilon q^2}{\|f(A)\|},A\Big)$, as specified by Eq.~\eqref{eq:thm1-expectation} of Theorem \ref{thm:general-sampling}. Inspecting Eqs.~\eqref{eq:t-linear-systems} and \eqref{eq:alpha-linear-systems} of Lemma \ref{lem:fourier-linear-systems}, we see that these quantities satisfy
\begin{equation}
    t_{max}\Big(\frac{\varepsilon q^2}{\|f(A)\|},A\Big) = \Theta\left(\|A^{-1}\| \log \left(\frac{\|A^{-1}\|^2}{\varepsilon q^2} \right) \right)\,,
\end{equation}
and
\begin{equation}
    \alpha\Big(\frac{\varepsilon q^2}{\|f(A)\|},A\Big) = \Theta\left( \|A^{-1}\| \sqrt{\log \left(\frac{\|A^{-1}\|^2}{\varepsilon q^2} \right)} \right)\,.
\end{equation}
Substituting these two expressions into Eq.~\eqref{eq:thm1-expectation} of Theorem \ref{thm:general-sampling}, we obtain the desired result.

Finally, we can use Proposition \ref{prop:norm-sampling} to characterize the complexity of approximating $\|A^{-1}\kt{\vec{b}}\|$. Eq.~\eqref{eq:norm-complexity} of Proposition \ref{prop:norm-sampling} is written in terms of $\alpha(\varepsilon q,A)$ and $t_{max}(\varepsilon q,A)$ whose scalings we have already quoted above. Substituting this into Eq.~\eqref{eq:norm-complexity}, we obtain the result in Corollary \ref{cor:linear-systems}.
\end{proof}

\begin{remark}[Non-Hermitian matrices]
Any non-Hermitian matrix $B$ can be embedded in a larger Hermitian matrix $A$ with the aid of a single qubit as 
\begin{equation}
    A = \begin{bmatrix}
0 & B \\
B^{\dag} & 0 
\end{bmatrix} \,.
\end{equation}
Then, one can verify that $A^{-1}\begin{pmatrix}\vec{b}\\ 0 \end{pmatrix} = B^{-1}\vec{b}$, $\|A\| = \|B\|$ and $ \|A^{-1}\| = \|B^{-1}\|$. In Pauli representation, given respective Hermitian and anti-Hermitian components $H(B)$ and $iH_2(B)$ of $B$ such that $B = H_1(B) + i H_2(B)$, this embedding can be explicitly written as
\begin{equation}
    A = X \otimes H_1(B) - Y \otimes H_2(B) = \frac{1}{2}X \otimes (B + B^{\dag}) - \frac{1}{2}Y \otimes (B - B^{\dag}) \,.
\end{equation}
From this it is clear that the Pauli weight of $A$ is bounded as $\lambda  \leq 2\lambda_B$.
\end{remark}

Using the above remark we see that the asymptotic complexities for the linear systems problem as stated in $A$ can be simply be translated to $B$ via the substitution $\lambda  \rightarrow \lambda_B$ and $\|A^{-1}\| \rightarrow \|B^{-1}\|$.

%%%%%%%%%%%%%%%%%%%%%%%%%%%%%%%%%%%%%%%%%%%%%%%%%%%%%%%%%%%%%%%%%%%%%%

\subsection{Ground state sampling}\label{sec:appdx-gs}

In this section we study the task of sampling properties of the ground state of a given Hamiltonian. In order to establish our result the relevant function we consider is the Gaussian function $e^{-\frac{1}{2}\tau^2x^2}$. We refer to the following lemmas which show how this can obtain approximations of ground state observables.

\begin{lemma}[Ground state projection - adapted from Keen et al.~\cite{keen2021quantum}]\label{lem:appdx-GS-projector}
Suppose we have Hamiltonian $H=\sum_l E_l \dya{E_l}$ with all eigenvalues $E_l \geq 0$. Assume the spectral gap is lower bounded by $\Delta \leq E_1 - E_0$. Additionally, we suppose that we have an initial trial state $\kt{\psi_0}$ with overlap with the ground state $\gamma:= |\braket{\psi_0}{E_0}|$. Then, the state $\kt{\psi}=\kt{\widetilde{\psi}}/\|\kt{\widetilde{\psi}}\|$, where $\kt{\widetilde{\psi}} = e^{-\frac{1}{2}\tau^2H^2}\kt{\psi_0}$, satisfies
\begin{equation}
    1-|\braket{\psi}{E_0}| \leq \frac{1}{2}\varepsilon^2\,,
\end{equation}
for any $\tau\geq \tau_{\varepsilon}$ where $\tau_{\varepsilon}$ satisfies
\begin{equation}
    \tau_{\varepsilon} = \frac{1}{\Delta}\sqrt{2 \log\frac{1}{\varepsilon\gamma}}\,.
\end{equation}
\end{lemma}

With the following lemma we can see how this affects the closeness of expectation values.

\begin{lemma}[Tightness of expectation values, pt.~3]\label{lem:pure-expecation-overlap}
For some operator $O$ and pure states $\kt{\psi}$, $E_0$, we have
\begin{align}
    \Tr\big[\dya{\psi}O\big] - \Tr\big[\dya{E_0}O\big] \leq 2\sqrt{2}\|O\| \sqrt{1-|\braket{\psi}{E_0}|} \,.
\end{align}
\end{lemma}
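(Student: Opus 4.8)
The plan is to bound the signed difference of expectation values by the trace distance between the two pure states, and then rewrite that trace distance in the form demanded by the statement. First I would combine the two traces into a single one against the difference of the rank-one projectors,
\begin{align}
    \Tr\big[\dya{\psi}O\big] - \Tr\big[\dya{E_0}O\big] = \Tr\big[(\dya{\psi} - \dya{E_0})O\big]\,,
\end{align}
and then apply the tracial matrix H\"older inequality (exactly as in the proofs of Lemmas \ref{lem:appdx-expectation-overlap} and \ref{lem:appdx-expectation-overlap2}) to get
\begin{align}
    \big|\Tr\big[(\dya{\psi} - \dya{E_0})O\big]\big| \leq \|O\|_{\infty}\, \big\|\dya{\psi} - \dya{E_0}\big\|_{1}\,.
\end{align}
Since this controls the absolute value, it in particular bounds the one-sided difference in the claim.

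The key step is the standard identity for the trace distance between two pure states, $\big\|\dya{\psi} - \dya{E_0}\big\|_{1} = 2\sqrt{1 - |\braket{\psi}{E_0}|^{2}}$. I would justify this by decomposing $\kt{\psi}$ into a component along $\kt{E_0}$ and a unit component orthogonal to it, so that $\dya{\psi} - \dya{E_0}$ is supported on the two-dimensional span of $\{\kt{E_0},\kt{\psi}\}$; diagonalizing the resulting $2\times 2$ traceless block gives eigenvalues $\pm\sqrt{1 - |\braket{\psi}{E_0}|^{2}}$, whose absolute values sum to the stated trace norm. (If one prefers to avoid the explicit diagonalization, the same bound follows from the Fuchs--van de Graaf relation between trace distance and fidelity.)

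Finally I would convert $\sqrt{1 - |\braket{\psi}{E_0}|^{2}}$ into $\sqrt{1 - |\braket{\psi}{E_0}|}$. Writing $f := |\braket{\psi}{E_0}| \in [0,1]$ and factoring $1 - f^{2} = (1-f)(1+f) \leq 2(1-f)$ gives $\sqrt{1 - f^{2}} \leq \sqrt{2}\,\sqrt{1-f}$, hence
\begin{align}
    \Tr\big[\dya{\psi}O\big] - \Tr\big[\dya{E_0}O\big] \leq 2\|O\|\sqrt{1 - f^{2}} \leq 2\sqrt{2}\,\|O\|\sqrt{1 - |\braket{\psi}{E_0}|}\,,
\end{align}
which is the desired inequality. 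The argument is essentially routine; the only point needing any care is the pure-state trace-distance identity, and even that is a textbook computation, so I do not anticipate a genuine obstacle here.
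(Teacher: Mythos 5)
Your proposal is correct and follows essentially the same route as the paper's proof: tracial H\"older to reduce to the trace distance, the pure-state identity $\|\dya{\psi}-\dya{E_0}\|_1 = 2\sqrt{1-|\braket{\psi}{E_0}|^2}$, and the factorization $1-f^2=(1-f)(1+f)\leq 2(1-f)$. The only difference is that you spell out the derivation of the trace-distance identity, which the paper simply invokes.
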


\begin{proof}
We have
\begin{align}
    \Tr\big[\dya{\psi}O\big] - \Tr\big[\dya{E_0}O\big] &\leq  \|O\|_{\infty} \left\| \dya{\psi}-\dya{E_0} \right\|_1 \\
    &\leq 2\|O\|_{\infty} \sqrt{1-|\braket{\psi}{E_0}|^2} \\
    &= 2\|O\|_{\infty} \sqrt{\left(1-|\braket{\psi}{E_0}|\right)\left(1+|\braket{\psi}{E_0}|\right)} \\
    &\leq 2\sqrt{2}\|O\|_{\infty} \sqrt{1-|\braket{\psi}{E_0}|}\,,
\end{align}
where in the first line we use the tracial matrix H\"older's inequality, in the second line we use the relation between the trace distance and the fidelity for pure states, in the third line we complete the square, and in the final line we use the fact that $|\braket{\psi}{E_0}|\leq 1$.
\end{proof}

\begin{lemma}[Ground state observable projection]\label{cor:GS-projector}
Under the conditions specified in Lemma \ref{lem:appdx-GS-projector} with $\tau \geq \tau_{\varepsilon}$, the normalized state $\kt{\psi} = e^{-\frac{1}{2}\tau^2H^2}\kt{\psi_0}/\|e^{-\frac{1}{2}\tau^2H^2}\kt{\psi_0}\|$ satisfies
\begin{equation}
    \Tr\big[\dya{\psi}O\big] - \Tr\big[\dya{E_0}O\big] \leq 2\|O\| \varepsilon\,,
\end{equation}
for any measurement operator $O$.
\end{lemma}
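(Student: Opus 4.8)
The plan is to chain the two preceding lemmas directly, with a single square-root bookkeeping step in between; no new estimate is required. First I would invoke Lemma~\ref{lem:GS-projector}. Since the hypothesis fixes $\tau \geq \tau_{\varepsilon}$, that lemma guarantees that the normalized projected state $\kt{\psi} = e^{-\frac{1}{2}\tau^2H^2}\kt{\psi_0}/\|e^{-\frac{1}{2}\tau^2H^2}\kt{\psi_0}\|$ has infidelity with the true ground state controlled by $1 - |\braket{\psi}{E_0}| \leq \tfrac{1}{2}\varepsilon^2$. This is the only place where the spectral-gap lower bound $\Delta$, the overlap $\gamma$, and the choice of $\tau$ enter the argument.

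Next I would feed this fidelity bound into Lemma~\ref{lem:pure-expecation-overlap}, which for any operator $O$ bounds the difference of expectation values by $\Tr[\dya{\psi}O] - \Tr[\dya{E_0}O] \leq 2\sqrt{2}\,\|O\|\,\sqrt{1 - |\braket{\psi}{E_0}|}$. Taking square roots in the bound from the first step gives $\sqrt{1 - |\braket{\psi}{E_0}|} \leq \varepsilon/\sqrt{2}$, and substituting produces $2\sqrt{2}\,\|O\|\cdot \varepsilon/\sqrt{2} = 2\|O\|\varepsilon$, which is exactly the claimed bound. I would write this as a two-line chain of inequalities referencing the two lemmas in turn.

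There is essentially no obstacle here, since all the analytic content lives in the two cited lemmas (the Gaussian ground-state projection guarantee and the trace-distance/fidelity comparison of expectation values). The only thing to watch is the constant bookkeeping: the factor $2\sqrt{2}$ coming from the $\|\dya{\psi}-\dya{E_0}\|_1$ bound must cancel against the $1/\sqrt{2}$ produced by the $\tfrac{1}{2}\varepsilon^2$ infidelity estimate, leaving the clean prefactor $2$. Everything else is immediate.
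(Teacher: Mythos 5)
Your proposal is correct and is exactly the paper's argument: the paper proves this lemma by citing Lemmas \ref{lem:GS-projector} and \ref{lem:pure-expecation-overlap} as a direct implication, and your constant bookkeeping ($2\sqrt{2}\cdot\varepsilon/\sqrt{2}=2\varepsilon$) is right.
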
  

\begin{proof}
    This follows as a direct implication of Lemmas \ref{lem:appdx-GS-projector} and \ref{lem:pure-expecation-overlap}.
\end{proof}

We now introduce the Hubbard-Stratonovich transformation \cite{stratonovich1957method, hubbard1959calculation}. This gives us a way to decompose the operator $e^{-\frac{1}{2} \tau^2 {H}^2}$ into a linear combination of implementable unitaries. It states that, for Hermitian $H$, we have
\begin{equation}
    e^{-\frac{1}{2} \tau^2 {H}^2}=\frac{1}{\sqrt{2 \pi}} \int_{-\infty}^{\infty} d z e^{-\frac{1}{2} z^2} e^{-i z \tau {H}}\,.
\end{equation}

The following lemma, adapted from Ref.~\cite{keen2021quantum}, shows that the integral in the  Hubbard-Stratonovich transformation can be discretized and truncated to give an approximate (discrete) Fourier series for the operator $e^{-\frac{1}{2} \tau^2 {H}^2}$.

\begin{lemma}[Approximate Hubbard-Stratonovich transformation -- adapted from Appendix A2 of Ref.~\cite{keen2021quantum}]\label{lem:fourier-gs}
Truncating and discretizing the integral, and assuming $\|H\|\leq 1$, we have a Fourier series approximation
\begin{equation}
    \left\|\frac{1}{\sqrt{2 \pi}} \sum_{k=-N_z}^{N_z} \Delta_z e^{-\frac{1}{2} z_k^2} e^{-i z_k \tau {H}} - e^{-\frac{1}{2} \tau^2 {H}^2} \right\| \leq \varepsilon\,,
\end{equation}
for choice of $\Delta_z = \OC\left(\tau^{-1}\right)$ and $\Delta_z N_z = \sqrt{2\log(\frac{2}{\varepsilon})}$, where we have denoted $z_k=k\Delta_z$. This has maximum time evolution parameter
\begin{align}\label{eq:t-gs}
    t_{\max}(\varepsilon) := \max_k{(z_k\tau)} = \Delta_z N_z \tau = \tau\sqrt{2\log\left(\frac{2}{\varepsilon}\right)}\,.
\end{align}
Further, the coefficients $\{\alpha_k(\varepsilon)\}_k=\{ \frac{1}{\sqrt{2\pi}}\Delta_z e^{-\frac{1}{2} z_k^2} \}_k$ have weight 
\begin{equation}\label{eq:alpha-gs}
    \alpha(\varepsilon):=\sum_k|\alpha_k(\varepsilon)| \leq 1 + \frac{1}{\sqrt{2\pi}}\Delta_z \,.
\end{equation}
%under the assumption $\varepsilon \leq \frac{1}{\gamma}e^{-\Delta^2/2\pi}$. %$\frac{\Delta}{\sqrt{\log(\frac{1}{\gamma\varepsilon})}} = \OC(1)$.
\end{lemma}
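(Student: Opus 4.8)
The plan is to start from the exact Hubbard--Stratonovich identity $e^{-\frac{1}{2}\tau^2H^2} = \frac{1}{\sqrt{2\pi}}\int_{-\infty}^{\infty}dz\,e^{-\frac{1}{2}z^2}e^{-iz\tau H}$ and control the approximation error in two separate pieces via the triangle inequality: a \emph{truncation} error from restricting the integral to $[-Z,Z]$ with $Z := N_z\Delta_z$, and a \emph{discretization} error from replacing the truncated integral by the Riemann sum with nodes $z_k = k\Delta_z$. Writing $f(z) := \frac{1}{\sqrt{2\pi}}e^{-\frac{1}{2}z^2}e^{-iz\tau H}$, I would bound
\begin{equation}
\Big\| e^{-\frac{1}{2}\tau^2H^2} - \tfrac{1}{\sqrt{2\pi}}\sum_{k=-N_z}^{N_z}\Delta_z e^{-\frac{1}{2}z_k^2}e^{-iz_k\tau H} \Big\| \leq \Big\| \int_{|z|>Z} f(z)\,dz \Big\| + \Big\| \int_{-Z}^{Z} f(z)\,dz - \sum_{k=-N_z}^{N_z}\Delta_z\,f(z_k) \Big\|,
\end{equation}
and make each term at most $\varepsilon/2$.

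For the truncation term the key observation is that $e^{-iz\tau H}$ is unitary since $H$ is Hermitian, so $\|f(z)\| \leq \frac{1}{\sqrt{2\pi}}e^{-\frac{1}{2}z^2}$ pointwise and the operator-norm error collapses to a scalar Gaussian tail, $\frac{1}{\sqrt{2\pi}}\int_{|z|>Z}e^{-\frac{1}{2}z^2}dz \leq e^{-Z^2/2}$, using the standard bound $\frac{1}{\sqrt{2\pi}}\int_Z^\infty e^{-z^2/2}dz \leq \frac{1}{2}e^{-Z^2/2}$. Setting $Z = \Delta_z N_z = \sqrt{2\log(2/\varepsilon)}$ makes this at most $\varepsilon/2$, and simultaneously fixes the stated maximum time parameter $t_{\max} = Z\tau = \tau\sqrt{2\log(2/\varepsilon)}$.

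For the discretization term I would use a first-order quadrature estimate: on each node interval the error is controlled by $\|f(z) - f(z_k)\| \leq \Delta_z\sup\|f'\|$, and differentiating gives $\|f'(z)\| \leq \frac{1}{\sqrt{2\pi}}e^{-\frac{1}{2}z^2}(|z| + \tau\|H\|) \leq \frac{1}{\sqrt{2\pi}}e^{-\frac{1}{2}z^2}(|z|+\tau)$ by $\|H\|\leq 1$. Summing the per-interval contributions over the $\OC(Z/\Delta_z)$ nodes bounds the discretization error by $\OC(\Delta_z(1+\tau))$ (after comparison of the resulting upper Riemann sum with $\int\|f'\| = \OC(1+\tau)$), so choosing $\Delta_z$ small enough --- scaling as $\tau^{-1}$ in the large-$\tau$ regime of interest --- forces this term below $\varepsilon/2$ as well. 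I expect this discretization estimate to be the main obstacle, both because the operator-valued quadrature bound needs the $\|H\|\leq 1$ hypothesis to tame the oscillatory factor, and because reconciling the clean step-size claim $\Delta_z = \OC(\tau^{-1})$ with an $\varepsilon$-level error requires absorbing the (mild, polylogarithmic) $\varepsilon$-dependence into the constant; this is harmless downstream, since the algorithm only samples from the coefficients, so $\Delta_z$ may be taken arbitrarily small without affecting $t_{\max}$ or the weight.

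Finally, for the weight bound $\alpha(\varepsilon) = \frac{1}{\sqrt{2\pi}}\sum_{k}\Delta_z e^{-\frac{1}{2}z_k^2}$, I would isolate the central $k=0$ term (contributing $\Delta_z/\sqrt{2\pi}$) and bound the two monotone tails by integrals: since $e^{-z^2/2}$ is decreasing for $z>0$, $\Delta_z e^{-z_k^2/2} \leq \int_{z_{k-1}}^{z_k}e^{-z^2/2}dz$ for $k\geq 1$, and summing gives $\sum_{k\geq 1}\Delta_z e^{-z_k^2/2} \leq \int_0^\infty e^{-z^2/2}dz = \frac{1}{2}\sqrt{2\pi}$, with the symmetric statement for $k\leq -1$. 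Combining yields $\sum_k \Delta_z e^{-z_k^2/2} \leq \Delta_z + \sqrt{2\pi}$, i.e.\ $\alpha(\varepsilon) \leq 1 + \frac{\Delta_z}{\sqrt{2\pi}}$, as claimed.
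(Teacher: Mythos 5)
Your proof is correct, and it is considerably more self-contained than the paper's. The paper only proves the weight bound in-house --- by extending the sum over $k$ to infinity and observing that a discretized Gaussian integral equals $1$ up to a discretization error of at most the step size times the maximum value $1/\sqrt{2\pi}$ of the integrand --- and defers the truncation/discretization analysis of the operator-valued integral, and hence the claim on $t_{\max}$, entirely to Appendix A2 of Ref.~\cite{keen2021quantum}. You instead supply that analysis directly: the observation that unitarity of $e^{-iz\tau H}$ collapses the truncation error to a scalar Gaussian tail (giving $e^{-Z^2/2}=\varepsilon/2$ at $Z=\sqrt{2\log(2/\varepsilon)}$, which also pins down $t_{\max}$) is exactly the right mechanism, and your monotone tail-vs-integral comparison for the weight is a fully rigorous version of the paper's one-line estimate, landing on the same $1+\Delta_z/\sqrt{2\pi}$. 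The one place your route is genuinely weaker than the cited source is the discretization step: a first-order quadrature bound forces $\Delta_z=\OC(\varepsilon/\tau)$, whereas the $\varepsilon$-free claim $\Delta_z=\OC(\tau^{-1})$ requires a sharper aliasing-type estimate (e.g.\ Poisson summation, under which the error is exponentially small once $2\pi/\Delta_z$ exceeds $\tau\|H\|$ by a logarithmic margin). You correctly flag this and note it is harmless downstream --- $\Delta_z$ enters neither $t_{\max}$ nor the quantum gate depth, only the classical sampling distribution, and shrinking it only tightens the weight bound --- which mirrors the remark the paper itself makes about the analogous discretization in its linear-systems lemma.
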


\begin{proof}
The bound on $\alpha(\varepsilon)$ can be seen by noting that 
\begin{equation}
    \sum_k|\alpha_k(\varepsilon)| = \frac{1}{\sqrt{2 \pi}} \sum_{k=-N_z}^{N_z} \Delta_z e^{-\frac{1}{2} z_k^2} \leq \frac{1}{\sqrt{2 \pi}} \sum_{k=-\infty}^{\infty} \Delta_z e^{-\frac{1}{2} z_k^2}\,,
\end{equation}
is simply a discretized Gaussian integral, where the discretization error can be bounded by the step size $\Delta_z$ multiplied by the maximum value of the function $\frac{1}{\sqrt{2 \pi}}$. We refer the reader to Appendix A2 of Ref.~\cite{keen2021quantum} for the rest of the proof of the claim on $t_{max}(\varepsilon)$.
\end{proof}

From hereon we will make the soft imposition that we choose $\Delta_z \leq 1$. This gives an error-independent bound for Lemma \ref{lem:fourier-gs} of $\alpha(\varepsilon) \leq 1 + \frac{1}{\sqrt{2\pi}}< 1.4$.

\begin{remark}[Hamiltonians with non-positive spectra]
For Hamiltonians with non-positive spectra and where we have upper bound on magnitude of ground state energy $\lambda_0 \geq |E_0|$, we can shift the spectrum $H' = H + E_0 \id$ without changing the weight of the coefficients or the Hamiltonian simulation problem. This is because in the Hubbard-Stratonovich transformation the Hamiltonian appears in the term $e^{-i z_k \tau {H'}} = e^{-i z_k \tau {H}}e^{-i z_k \tau E_0{\id}} = e^{-i z_k \tau E_0}e^{-i z_k \tau {H}}$ and the phase factor $e^{-i z_k \tau E_0}$ can be absorbed into the coefficients $\alpha_k$ without changing the weight $\alpha$. Further, the eigenstates of $H'$ are clearly eigenstates of $H$, with shifted eigenenergies.
\end{remark}

We can now present our result for ground state property estimation (Corollary \ref{cor:ground-state} in the main text).

\begin{corollary}[Ground state property estimation - detailed version]\label{cor:appdx-ground-state}
Consider a Hamiltonian $H=\sum_l E_l \dya{E_l}$ with all eigenvalues $E_l \geq 0$, and known Pauli decomposition $H=\sum_{\ell}a_{\ell}P_{\ell}$ with $\lambda:=\sum_{\ell}|a_{\ell}|$. Assume the spectral gap is lower bounded by $\Delta \leq E_1 - E_0$. Additionally, we suppose that we have an initial trial state $\kt{\psi_0}$ with overlap with the ground state $\gamma:= |\braket{\psi_0}{E_0}|$. Finally, we assume that $E_0\leq {\Delta}/\sqrt{2 \log\frac{\|O\|}{\varepsilon\gamma}}$. Then, given ability to measure observable $O$, there exists a random algorithm that returns $\br{E_0}O\kt{E_0}$ up to additive error $\varepsilon$ with probability at least $(1-\delta)^2$, which consists of 
\begin{itemize}
    \item A core routine that requires $\OC\left(\log\left(\frac{2}{\delta}\right) \frac{\|O\|^2}{\varepsilon^2\gamma^4}\right)$ circuit runs of the form in Figure \ref{fig:circuits}(a) each of non-Clifford depth at most $\OC\left(\frac{\lambda^2}{\Delta^2}\log^2\left(\frac{\|O\|}{\varepsilon\gamma^2}\right) \right)$.
    \item A subroutine that approximates the normalization constant, using asmyptotically equivalent resources to the core routine.
\end{itemize}

\end{corollary}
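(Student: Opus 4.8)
The plan is to recognize the target $\br{E_0}O\kt{E_0}$ as a normalized expectation value to which Theorem~\ref{thm:appdx-general-sampling}(b) and Proposition~\ref{prop:appdx-norm-sampling} apply directly, with the Gaussian $f(H)=e^{-\frac12\tau^2H^2}$ playing the role of the approximate ground-state projector. Writing $\rho=\ketbraq{\psi_0}$ and $q=\|f(H)\kt{\psi_0}\|$, the approximately-projected state is $\ketbraq{\psi}$ with $\kt{\psi}=f(H)\kt{\psi_0}/q$, so that $\Tr[\ketbraq{\psi}O]=\frac{1}{q^2}\Tr[f(H)\rho f(H)^\dagger O]$ since $f(H)$ is Hermitian. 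Lemmas~\ref{lem:GS-projector}, \ref{lem:pure-expecation-overlap} and \ref{cor:GS-projector} bound the \emph{projection} error $|\Tr[\ketbraq{\psi}O]-\br{E_0}O\kt{E_0}|$ by $2\|O\|\varepsilon'$ whenever $\tau\geq \tau_{\varepsilon'}=\frac{1}{\Delta}\sqrt{2\log\frac{1}{\varepsilon'\gamma}}$; choosing $\varepsilon'=\Theta(\varepsilon/\|O\|)$ so this contributes $O(\varepsilon)$ fixes $\tau=\Theta\big(\frac{1}{\Delta}\sqrt{\log(\|O\|/\varepsilon\gamma)}\big)$.

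Next I would invoke the observable-estimation branch (Theorem~\ref{thm:appdx-general-sampling}(b), Algorithm~\ref{alg:observable}) on $f(H)$, using the explicit Fourier series of Lemma~\ref{lem:fourier-gs}, whose $\ell_1$-weight is a constant $\alpha=\OC(1)$ (under the soft choice $\Delta_z\leq 1$) and whose maximal time parameter is $t_{\max}(\tilde\varepsilon)=\tau\sqrt{2\log(2/\tilde\varepsilon)}$. The internal Fourier-truncation error $\tilde\varepsilon$ is set, as prescribed in Theorem~\ref{thm:appdx-general-sampling}(b), to $\tilde\varepsilon=\Theta(\varepsilon q^2/(\|O\|\,\|f(H)\|))$ so that its contribution to the total error is at most $\varepsilon/2$; since $\|f(H)\|\leq 1$ this only enters the logarithmic factors. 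Substituting $\alpha=\OC(1)$ and this $t_{\max}$ into Eq.~\eqref{eq:thm1-expectation} yields gate depth $\widetilde{\OC}(\lambda^2 t_{\max}^2)=\widetilde{\OC}\big(\frac{\lambda^2}{\Delta^2}\log^2(\|O\|/(\varepsilon\gamma^2))\big)$, matching the claim, and a sample count $\widetilde{\OC}(\log(1/\delta)\|O\|^2/(\varepsilon^2 q^4))$ that still carries the unknown $q$.

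The step I expect to be the crux is controlling $q$, which is what converts the $1/q^4$ dependence into the stated $1/\gamma^4$. Expanding in the eigenbasis,
\[
q^2=\sum_l |\braket{E_l}{\psi_0}|^2\, e^{-\tau^2 E_l^2}\;\geq\; \gamma^2\, e^{-\tau^2 E_0^2}\,.
\]
With $\tau=\Theta\big(\frac{1}{\Delta}\sqrt{\log(\|O\|/\varepsilon\gamma)}\big)$, the hypothesis $E_0\leq \Delta/\sqrt{2\log(\|O\|/\varepsilon\gamma)}$ forces $\tau E_0=\OC(1)$ and hence $q^2=\Omega(\gamma^2)$; this is precisely the role played by the assumed upper bound on $E_0$. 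Feeding $q^2=\Omega(\gamma^2)$ back into the sample count gives $\CC^O_{\emph{sample}}=\widetilde{\OC}(\log(1/\delta)\|O\|^2/(\varepsilon^2\gamma^4))$. I would then estimate $q$ itself via Proposition~\ref{prop:appdx-norm-sampling}, whose required precision on $q^2$ is dictated by Lemma~\ref{lem:appdx-q-approximation} (namely $|Q^{(M_q)}-q^2|\lesssim \varepsilon q^2/\|O\|$); the same substitutions $\alpha=\OC(1)$, $q=\Omega(\gamma)$ and $t_{\max}$ as above show this subroutine costs asymptotically the same as the core routine. Requiring both independent routines to succeed, each with probability at least $1-\delta$, gives joint success probability $(1-\delta)^2$.

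Finally, the extension dropping positivity of the spectrum follows from the shift-invariance Remark: replacing $H$ by a shift $H'$ that relocates the ground energy leaves the Hubbard--Stratonovich simulation problem (and hence $\alpha$, $t_{\max}$ and $\lambda$) unchanged up to phases absorbed into the Fourier coefficients, so the same bounds hold verbatim once a lower bound on $E_0$ is supplied in place of positivity. The main difficulty throughout is bookkeeping rather than any single hard estimate: the total error $\varepsilon$ must be apportioned among the projection, Fourier-truncation, statistical, and normalization contributions, and the nested dependence of $\tau$, $\tilde\varepsilon$ and the normalization precision on $\varepsilon$, $\gamma$ and $\|O\|$ must be tracked carefully to land exactly the stated logarithmic factors.
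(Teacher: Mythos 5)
Your proposal is correct and follows essentially the same route as the paper's proof: the Gaussian projector $e^{-\frac12\tau^2H^2}$ with $\tau=\Theta\big(\tfrac{1}{\Delta}\sqrt{\log(\|O\|/\varepsilon\gamma)}\big)$, the $\OC(1)$-weight Hubbard--Stratonovich Fourier series of Lemma~\ref{lem:fourier-gs}, the lower bound $q\geq\gamma e^{-\frac12\tau^2E_0^2}=\Omega(\gamma)$ obtained from the hypothesis $E_0\tau=\OC(1)$, and the normalization subroutine via Proposition~\ref{prop:appdx-norm-sampling} with the precision dictated by Lemma~\ref{lem:appdx-q-approximation}. The only differences are in how the error budget is apportioned among the projection, truncation, statistical and normalization contributions, which affects constants but not the stated asymptotics.
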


\begin{proof}[Proof of Corollary \ref{cor:ground-state}]
We first presume the state normalization constant $q=\|e^{-\frac{1}{2} \tau^2 {H}^2}\kt{\psi_0}\|$ is given to us exactly, and use Lemma \ref{cor:GS-projector} to show how to obtain an $\varepsilon/6$-additive statistical approximation to $\frac{1}{q^2}\br{\psi_0}e^{-\frac{1}{2} \tau^2 {H}^2}O e^{-\frac{1}{2}\tau^2 {H}^2} \kt{\psi_0}$. We then use Proposition \ref{prop:appdx-norm-sampling} to consider the overhead of approximating $q$, which will relax the approximation error to $\varepsilon/2$. This gives the desired $\varepsilon$-approximation to $\br{E_0}O\kt{E_0}$ for appropriate choice of $\tau$ as specified by Lemma \ref{lem:appdx-GS-projector}.

In order to quantify the complexity of approximating $\frac{1}{q^2}\br{\psi_0}e^{-\frac{1}{2} \tau^2 {H}^2}O e^{-\frac{1}{2}\tau^2 {H}^2} \kt{\psi_0}$, we need to evaluate the quantities $\alpha\big(\frac{\varepsilon q^2}{\|O\|\|e^{-\frac{1}{2}\tau^2 {H}^2}\|},H\big)$ and $t_{max}\Big(\frac{\varepsilon q^2}{\|O\|\|e^{-\frac{1}{2}\tau^2 {H}^2}\|},H\Big)$, as specified by Eq.~\eqref{eq:thm1-expectation} of Theorem \ref{thm:general-sampling}, {and upper bound $1/q$. We recall from the above discussion, by choosing that the step size $\Delta_z$ in the Fourier series in Lemma \ref{lem:fourier-gs} be less than 1 we fix $\alpha\big(\frac{\varepsilon q^2}{\|O\|\|e^{-\frac{1}{2}\tau^2 {H}^2}\|},H\big)=\OC(1)$. Moreover, we have $\frac{1}{q}\leq \frac{1}{\gamma}e^{\frac{1}{2} \tau^2 {E_0}^2}\leq \frac{2}{\gamma}$ for $E_0 \tau \leq 1$ as $e^{-\frac{1}{2} \tau^2 {E_0}^2} \geq 1-\frac{1}{2}\tau^2 {E_0}^2 \geq \frac{1}{2}$. Therefore,} the sample complexity can be evaluated simply as
\begin{align}
    {\CC}^O_{\emph{sample}} = \OC\left( \log \left(\frac{2}{\delta}\right)\frac{\|O\|^2}{\varepsilon^2 q^4} \right) = \OC\left( \log \left(\frac{2}{\delta}\right)\frac{\|O\|^2}{\varepsilon^2 \gamma^4} \right)\,.
\end{align}

{For the gate complexity,} using Eq.~\eqref{eq:t-gs} of Lemma \eqref{lem:fourier-gs} we have
\begin{align}
    t_{max}\Big(\frac{\varepsilon q^2}{\|O\|\|e^{-\frac{1}{2} \tau^2 {H}^2}\|},H\Big) &= \tau\sqrt{\log\left(\frac{\|O\|\|e^{-\frac{1}{2} \tau^2 {H}^2}\|}{\varepsilon q^2}\right)} \\
    & \leq \tau\sqrt{\log\left(\frac{\|O\|}{\varepsilon \gamma^2}\right)} \,,
\end{align}
where in the second line we have again used the fact that $\frac{1}{q}\leq \frac{1}{\gamma}e^{\frac{1}{2} \tau^2 {E_0}^2}$. 
%We recall that by choosing that the step size $\Delta_z$ in the Fourier series in Lemma \ref{lem:fourier-gs} be less than 1, we fix $\alpha(\varepsilon)=\OC(1)$ for any $\varepsilon$. 
Suppose we would like the exact approximation error to be $\varepsilon/2$, that is, $\big|\frac{1}{q^2}\br{\psi_0}e^{-\frac{1}{2} \tau^2 {H}^2}O e^{-\frac{1}{2}\tau^2 {H}^2}\kt{\psi_0} - \br{E_0}O\kt{E_0}\big|\leq \varepsilon/2$. Lemma \ref{cor:GS-projector} specifies that, in order to satisfy this, it is sufficient to have $\tau= \tau_{\varepsilon/2\|O\|} = \frac{1}{\Delta}\sqrt{2 \log\frac{2\|O\|}{\varepsilon\gamma}}$. Thus, overall we have
\begin{align}
    t_{max}\Big(\frac{\varepsilon q^2}{\|O\|\|e^{-\frac{1}{2} \tau^2 {H}^2}\|},H\Big) \leq \frac{1}{\Delta}\sqrt{2 \log\left(\frac{2\|O\|}{\varepsilon\gamma} \right)}\sqrt{\log\left(\frac{\|O\|}{\varepsilon \gamma^2}\right)} = \OC\left(\frac{1}{\Delta}\log\left(\frac{\|O\|}{\varepsilon \gamma^2}\right)\right)\,,
\end{align}
where we have used the fact that $\gamma \leq 1$. Substituting this into Eq.~\eqref{eq:thm1-expectation} of Theorem \ref{thm:appdx-general-sampling} we obtain the stated result for non-Clifford gate complexity.

Finally, we consider the complexity of approximating the state norm $q=\|e^{-\frac{1}{2} \tau_{\varepsilon/\|O\|}^2 {H}^2}\kt{\psi_0}\|$ by following Proposition \ref{prop:appdx-norm-sampling}. Eq.~\eqref{eq:norm-complexity} of Proposition \ref{prop:appdx-norm-sampling} expresses complexities in terms of $\alpha(\varepsilon q, H)$ and $t_{max}(\varepsilon q, H)$. From the above discussion we have that $\alpha(\varepsilon q, H)=\OC(1)$. Using Eq.~\eqref{eq:t-gs} and Corollary \ref{cor:ground-state} we can assign
\begin{align}
    t_{max}(\varepsilon q, H) &= \tau \sqrt{2\log\left(\frac{2}{\varepsilon q}\right)} \\
    &= \frac{1}{\Delta}\sqrt{2 \log\frac{\|O\|}{\varepsilon\gamma}}\sqrt{2\log\left(\frac{2}{\varepsilon q}\right)} \\
    & = \OC\left(\frac{1}{\Delta} \sqrt{ \log\frac{\|O\|}{\varepsilon\gamma} \log\frac{4}{\varepsilon\gamma}} \right) \,,
\end{align}
where in the final line we have used the fact that $\frac{1}{q}\leq \frac{2}{\gamma}$. This leads to the stated complexities for the normalization subroutine of 
\begin{equation}
    \CC^{norm}_{gate} = \OC\left(\log\left(\frac{2}{\delta}\right) \frac{1}{\gamma^4\varepsilon^2}\right)\,,\quad {\CC}^{norm}_{sample} =\OC\left(\frac{\lambda^2}{\Delta^2}\log\frac{\|O\|}{\varepsilon\gamma} \log\frac{4}{\varepsilon\gamma}\right)\,.
\end{equation}

\end{proof}

%%%%%%%%%%%%%%%%%%%%%%%%%%%%%%%

\subsubsection{Power method}\label{appdx:power-method}

In this section we explore the feasibility of using a randomized scheme based on the power method to find dominant eigenvalues. Quantum algorithms for estimating eigenvalues via the power method have previously been studied \cite{nghiem2023quantum, seki2021quantum}, though not in a randomized setting starting from Pauli access.  Given an observable of interest, our aim is to approximate $\br{E_0}O\ket{E_0}$ with $\br{\psi_0}H^kOH^k\ket{\psi_0}/\|H^k\ket{\psi_0}\|^2$, where $\ket{E_0}$ is the ground state of Hamiltonian $H$, $\ket{\psi_0}$ is a given trial state with overlap $\gamma:=|\braket{E_0}{\psi_0}|$, and $\|H^k\ket{\psi_0}\|^2$. With this, we have the following proposition: 

\begin{supprop}[Power method]\label{lem:appdx-power-method}
For Hamiltonians $H$ with negative spectra and first excited energy $E_1<0$ we have
\begin{align}
    \left|\frac{\br{\psi_0}H^kOH^k\ket{\psi_0}}{\|H^k\ket{\psi_0}\|^2} - \br{E_0}O\ket{E_0} \right| \leq \varepsilon\,,
\end{align}
for a given error parameter $\varepsilon\leq \frac{4\|O\|\sqrt{1-\gamma^2}}{\gamma}$ if $k$ satisfies
\begin{align}
    k = \Omega\left(  \frac{\log\Big( \frac{\gamma\varepsilon}{\|O\|\sqrt{1-\gamma^2}}\Big)}{\log\left( 1- \frac{\Delta}{|E_0|} \right)} \right) = \Omega\bigg( \frac{|E_0|}{\Delta} \log\bigg( \frac{\|O\|\sqrt{1-\gamma^2}}{\gamma\varepsilon}\bigg) \bigg) \,. \label{eq:expectation-overlap-3}
\end{align}
\end{supprop}

\begin{proof}
Consider the decomposition of the trial state $\kt{\psi_0}$ in the energy eigenbasis of $H$ as
\begin{align}
    \kt{\psi_0} = c_0 \kt{E_0} + \sum_{j>0}c_j \kt{E_j}\,,
\end{align}
where $\{c_k\}_k$ are coefficients. Using this, we can express $H^k\kt{\psi_0}$ as 
\begin{align}
    H^k\kt{\psi_0} = c_0 E_0^k\kt{E_0} + \sum_{j>0}c_jE_j^k \kt{E_j}\,,
\end{align}
and thus the overlap of the normalized power method approximation with the true ground state
\begin{align}
    1- \left|\frac{\br{E_0}H^k\kt{\psi_0}}{\|H^k\ket{\psi_0}\|} \right| &= 1 - \frac{|c_0E_0^k|}{\sqrt{c_0^2E^{2k}+ \sum_{j>0}c_j^2 E_j^{2k}}} \\
    &= 1 - \bigg[ 1 + \sum_{j>0} \frac{c_j^2}{c_0^2} \left( \frac{E_j}{E_0} \right)^{2k} \bigg]^{-1/2} \\
    &\leq 1 - \bigg[ 1 + \frac{1-\gamma^2}{\gamma^2} \left( \frac{E_1}{E_0} \right)^{2k} \bigg]^{-1/2} \\
    &\leq 1 - \bigg[ 1 - \frac{1}{2}\frac{1-\gamma^2}{\gamma^2} \left( \frac{E_1}{E_0} \right)^{2k} \bigg] \\
    &= \frac{1-\gamma^2}{2\gamma^2} \left( \frac{E_1}{E_0} \right)^{2k}\,,
\end{align}
where in the first inequality we have used the fact that $\gamma=E_0^2$ and $\sum_{j>0}c_j^2 = 1-\gamma^2$, and the second is due to the fact that $(1+x)^{1/2}\geq 1+x/2$ for $x\geq -1$. Thus, in order to constrain the state overlap $1- \left|\frac{\br{E_0}H^k\kt{\psi_0}}{\|H^k\ket{\psi_0}\|} \right| = \varepsilon'$ each of the following conditions are sufficient 
\begin{align}
    &1 - \frac{|c_0E_0^k|}{\sqrt{c_0^2E^{2k}+ \sum_{j>0}c_j^2 E_j^{2k}}} \leq \varepsilon'\,, \\
    \Rightarrow&\quad \left(1-\frac{\Delta}{|E_0|} \right)^{2k} \leq \frac{2\gamma^2\varepsilon'}{1-\gamma^2}\,,\\
    \Rightarrow&\quad\quad k\geq \frac{1}{2} \frac{\log\left( \frac{2\gamma^2\varepsilon'}{1-\gamma^2}\right)}{\log\left( 1- \frac{\Delta}{|E_0|} \right)}\,.
\end{align}
We recall that Lemma \ref{lem:pure-expecation-overlap} relates the overlap of pure state expectation values to the overlap of states, which specifies that the state overlap $\varepsilon'$ must be at most $\frac{\varepsilon^2}{8\|O\|^2}$ in order to constrain the expectation value within additive error $\varepsilon$. Using this, we obtain the first equality of Eq.~\eqref{eq:expectation-overlap-3}, where starting our assumption on $\varepsilon$ ensures the bound on $k$ is positive and thus meaningful. The second equality can be established by noting that
\begin{align}
    1-\frac{|E_0|}{\Delta} \leq \frac{1}{\log \big( 1-\frac{\Delta}{|E_0|} \big)} \leq - \frac{|E_0|}{\Delta}\,,
\end{align}
for $E_1<0$.
\end{proof}

Now we suppose that we are given a recipe to prepare the trial state by a series of Clifford gates or we have access to the amplitudes (for which Proposition \ref{prop:appdx-compiling-vector} can be used). The above lemma along with Proposition \ref{prop:appdx-classical} implies that, in this setting, there is a classical algorithm that solves the ground state property estimation problem with exponential number of samples in $\Delta^{-1}$ and polynomial in all other parameters.

%%%%%%%%%%%%%%%%%%%%%%%%%%%%%%%%

\subsection{Gibbs state property estimation}\label{appdx:gibbs}

In this section we demonstrate how the Fourier decomposition of the exponential function found in Ref.~\cite{holmes2022quantum} leads to a randomized quantum algorithm to sample properties of the Gibbs state in our scheme. In Ref.~\cite{holmes2022quantum}, the authors assume one has access to the purification of the Gibbs state of some intermediate Hamiltonian $H_0$, and aims to construct an approximation of the Gibbs state of some other Hamiltonian $H$. More precisely, we start with the state
\begin{equation} \label{eq:thermofield-double}
\ket{\Psi_0} = \frac{1}{    \sqrt{\ZC_0}} \sum_i e^{-\beta E_{0,i}/2} \ket{E_{0,i}} \ket{E_{0,i}^*} \in \HC_A \otimes \HC_B\,,
\end{equation}
where $\ZC_0:=\Tr[e^{-\beta H_0}]$ is the partition function for $H_0$, and $\{\ket{E_{0,i}}\}_i$ are the eigenstates of $H_0$ with corresponding eigenvalues $\{E_{0,i}\}_i$ ($\{\ket{E_{0,i}^*}\}_i$ are the eigenstates of $H_0^*$, the complex conjugate of $H_0$ in the computational basis). This state satisfies
\begin{equation}
    \Tr_A[\ketbra{\Psi_0}{\Psi_0}] = \Tr_B[\ketbra{\Psi_0}{\Psi_0}] = \frac{e^{-\beta H_0}}{\ZC_0}\,,
\end{equation}
that is, it is the purification of the Gibbs state of $H_0$. The goal of Ref.~\cite{holmes2022quantum} is then to prepare the Gibbs state of $H$
\begin{equation}
    \Tr_A[\ketbra{\Psi}{\Psi}] = \Tr_B[\ketbra{\Psi}{\Psi}] = \frac{e^{-\beta H}}{\ZC}=:\gamma_{\beta}\,,
\end{equation}
where $\ZC:=\Tr[e^{-\beta H}]$. As our algorithmic framework only allows the estimation of observable and state overlaps, our randomized algorithm will approximate the expectation value $\Tr[\gamma_{\beta} O]$ for a given measurement operator $O$.

It is observed that the so-called "work operator" $W:=H\otimes \id + \id \otimes H^*_0$ enables the transformation
\begin{align}     
    \sqrt{\frac{\ZC_0}{\ZC}} e^{-\beta W/2} \ket{\Psi_0} = \ket{\Psi} \,,
\end{align}
where $\sqrt{{\ZC_0}/{\ZC}} = 1/\|e^{-\beta W/2} \ket{\Psi_0} \|$ is the normalization factor. Thus, the relevant function of interest for us here is $e^{-\beta W/2}$. Ref.~\cite{holmes2022quantum} presents an Fourier decomposition of $e^{-\beta W/2}$ as follows.

\begin{lemma}[Exponential operator - adpated from Lemmas 3.1, 3.2, 3.4 of Ref.~\cite{holmes2022quantum}]\label{lem:zoe}
There exists an LCU Fourier decomposition $X = \sum_{j=0}^{2J} \alpha_j e^{i\tau_jW}$ that satisfies
\begin{align}\label{eq:gibbs-td}
    \left\| \frac{\Tr_A[X\ketbra{\Psi_0}{\Psi_0}X^{\dag}]}{\|X\ket{\Psi_0}\|^2} - \gamma_{\beta} \right\|_1 \leq 2\varepsilon\,,
\end{align}
where if $[H_0,H]=0$ the parameters in the decomposition satisify
\begin{align}
    \alpha &:= \sum_{j=0}^{2J} |\alpha_j| \leq  2 e^{\max \{ 4, \sqrt{\ln 6/\varepsilon} \}} e^{\beta\|V\|/2} \,, \\
    \|X \ket{\Psi_0}\| &\geq \frac{1}{2} \|e^{-\beta W/2} \ket{\Psi_0}\|  =  \frac{1}{2}\sqrt{\frac{\ZC}{\ZC_0}}  \,, \\
    \tau_{\max} &:=\max_j |\tau_j| =  \frac{\pi \beta}{z} \left( \ceil{\frac{1}{3}z^{3/2}} -1 \right)\,,
\end{align}
where $z \leq \beta(\|W\| + \|V\|) + 2(\max \{ 4, \sqrt{\ln 6/\varepsilon} \})^2$ and we denote $V := H-H_0$.
\end{lemma}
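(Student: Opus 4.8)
The plan is to assemble the statement from the three building blocks of Ref.~\cite{holmes2022quantum} (Lemmas 3.1, 3.2 and 3.4), tracking how an approximation of the operator $e^{-\beta W/2}$ propagates down to the trace-norm bound on the normalized reduced state in Eq.~\eqref{eq:gibbs-td}. Throughout I abbreviate $\kappa := \max\{4,\sqrt{\ln 6/\varepsilon}\}$ and recall $\gamma_{\beta} = e^{-\beta H}/\ZC$.

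First I would import the Fourier decomposition of the exponential itself. Applying the construction behind Lemmas 3.1 and 3.2 to the scalar function $f(w) = e^{-\beta w/2}$ produces a finite linear combination $X = \sum_{j=0}^{2J}\alpha_j e^{i\tau_j W}$ together with the explicit count $2J+1$ of terms, the largest frequency $\tau_{\max} = \tfrac{\pi\beta}{z}(\ceil{\tfrac{1}{3}z^{3/2}} - 1)$, and the width parameter $z \leq \beta(\|W\|+\|V\|) + 2\kappa^2$, all read off directly from the discretization. The coefficient weight is where the commuting hypothesis enters: invoking $[H_0,H]=0$ I would simultaneously diagonalize $H$, $H_0$ and hence $W$ and $V$, and observe that on the support of $\kt{\Psi_0}$ — the diagonal subspace spanned by $\{\kt{E_{0,i}}\kt{E_{0,i}^*}\}$ — the operator $W$ acts as $V$, whose eigenvalues lie in $[-\|V\|,\|V\|]$. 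Consequently the relevant maximum of the target function is $e^{\beta\|V\|/2}$ rather than $e^{\beta\|W\|/2}$, which is exactly the mechanism (Lemma 3.2) yielding $\alpha = \sum_j|\alpha_j| \leq 2 e^{\kappa} e^{\beta\|V\|/2}$.

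Next I would establish the two exact identities by direct computation under $[H_0,H]=0$. Writing $E_i$ for the $H$-eigenvalue carried by $\kt{E_{0,i}}$, one gets $e^{-\beta W/2}\kt{\Psi_0} = \tfrac{1}{\sqrt{\ZC_0}}\sum_i e^{-\beta E_i/2}\kt{E_{0,i}}\kt{E_{0,i}^*}$, whence $\|e^{-\beta W/2}\kt{\Psi_0}\|^2 = \ZC/\ZC_0$ and, after normalizing and tracing out $A$, $\Tr_A[\ketbra{\Psi_1}{\Psi_1}] = \gamma_{\beta}$ exactly. The lower bound on $\|X\kt{\Psi_0}\|$ then follows from the reverse triangle inequality, $\|X\kt{\Psi_0}\| \geq \|e^{-\beta W/2}\kt{\Psi_0}\| - \|(X - e^{-\beta W/2})\kt{\Psi_0}\|$, once the internal approximation error is forced below $\tfrac{1}{2}\sqrt{\ZC/\ZC_0}$; this is the condition linking the internal precision to $\varepsilon$ and it is precisely the claim $\|X\kt{\Psi_0}\| \geq \tfrac12\|e^{-\beta W/2}\kt{\Psi_0}\|$.

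Finally I would propagate the bound to Eq.~\eqref{eq:gibbs-td}. From the vector-level estimate $\|(X - e^{-\beta W/2})\kt{\Psi_0}\| \leq \tilde\varepsilon$ and the norm lower bound, a normalization-perturbation argument of the same flavour as Lemma~\ref{lem:appdx-q-approximation2} controls $\big\|\tfrac{X\kt{\Psi_0}}{\|X\kt{\Psi_0}\|} - \kt{\Psi_1}\big\|$; passing to reduced states via the standard bound $\|\Tr_A[\ketbra{u}{u}] - \Tr_A[\ketbra{v}{v}]\|_1 \leq 2\|\,\kt{u} - \kt{v}\,\|$ for normalized vectors (the pure-state trace-distance/fidelity relation already used in Lemma~\ref{lem:pure-expecation-overlap}) then gives the $2\varepsilon$ bound after setting $\tilde\varepsilon$ appropriately. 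I expect the main obstacle to be this last bookkeeping coupled to the weight estimate: one must keep $\tilde\varepsilon$ small relative to the exponentially small normalization $\sqrt{\ZC/\ZC_0}$ so that both the norm lower bound and the final trace-norm bound survive, while arranging the resulting $\varepsilon$-dependence to be absorbed into the $\kappa = \max\{4,\sqrt{\ln 6/\varepsilon}\}$ factor instead of inflating $\tau_{\max}$ or $\alpha$ — the delicate accounting that Lemma 3.4 of Ref.~\cite{holmes2022quantum} performs and that I would follow.
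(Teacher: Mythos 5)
Your proposal is correct and follows essentially the same route as the paper, which does not prove this lemma independently but imports it directly as an adaptation of Lemmas 3.1, 3.2 and 3.4 of Ref.~\cite{holmes2022quantum}; your reconstruction assembles exactly those three ingredients in the intended way (the commuting hypothesis restricting $W$ to act as $V$ on the support of $\kt{\Psi_0}$, the exact identity $\|e^{-\beta W/2}\kt{\Psi_0}\|^2=\ZC/\ZC_0$, the reverse triangle inequality for the norm lower bound, and the pure-state trace-distance propagation). The one place you defer — keeping the internal approximation error relative to $\|e^{-\beta W/2}\kt{\Psi_0}\|$ so that only $\ln(1/\varepsilon)$ and not $\ZC_0/\ZC$ enters $\alpha$ and $\tau_{\max}$ — is precisely the accounting the cited Lemma 3.4 supplies, so nothing essential is missing.
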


We remark that due to the tracial matrix H\"older's inequality, Eq.~\eqref{eq:gibbs-td} implies that the expectation value with respect to any observable $O$ is close up to additive error $2\|O\|\varepsilon$. Lemma \ref{lem:zoe} already specifies the conditions required to approximate observables with respect to the Gibbs state. Thus, we will not need to make use of Theorem \ref{thm:general-sampling} here, and can directly use Proposition \ref{prop:fourier-sampling}. Eq.~\eqref{eq:gibbs-td} can also be satisfied for non-commuting Hamiltonians, and such a setting can also be transported to our framework. For simplicity, we only detail the commuting case here. 

\begin{corollary}[Gibbs state property estimation - detailed version]\label{cor:appdx-gibbs-state}
    Suppose access to the quantum state $\ket{\Psi_0}$ defined in Eq.~\eqref{eq:thermofield-double} and the ability to measure the observable $O$. Further, suppose the Pauli decompositions of $H_0$ and $H$ are known, and $[H_0,H]=0$. Then, we give a randomized quantum algorithm to approximate $\Tr[\gamma_{\beta} O]$ to additive error $\varepsilon$ and success probability at least $(1-\delta)^2$, utilizing:
    \begin{itemize}
        \item a core routine using $\OC(\log(\frac{1}{\delta})  \frac{e^{\sqrt{\ln \|O\|/\varepsilon} }}{\varepsilon^2} \frac{\ZC_0^2}{\ZC^2} e^{2\beta\|V\|} )$ circuit runs each of non-Clifford depth at most $\OC\big(\lambda_W^2\beta^3\big(\|W\| + \|V\| + \log\big(\frac{\|O\|}{\varepsilon}\big)\big)\big)$\,.
        \item a subroutine to give the appropriate normalization, using asymptotically equivalent resources to the core routine.  
    \end{itemize}
\end{corollary}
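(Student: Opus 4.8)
The plan is to follow the same two-step recipe used in the ground-state proof (Corollary \ref{cor:appdx-ground-state}): first estimate an unnormalized numerator via the Fourier-sampling routine of Proposition \ref{prop:appdx-fourier-sampling}, then estimate the normalization via Proposition \ref{prop:appdx-norm-sampling}, and finally combine them through Lemma \ref{lem:appdx-q-approximation}. First I would invoke Lemma \ref{lem:zoe} with an internal Fourier error parameter $\tilde{\varepsilon}$ (fixed later) to obtain the linear combination $X=\sum_{j=0}^{2J}\alpha_j e^{i\tau_j W}$ acting on the $2n+1$ qubit register carrying $\kt{\Psi_0}$, with $\ell_1$-weight $\alpha \leq 2 e^{\max\{4,\sqrt{\ln(6/\tilde{\varepsilon})}\}}e^{\beta\|V\|/2}$, maximal time $\tau_{\max}$, trace-norm guarantee \eqref{eq:gibbs-td}, and the normalization lower bound $\|X\kt{\Psi_0}\|\geq \tfrac{1}{2}\sqrt{\ZC/\ZC_0}$, all as quoted there. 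Here I would also record that $[H_0,H]=0$ forces $\lambda_W=\lambda_{H_0}+\lambda_H$ and $\|W\|\leq\|H_0\|+\|H\|$, so that $W$ is a legitimate input matrix of Pauli weight $\lambda_W$.

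Next I would reduce the target to two estimable scalars. Writing $q^2:=\|X\kt{\Psi_0}\|^2$ and promoting $O$ to the observable on the physical subsystem, the tracial matrix H\"older inequality applied to \eqref{eq:gibbs-td} gives $\big|\tfrac{1}{q^2}\bramatket{\Psi_0}{X^\dagger O X}{\Psi_0}-\Tr[\gamma_\beta O]\big|\leq 2\|O\|\tilde{\varepsilon}$, so choosing $\tilde{\varepsilon}=\Theta(\varepsilon/\|O\|)$ controls this \emph{exact} contribution by a constant fraction of $\varepsilon$. The numerator $\bramatket{\Psi_0}{X^\dagger O X}{\Psi_0}=\Tr[X\ketbra{\Psi_0}{\Psi_0}X^\dagger O]$ is exactly of the form treated by Proposition \ref{prop:appdx-fourier-sampling}(b) with $s(\cdot)=X$, $\rho=\ketbra{\Psi_0}{\Psi_0}$ and matrix $W$, hence is approximated by the core routine through Algorithm \ref{alg:observable} and the circuit of Figure \ref{fig:circuits}(b); the denominator $q^2=\bramatket{\Psi_0}{X^\dagger X}{\Psi_0}$ is approximated by the normalization subroutine of Proposition \ref{prop:appdx-norm-sampling} using the Hadamard-test circuit of Figure \ref{fig:circuits}(a). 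Lemma \ref{lem:appdx-q-approximation} then certifies that it suffices to know the numerator to additive precision $\OC(q^2\varepsilon)$ and $q^2$ to relative precision $\OC(\varepsilon/\|O\|)$ to recover the ratio within additive error $\varepsilon$, and a union bound over the two independent subroutines yields the stated $(1-\delta)^2$ success probability.

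Finally I would substitute the bounds of Lemma \ref{lem:zoe} into the generic complexities. The sample count of Proposition \ref{prop:appdx-fourier-sampling}(b) scales as $\OC\!\big(\log(1/\delta)\,\|O\|^2\alpha^4/(q^4\varepsilon^2)\big)$; inserting $\alpha^4\ni e^{2\beta\|V\|}e^{4\sqrt{\ln(6/\tilde{\varepsilon})}}$ together with $q^{-4}\leq 16\,\ZC_0^2/\ZC^2$ and $\tilde{\varepsilon}=\Theta(\varepsilon/\|O\|)$ reproduces the factors $e^{\sqrt{\ln(\|O\|/\varepsilon)}}$, $e^{2\beta\|V\|}$ and $\ZC_0^2/\ZC^2$, where the constant in the exponent is fixed by the chosen split of the error budget. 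For the gate depth, $\tau_{\max}=\OC(\beta\sqrt{z})$ with $z=\OC\!\big(\beta(\|W\|+\|V\|)+\ln(\|O\|/\varepsilon)\big)$ gives $\OC(\lambda_W^2\tau_{\max}^2)=\OC\!\big(\lambda_W^2\beta^3(\|W\|+\|V\|+\log(\|O\|/\varepsilon))\big)$ after loosely bounding the logarithmic term's $\beta^2$ by $\beta^3$; the normalization subroutine contributes asymptotically equivalent resources by the same substitution into Proposition \ref{prop:appdx-norm-sampling}.

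The main obstacle I anticipate is the error propagation through the exponentially small normalization $q^2\sim\ZC/\ZC_0$: since the estimator divides by $q^2$, I must verify via Lemma \ref{lem:appdx-q-approximation} that the prescribed relative precision on $q^2$ introduces no additional $1/q$-type blow-up beyond the claimed $\ZC_0^2/\ZC^2$ overhead, and that a single $\tilde{\varepsilon}$ can be used consistently across the numerator and normalization subroutines. A secondary bookkeeping point is confirming the operator-norm and Pauli-weight identities for $W$ under $[H_0,H]=0$, so that Lemma \ref{lem:zoe} applies and its parameters can be re-expressed in the quantities $\lambda_W$, $\|W\|$, $\|V\|$, $\beta$ appearing in the statement.
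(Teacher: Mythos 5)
Your proposal is correct and follows essentially the same route as the paper's proof: invoke Lemma \ref{lem:zoe} with internal Fourier error $\Theta(\varepsilon/\|O\|)$, estimate the numerator $\Tr[X\ketbra{\Psi_0}{\Psi_0}X^{\dag}(\id\otimes O)]$ via Proposition \ref{prop:appdx-fourier-sampling}(b), estimate $\|X\ket{\Psi_0}\|^2$ with a second Fourier-sampling subroutine combined through Lemma \ref{lem:appdx-q-approximation}, and substitute $\alpha$, $q^{-1}\leq 2\sqrt{\ZC_0/\ZC}$ and $\tau_{\max}=\OC(\beta\sqrt{z})$ to recover the stated sample and gate complexities. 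The only cosmetic difference is that the paper carries out the normalization step by applying Proposition \ref{prop:appdx-fourier-sampling} directly rather than quoting Proposition \ref{prop:appdx-norm-sampling}, which amounts to the same computation.
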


\begin{proof}
Being explicit, we first note that Lemma \ref{lem:zoe} straightforwardly implies that there exists an LCU  Fourier decomposition $X = \sum_{j=0}^{2J} \alpha_j e^{i\tau_jW}$ for which the state
\begin{equation}
    \rho = \frac{\Tr_A[X\ketbra{\Psi_0}{\Psi_0}X^{\dag}]}{\|X\ket{\Psi_0}\|^2}
\end{equation}
satisfies
\begin{align}
    \big| \Tr[\rho O] - \Tr[\gamma_{\beta} O] \big| \leq \varepsilon/2\,,
\end{align}
where the parameters in the decomposition satisify
\begin{align}
    \alpha &:= \sum_{j=0}^{2J} |\alpha_j| \leq  2 e^{\max \{ 4, \sqrt{\ln 24\|O\|/\varepsilon} \}} e^{\beta\|V\|/2} \,, \\
    \|X \ket{\Psi_0}\| &\geq \frac{1}{2} \|e^{-\beta W/2} \ket{\Psi_0}\|  =  \frac{1}{2}\sqrt{\frac{\ZC}{\ZC_0}}  \,, \\
    \tau_{\max} &:=\max_j |\tau_j| =  \frac{\pi \beta}{z} \left( \ceil{\frac{1}{3}z^{3/2}} -1 \right)\,,
\end{align}   
where $z \leq \beta(\|W\| + \|V\|) + 2(\max \{ 4, \sqrt{\ln 24\|O\|/\varepsilon} \})^2$. Proposition \ref{prop:appdx-fourier-sampling} and Lemma \ref{lem:appdx-q-approximation} then give the resources required to statistically approximate $\Tr[\rho O]$ to additive error $\varepsilon/2$. By the triangle inequality, this implies a statistical approximation of the exact answer $\Tr[\gamma_{\beta} O]$ to additive error $\varepsilon$. We detail this below.

We start by presuming that $\|X\ket{\Psi_0}\|$ is known exactly. Then, with this exact quantity Proposition \ref{prop:appdx-fourier-sampling} gives a randomized algorithm to statistically approximate $\Tr[\rho_1 O]$ to additive error $\varepsilon/6$ using 
\begin{align}
    \CC_{sample} = 2e^4\log\left( \frac{2}{\delta}\right) \frac{36 \|O\|^2 \alpha^4}{\|X\ket{\Psi_0}\|^4 \varepsilon^2} \leq 18432e^4 \log\left( \frac{2}{\delta}\right) \frac{\|O\|^2}{\varepsilon^2} \frac{\ZC_0^2}{\ZC^2} e^{4\max \{ 4, \sqrt{\ln 24\|O\|/\varepsilon} \}} e^{2\beta\|V\|}
\end{align}
circuits, each of non-Clifford depth at most
\begin{align}
    \CC_{gate} = 2\lambda_W^2 \tau_{\max}^2 \leq 2\lambda_W^2 \frac{\pi^2 \beta^2}{z^2} \left( \ceil{\frac{1}{3}z^{3/2}} -1 \right)^2\,,
\end{align}
and thus our stated result follows. 

Now we check the complexity of approximating $\|X\ket{\Psi_0}\|^2$. We recall that Lemma \ref{lem:appdx-q-approximation} specifies that in order to statistically approximate $\Tr[\rho O]$ to additive error $\varepsilon/2$, one requires the above stated conditions (an approximation of $\Tr[\rho O]$ to additive error $\varepsilon/6$), and an approximation to $\|X\ket{\Psi_0}\|^2$ with additive error $\frac{1}{12}\frac{\varepsilon \|X\ket{\Psi_0}\|^2 }{\|O\|}$. Thus, (once again using Proposition \ref{prop:appdx-fourier-sampling}) we see it is sufficient to use
\begin{align}
    \CC^{norm}_{sample} = 2e^4\log\left( \frac{2}{\delta}\right) \frac{144 \|O\|^2 \alpha^4}{\|X\ket{\Psi_0}\|^4 \varepsilon^2} \leq 73728e^4 \log\left( \frac{2}{\delta}\right) \frac{ \|O\|^2}{\varepsilon^2} \frac{\ZC_0^2}{\ZC^2} e^{4\max \{ 4, \sqrt{\ln 24\|O\|/\varepsilon} \}} e^{2\beta\|V\|}
\end{align}
circuits, each of non-Clifford depth at most
\begin{align}
    \CC^{norm}_{gate} = 2\lambda_W^2 \tau_{\max}^2 \leq 2\lambda_W^2 \frac{\pi^2 \beta^2}{z^2} \left( \ceil{\frac{1}{3}z^{3/2}} -1 \right)^2\,.
\end{align}
We see that this equivalent to the complexity of the core routine, up to a constant factor. 
\end{proof}

We remark that, as with all our algorithms, the constant factor appearing in the sample complexity can be refined slightly by dividing the statistical and exact contributions to error unevenly. For instance, by choosing the statistical contribution to be $90\%$ of the total error (rather than $50\%$) the sample complexity for the normalization constant becomes $\CC^{norm}_{sample}  \leq 22756e^4 \log\left( \frac{2}{\delta}\right) \frac{ \|O\|^2}{\varepsilon^2} \frac{\ZC_0^2}{\ZC^2} e^{4\max \{ 4, \sqrt{\ln 120\|O\|/\varepsilon} \}} e^{2\beta\|V\|}$.
    
%%%%%%%%%%%%%%%%%%%%%%%%%%%

\subsection{Evaluating Green's functions}\label{appdx:greens}

In this section we present the detailed statement of our result for evaluating Green's functions as defined in Eqs.~\eqref{eq:greens1} and \eqref{eq:greens2} (Proposition \ref{cor:greens-functions} in the main text).

\begin{proposition}[Green's function estimation -- detailed version]\label{cor:appdx-greens-functions}
Consider a Hamiltonian $H=\sum_l E_l \dya{E_l}$ with all eigenvalues $E_l \geq 0$, and known Pauli decomposition $H=\sum_{\ell}a_{\ell}P_{\ell}\,;\; \lambda:=\sum_{\ell}|a_{\ell}|$. Assume the spectral gap is lower bounded by $\Delta \leq E_1 - E_0$. Additionally, we suppose that we can freely prepare an initial trial state $\kt{\psi_0}$ with overlap with the ground state $\gamma:= |\braket{\psi_0}{E_0}|$. Given parameters $\omega, \eta$ and the ground state energy $E_0$, there exists a random compiler that returns Eqs.~\eqref{eq:greens1} and \eqref{eq:greens2} up to additive error $\varepsilon$ with probability at least $(1-\delta)^2$, utilizing $\OC\left(\log\left(\frac{2}{\delta}\right) \frac{\|(\Gamma^{(\pm)})^{-1}\|^2}{\gamma^4\varepsilon^2} \log\left( \frac{\|(\Gamma^{(\pm)})^{-1}\|}{\varepsilon}\right)\right)$ circuit runs respectively, each of non-Clifford depth at most $\OC\left(\frac{\lambda_H^2}{\Delta^2}\log^2\left(\frac{2\|(\Gamma^{(\pm)})^{-1}\|}{\varepsilon\gamma^2}\right) +  (|\hbar\omega \pm E_0|+\eta+ \lambda_{H})^2\|\Gamma^{(\pm)-1}\|^2 \log^2 \left(\frac{\|\Gamma^{(\pm)-1}\|}{\varepsilon}\right) \right)$ with a normalization subroutine that has  smaller complexities than the main algorithm.
\end{proposition}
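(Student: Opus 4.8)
The plan is to read each Green's-function element as a single ground-state overlap whose operator already contains the resolvent, and then to compile the ground-state projection and the inversion \emph{simultaneously} into one linear combination of implementable unitaries, so that the whole quantity is obtained from one Hadamard-test routine (Algorithm \ref{alg:overlap}). First I would rewrite $G^{(+)}_{ij}(\omega) = \br{E_0}\hat a_i(\Gamma^{(+)})^{-1}\hat a_j^\dag\kt{E_0}$ (and analogously $G^{(-)}$) by replacing the unavailable $\kt{E_0}$ with the Gaussian-projected trial state $\kt{\psi} = e^{-\tau^2H^2/2}\kt{\psi_0}/q$, where $q = \|e^{-\tau^2H^2/2}\kt{\psi_0}\|$, so that $G^{(+)}_{ij}\approx \frac{1}{q^2}\br{\psi_0}\,\mathcal{M}\,\kt{\psi_0}$ with $\mathcal{M} = e^{-\tau^2H^2/2}\hat a_i(\Gamma^{(+)})^{-1}\hat a_j^\dag e^{-\tau^2H^2/2}$. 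The projection error I would bound using Lemma \ref{lem:GS-projector} together with a non-Hermitian analogue of Lemma \ref{lem:pure-expecation-overlap}: since the Jordan-Wigner operators have unit norm, $\|\hat a_i(\Gamma^{(+)})^{-1}\hat a_j^\dag\|\leq \|(\Gamma^{(+)})^{-1}\|$, so it suffices to drive $1-|\braket{E_0}{\psi}|$ below $\OC\big((\varepsilon/\|(\Gamma^{(\pm)})^{-1}\|)^2\big)$, which fixes $\tau = \OC\big(\Delta^{-1}\sqrt{\log(\|(\Gamma^{(\pm)})^{-1}\|/\varepsilon\gamma)}\big)$ and reproduces exactly the stated bound on $E_0$ with $\|O\|$ replaced by $\|(\Gamma^{(\pm)})^{-1}\|$.

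Next I would decompose $\mathcal{M}$ as a single LCU by composing the factor-wise decompositions: the outer $e^{-\tau^2H^2/2}$ from Hubbard-Stratonovich (Lemma \ref{lem:fourier-gs}), with weight $\OC(1)$ and maximal time $\OC(\tau\sqrt{\log(1/\varepsilon)}) = \OC(\Delta^{-1}\log(\cdots))$; the middle $(\Gamma^{(\pm)})^{-1}$ from the Childs inverse-function series (Lemma \ref{lem:fourier-linear-systems}), with weight $\OC(\|(\Gamma^{(\pm)})^{-1}\|\sqrt{\log})$ and maximal time $\OC(\|(\Gamma^{(\pm)})^{-1}\|\log)$; and $\hat a_i,\hat a_j^\dag$ as weight-$1$ LCUs of Paulis. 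The composite is again an LCU whose $\ell_1$-weight is the product, $\OC(\|(\Gamma^{(\pm)})^{-1}\|\,\mathrm{polylog})$, and a sampled term is a product of two $H$-evolutions, one $\Gamma^{(\pm)}$-evolution, and two Pauli layers. Because $\Gamma^{(\pm)} = (\hbar\omega\pm E_0\mp i\eta)\id\mp H$ is normal but not Hermitian, I would peel the scalar shift from each resolvent evolution, $e^{it\Gamma^{(\pm)}} = (\text{phase})\,e^{-\eta|t|}\,e^{\mp itH}$ for the time-orientation in which the displacement contracts, absorbing $e^{-\eta|t|}\leq 1$ into the coefficient (which only lowers the weight) and random-compiling the residual $H$-evolution. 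Summing the per-term gate depths then yields $\OC\big(\lambda_H^2\Delta^{-2}\log^2 + (|\hbar\omega\pm E_0|+\lambda_H+\eta)^2\|(\Gamma^{(\pm)})^{-1}\|^2\log^2\big)$, the second factor bounding the effective Pauli weight of the shifted operator.

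I would then feed $\mathcal{M}$ into the overlap routine of Proposition \ref{prop:fourier-sampling}(a)/Theorem \ref{thm:general-sampling} with normalisation $q^2$; Hoeffding gives a sample count $\OC\big(\mathrm{weight}(\mathcal{M})^2/(q^4\varepsilon^2)\big) = \OC\big(\|(\Gamma^{(\pm)})^{-1}\|^2/(\gamma^4\varepsilon^2)\big)$ after using $q\geq \gamma e^{-\tau^2E_0^2/2}\geq \gamma/2$. Since $q$ is not known a priori, I would estimate it with the normalisation subroutine of Proposition \ref{prop:norm-sampling} (via Lemmas \ref{lem:appdx-q-approximation} and \ref{lem:appdx-q-approximation2}); this is precisely the ground-state norm problem and carries strictly smaller complexity, as claimed. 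Finally I would combine the four error contributions—projection, inversion truncation, statistics, and normalisation—by the triangle inequality, assigning a constant fraction of $\varepsilon$ to each and invoking Lemmas \ref{lem:appdx-expectation-overlap} and \ref{lem:appdx-expectation-overlap2} to convert operator-norm closeness into closeness of the overlaps.

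The hard part will be handling the complex-shifted resolvent inside the real-time random-compiling framework. The inverse-function series of Childs is built for Hermitian operators with real spectrum, whereas $\Gamma^{(\pm)}$ has spectrum displaced by $\pm i\eta$, and crucially the sign of that displacement dictates which time-orientation contracts: the opposite orientation \emph{amplifies} the weight by $e^{+\eta|t|}$. I therefore expect the delicate step to be constructing the inverse series oriented so that only contracting times appear (e.g.\ through a one-sided resolvent representation or a correspondingly modified Childs construction) while still certifying operator-norm error $\leq\varepsilon$ on the shifted spectrum with weight and $t_{\max}$ governed by $\|(\Gamma^{(\pm)})^{-1}\|$ rather than the cruder bound $1/\eta$. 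The secondary difficulty is bookkeeping: because projection and inversion are compiled together, the pure-state closeness lemmas must be applied to the \emph{non-Hermitian} middle operator, so every $\|O\|$ in the ground-state analysis is replaced by $\|(\Gamma^{(\pm)})^{-1}\|$, and the error budget must be propagated consistently through a product of three separate Fourier/LCU approximations.
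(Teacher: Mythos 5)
Your high-level strategy coincides with the paper's: replace $\kt{E_0}$ by the Gaussian-filtered trial state, compile the projector and the resolvent together into one LCU, estimate the overlap with the Hadamard-test routine, run a separate subroutine for $q^2=\|e^{-\tau^2H^2/2}\kt{\psi_0}\|^2$, and close the argument with a triangle-inequality error budget in which every $\|O\|$ from the ground-state analysis becomes $\|(\Gamma^{(\pm)})^{-1}\|$. Your choices of $\tau$, the sample count, and the normalization accounting all match.

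However, there is a genuine gap at exactly the step you flag as "the hard part," and you do not resolve it. Your plan to write $e^{it\Gamma^{(\pm)}}=(\text{phase})\,e^{-\eta|t|}e^{\mp itH}$ and absorb $e^{-\eta|t|}\leq 1$ into the coefficients only works for the time orientation in which the imaginary shift contracts; the Childs Fourier representation of $1/x$ (Lemma \ref{lem:fourier-linear-systems}) integrates $z$ over a symmetric range, so half the sampled times would carry the amplifying factor $e^{+\eta|t|}$ and the LCU weight would blow up. The one-sided resolvent integral you gesture at would instead give weight $\Theta(1/\eta)$, not $\Theta(\|(\Gamma^{(\pm)})^{-1}\|\sqrt{\log})$, and a "correspondingly modified Childs construction" is left entirely unspecified. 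The paper avoids all of this with a device already stated in its own Remark on non-Hermitian matrices: dilate by one qubit to the \emph{Hermitian} operator $Y^{(\pm)}=X\otimes\big((\hbar\omega\pm E_0)\id\mp H\big)+Y\otimes\eta\,\id$, which satisfies $\|Y^{(\pm)-1}\|=\|(\Gamma^{(\pm)})^{-1}\|$ and has Pauli weight at most $|\hbar\omega\pm E_0|+\lambda_H+\eta$ — note that this is precisely the factor squared in the stated gate depth, which is a strong hint that the dilation is the intended route. With $Y^{(\pm)}$ Hermitian, Lemma \ref{lem:fourier-linear-systems} and the random compiling lemma apply verbatim and the Green's function becomes $\br{1,E_0}\hat a_i(Y^{(\pm)})^{-1}\hat a_j^\dag\kt{0,E_0}$. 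Absent this (or an actually worked-out alternative for the complex-shifted resolvent), your argument does not go through.
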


%We provide a proof of this result in Appendix \ref{appdx:greens}
%\begin{proof}[\textbf{\emph{Proof of Proposition}} \ref{cor:greens-functions}.]

\begin{proof}[Proof of Proposition \ref{cor:greens-functions}.]
We recall that, from Sections \ref{sec:linear-systems} and \ref{sec:gs}, we have decompositions of the matrix inverse function and Gaussian function in terms of Pauli gates and Pauli rotations, provided decomposition of the matrix in terms of Pauli operators.

Namely, by combining Lemma \ref{lem:appdx-expectation-overlap} and Lemma \ref{lem:fourier-gs} we have for some Hamiltonian
\begin{equation}\label{eq:gf-gs-overlap}
    \left\|e^{-\frac{1}{2} \tau^2 {H}^2} - f_{GS}(\varepsilon,H) \right\| \leq \varepsilon\,,
\end{equation}
where $f_{GS}(\varepsilon,H)=\sum_{i \in S_{\varepsilon,H}}h_i(\varepsilon,H)\,V_i(\varepsilon,H)$ is a linear combination of gates with weight $R_{GS}:=\sum_{i \in S_{\varepsilon,H}}|h_i(\varepsilon,H)|$. The weight of this linear combination satisfies $R_{GS} = \Theta\left( 1 \right)$
and each $V_i$ has non-Clifford gate depth at most $\Theta\left(\lambda_H^2 \tau^2 \log\left(\frac{2}{\varepsilon} \right)\right)$ where $\lambda_{H}:=\sum_{\ell} |a_{\ell}|$ is the Pauli weight of $H$. Moreover, we recount that Corollary \ref{cor:GS-projector} states that applying the operator $e^{-\frac{1}{2} \tau^2 {H}^2}$ (after normalization) with $\tau = \frac{1}{\Delta}\sqrt{2 \log\frac{1}{\varepsilon\gamma}}$ approximately projects to the ground state with $2\|O\|\varepsilon$ additive error when considering an expectation value of $O$.

Likewise, by combining Lemma \ref{lem:appdx-kianna-ham-simulation} and Eq.~\eqref{eq:fourier-operator} in Lemma \ref{lem:fourier-linear-systems}, given some operator $\Gamma = \sum_l a_l P_l$ we have
\begin{equation}\label{eq:gf-inv-overlap}
    \Big\|\Gamma^{-1} - f_{inv}(\varepsilon,\Gamma) \Big\| \leq \varepsilon\,,
\end{equation}
where $f_{inv}(\varepsilon,\Gamma)=\sum_{i \in S_{\varepsilon,\Gamma}}\gamma_i(\varepsilon,\Gamma)\,U_i(\varepsilon,\Gamma)$ is a linear combination of gates with weight $R_{inv}:=\sum_{i \in S_{\varepsilon,\Gamma}}|\gamma_i(\varepsilon,\Gamma)|$ satisfying 
\begin{equation}
    R_{inv} = \Theta\left( \|\Gamma^{-1}\| \sqrt{\log \left(\frac{\|\Gamma^{-1}\|}{\varepsilon} \right)} \right)\,,
\end{equation}
and each $U_i$ has non-Clifford gate depth at most $\Theta\left(\lambda_{\Gamma}^2\|\Gamma^{-1}\|^2 \log^2 \left(\frac{\|\Gamma^{-1}\|}{\varepsilon} \right) \right)$, where $\lambda_{\Gamma}$ is the Pauli weight of $\Gamma$.

We now specifically consider the Green's function problem. We wish to find the inverse of the (non-Hermitian) operator $\Gamma^{(\pm)} = (\hbar\omega\pm E_0)\id \mp H + i\eta\id$. Following the remark in Section \ref{sec:linear-systems}, this can be embedded in a Hermitian operator by dilating the space and considering $Y^{(\pm)}= X \otimes \left((\hbar\omega\pm E_0)\id \mp H\right) + Y \otimes \eta\id$. These operators have Pauli weight $\lambda_{Y^{(\pm)}} \leq |\hbar\omega \pm E_0| + \lambda_H + \eta $, and satisfy $\|Y^{(\pm)}\| = \|\Gamma^{(\pm){-1}}\|$. We can then write the quantity we wish to prepare as 
\begin{equation}
    G^{(\pm)} = \br{1,E_0}\hat{a}_i\left(Y^{(\pm)}\right)^{-1}\hat{a}^{\dag}_j\kt{0,E_0}\,.
\end{equation}

In order to sample from the ground state, we will need to approximate the normalization constant $q^2:=\|e^{-\frac{1}{2}\tau^2H^2}\kt{\psi_0}\|^2$, where $\kt{\psi_0}$ is the trial ground state. We denote the statistical approximation of this quantity with $M_q$ shots as $Q^{(M_q)}$. We denote the statistical approximation of $\br{1,\psi_0}e^{-\frac{1}{2}\tau^2H^2}\hat{a}_i(Y^{(\pm)})^{-1}\hat{a}^{\dag}_je^{-\frac{1}{2}\tau^2H^2}\kt{0,\psi_0}$ as $E^{M}$. We can then express the full approximation error as
\begin{align}
    \left| \frac{1}{Q^{(M_q)}}E^{M} - \br{1,E_0}\hat{a}_i\left(Y^{(\pm)}\right)^{-1}\hat{a}^{\dag}_j\kt{0,E_0} \right| \leq& \\ 
    &\hspace{-18em}\leq \left| \br{1,E_0}\hat{a}_i\left(Y^{(\pm)}\right)^{-1}\hat{a}^{\dag}_j\kt{0,E_0} - \frac{1}{q^2} \br{1,\psi_0}e^{-\frac{1}{2}\tau^2H^2}\hat{a}_i(Y^{(\pm)})^{-1}\hat{a}^{\dag}_je^{-\frac{1}{2}\tau^2H^2}\kt{0,\psi_0} \right| \label{eq:gf-1}\\
    &\hspace{-17em}+ \frac{1}{q^2} \left| \br{1,\psi_0}e^{-\frac{1}{2}\tau^2H^2}\hat{a}_i(Y^{(\pm)})^{-1}\hat{a}^{\dag}_je^{-\frac{1}{2}\tau^2H^2}\kt{0,\psi_0} - \br{1,\psi_0}f_{GS}(\tilde{\varepsilon}_1,H)^{\dag}\hat{a}_i(Y^{(\pm)})^{-1}\hat{a}^{\dag}_jf_{GS}(\tilde{\varepsilon}_1,H)\kt{0,\psi_0} \right| \label{eq:gf-2} \\
    &\hspace{-17em}+\frac{1}{q^2} \left| \br{1,\psi_0}f_{GS}(\tilde{\varepsilon}_1,H)^{\dag}\hat{a}_i(Y^{(\pm)})^{-1}\hat{a}^{\dag}_jf_{GS}(\tilde{\varepsilon}_1,H)\kt{0,\psi_0} - \br{1,\psi_0}f_{GS}(\tilde{\varepsilon}_1,H)^{\dag}\hat{a}_i f_{inv}(\tilde{\varepsilon}_2,Y^{(\pm)}) \hat{a}^{\dag}_jf_{GS}(\tilde{\varepsilon}_1,H)\kt{0,\psi_0} \right| \label{eq:gf-3} \\
    &\hspace{-17em}+ \frac{1}{q^2} \left| \br{1,\psi_0}f_{GS}(\tilde{\varepsilon}_1,H)^{\dag}\hat{a}_i f_{inv}(\tilde{\varepsilon}_2,Y^{(\pm)}) \hat{a}^{\dag}_jf_{GS}(\tilde{\varepsilon}_1,H)\kt{0,\psi_0} - \overline{z}^{(M)} \right| \label{eq:gf-4} \\
    &\hspace{-17em}+ \left|\frac{1}{q^2} - \frac{1}{Q^{(M_q)}} \right| \cdot \left|\overline{z}^{(M)} \right| \label{eq:gf-5}\,,
\end{align}
where in the above we have used a chain of triangle inequalities. We now specify conditions so that the right hand side of the above is $\OC(\varepsilon)$. Corollary \ref{cor:GS-projector} bounds \eqref{eq:gf-1} by $2\varepsilon$ with the choice $\tau = \frac{1}{\Delta}\sqrt{2 \log\frac{\|\hat{a}_iY^{(\pm){-1}}\hat{a}^{\dag}_j\|}{\varepsilon\gamma}} \leq \frac{1}{\Delta}\sqrt{2 \log\frac{\|Y^{(\pm){-1}}\|}{\varepsilon\gamma}} = \frac{1}{\Delta}\sqrt{2 \log\frac{\|\Gamma^{(\pm){-1}}\|}{\varepsilon\gamma}}$. The term in \eqref{eq:gf-2} is bounded by $24\varepsilon$ by Eq.~\eqref{eq:gf-gs-overlap} and Lemma \ref{lem:appdx-expectation-overlap} under the condition that each $V_i$ appearing in $f_{GS}(\tilde{\varepsilon}_1,H)$ has non-Clifford gate depth at most $2\lambda_H^2 \tau^2 \log\left(\frac{2\|\Gamma^{(\pm){-1}}\|}{\varepsilon q^2} \right)$, which ensures that $\left\|e^{-\frac{1}{2} \tau^2 {H}^2} - f_{GS}(\tilde{\varepsilon}_1,H) \right\| \leq \frac{\varepsilon q^2}{\|\Gamma^{(\pm){-1}}\|}$.  The term in \eqref{eq:gf-3} can be bounded with H\"older's inequality as
\begin{align}
    \eqref{eq:gf-3} &\leq \left\|Y^{(\pm)}- f_{inv}(\tilde{\varepsilon}_2,Y^{(\pm)})\right\|_{\infty} \cdot \frac{1}{q^2} \left\| \hat{a}^{\dag}_jf_{GS}(\tilde{\varepsilon}_1,H) \dyad{1,\psi_0}{0,\psi_0} f_{GS}(\tilde{\varepsilon}_1,H)^{\dag}\hat{a}_i\right\|_{1} \\
    &= \left\|Y^{(\pm)}- f_{inv}(\tilde{\varepsilon}_2,Y^{(\pm)})\right\|_{\infty} \cdot \frac{1}{q^2} \left\| f_{GS}(\tilde{\varepsilon}_1,H) \dyad{1,\psi_0}{0,\psi_0} f_{GS}(\tilde{\varepsilon}_1,H)^{\dag}\right\|_{1}\\
    &\leq \left\|Y^{(\pm)}- f_{inv}(\tilde{\varepsilon}_2,Y^{(\pm)})\right\|_{\infty} \cdot \frac{1}{q^2} \left\| f_{GS}(\tilde{\varepsilon},H)\dya{\psi_0}\right\|_{2}^2\\
    &\leq \left\|Y^{(\pm)}- f_{inv}(\tilde{\varepsilon}_2,Y^{(\pm)})\right\|_{\infty} \cdot \frac{1}{q^2} \left(\left\| e^{-\frac{1}{2} \tau^2 {H}^2}\dya{\psi_0}\right\|_{2} + \left\|\left(e^{-\frac{1}{2} \tau^2 {H}^2} - f_{GS}(\tilde{\varepsilon}_1,H) \right)\dya{\psi_0} \right\|_2\right)^2 \\
    &\leq \left\|Y^{(\pm)}- f_{inv}(\tilde{\varepsilon}_2,Y^{(\pm)})\right\|_{\infty} \cdot \frac{1}{q^2} \left(\left\| e^{-\frac{1}{2} \tau^2 {H}^2}\kt{\psi_0}\right\|_{2} + \left\|\left(e^{-\frac{1}{2} \tau^2 {H}^2} - f_{GS}(\tilde{\varepsilon}_1,H) \right) \right\|_{\infty}\right)^2 \\
    &\leq \left\|Y^{(\pm)}- f_{inv}(\tilde{\varepsilon}_2,Y^{(\pm)})\right\|_{\infty} \left(1 +  2\frac{\varepsilon q}{\|\Gamma^{(\pm){-1}}\|} + \frac{\varepsilon^2q^2}{\|\Gamma^{(\pm){-1}}\|^2} \right)\,,
\end{align}
where in the equality we have used the unitary invariance of the Schatten norm, noting that $\hat{a}^{\dag}_j$ and $\hat{a}_i$ can be expressed as probabilistic combinations of two unitaries. In the second inequality we have again used H\"older's inequality, and the third inequality is due to the triangle inequality. Thus, assuming $\varepsilon\leq \frac{\|\Gamma^{(\pm){-1}}\|}{q}$ (of which we expect the right-hand-side to be much larger than 1), the term in \eqref{eq:gf-3} can be bounded by $4\varepsilon$ by setting $\tilde{\varepsilon}_2 = \varepsilon$ which sets $R_{inv} = \Theta\left( \|\Gamma^{(\pm)-1}\| \sqrt{\log \left(\frac{\|\Gamma^{(\pm)-1}\|}{\varepsilon} \right)} \right)$ and the non-Clifford gate depth of each $U_i$ as being at most $\Theta\left(\lambda_{Y^{(\pm)}}^2\|\Gamma^{(\pm)-1}\|^2 \log^2 \left(\frac{\|\Gamma^{(\pm)-1}\|}{\varepsilon} \right) \right)$. The term in \eqref{eq:gf-4} can be bounded by $\varepsilon$ with probability $(1-\delta)$ by asking for number of shots
\begin{equation}
    M\geq 2\log\left(\frac{2}{\delta}\right)\frac{R^2_{inv}R^4_{GS}}{\varepsilon q^4}= \OC\left(\log\left(\frac{2}{\delta}\right)\frac{\|\Gamma^{(\pm)-1}\|^2}{\varepsilon^2\gamma^4} \log \left(\frac{\|\Gamma^{(\pm)-1}\|}{\varepsilon} \right) \right)\,,
\end{equation}
where we have used the fact that $\frac{1}{q}\leq \frac{2}{\gamma}$. Finally, the term in \eqref{eq:gf-5} can be bounded by $\varepsilon$ with probability at least $(1-\delta)$ by using number of shots
\begin{equation}
    M_q\geq 2\log\left(\frac{2}{\delta}\right)\frac{\|Y^{(\pm)-1}\|^2 R^4_{GS}}{\varepsilon q^4}= \OC\left(\log\left(\frac{2}{\delta}\right)\frac{\|\Gamma^{(\pm)-1}\|^2}{\varepsilon^2\gamma^4} \right)\,,
\end{equation}
for the normalization subroutine. 

Finally, we note the total gate complexity consists of the sum of the gate complexities for $U_i$ and $V_i$ respectively. From the above discussion, we see that the non-Clifford gate complexity of $V_i$ is $\OC\left(\frac{\lambda_H^2}{\Delta^2} \log\left(\frac{2\|\Gamma^{(\pm){-1}}\|}{\varepsilon \gamma^2} \right)\right)$ and the non-Clifford gate gate complexity of $U_i$ is $\OC\left( (|\hbar\omega \pm E_0|+\eta+ \lambda_{H})^2\|\Gamma^{(\pm)-1}\|^2 \log^2 \left(\frac{\|\Gamma^{(\pm)-1}\|}{\varepsilon} \right) \right)$. The non-Clifford gate complexity for the normalization subroutine as given by Proposition \ref{prop:norm-sampling} is $\OC\left(\frac{\lambda_H^2}{\Delta^2} \log\left(\frac{\|\Gamma^{(\pm){-1}}\|}{\varepsilon \gamma}\right) \log\left(\frac{1}{\varepsilon \gamma}\right) \right)$.
\end{proof}

We now briefly remark on the scenario if we would like to evaluate the Green's function but the exact ground state energy is not given. In this case, an $\varepsilon$-additive approximation to $E_0$ would yield an amplification of the additive error of the evaluation of the Green's functions by an additive term $\OC(\varepsilon\|\Gamma^{(\pm)-1}\| \|(\Gamma^{(\pm)}\pm\varepsilon\id)^{-1}\|)$. This can be seen by noting that 
\begin{align}
    \left\| \left(\Gamma^{(\pm)} \pm\varepsilon\id\right)^{-1} - \Gamma^{(\pm)-1} \right\| &= \left\| \left(\id - \Gamma^{(\pm)-1}\left(\Gamma^{(\pm)} \pm\varepsilon\id\right)\right) \left(\Gamma^{(\pm)} \pm\varepsilon\id\right)^{-1} \right\|\\
    &=\varepsilon\left\| \Gamma^{(\pm)-1} \left(\Gamma^{(\pm)} \pm\varepsilon\id\right)^{-1}  \right\|\,.
\end{align}

\subsection{Additional information for complexity comparison}

In this section we provide some additional information on state of the art quantum algorithms in the literature, to aid exposition of Tables \ref{table:linear-systems}, \ref{table:linear-systems2}, \ref{table:gs} and \ref{table:gibbs}.   

\subsubsection{Data access}\label{sec:appdx-data-access}

In our tables we quote two generic block encoding strategies. First, we use the result of Ref.~\cite{clader2022quantum} which gives an explicit block encoding of $A/\|A\|_F$ starting from a description of $A \in \mathbb{C}^{N \times N}$ in the computational basis. Throughout the manuscript we quote the minimal gate depth implementation of this construction, which is achieved using $\OC(N^2)$ qubits and $\OC(\log N)$ gate depth each call to the block encoding. Ref.~\cite{clader2022quantum} gives an alternative construction using $\OC(N)$ qubits but $\OC(N)$ gate depth each call. A second block encoding strategy we quote is to use $\texttt{SELECT}$ and $\texttt{PREPARE}$ oracles when $A$ is given in the Pauli basis $A=\sum_{\ell=1}^L a_{\ell} P_{\ell}$ with Pauli weight $\lambda:=\sum_{\ell=1}^L |a_{\ell}|$. In general this strategy works for any unitary decomposition where one presumes controlled versions of the unitary can be implemented in $\OC(1)$ gate depth. This strategy generically uses $\OC(\log L)$ qubits, and $\OC(L)$ gate depth each call to the block encoding. The block encoding has subnormalization $\lambda$.

We also conceptualize the ability to perform Hamiltonian simulation as a type of data access. In Tables \ref{table:gs} and \ref{table:gibbs} we quote two Hamiltonian simuation strategies; one which uses no additional ancillary qubits, and a second which is optimal in gate complexity. First, we quote $1^{st}$ order Trotter time evolution. For our purposes we again work in the Pauli access model, and the goal is to return an approximation of $\exp(iAt)$ for some real time $t$. $1^{st}$ order Trotter uses (in worst case) $O\left(L^3(\Lambda t)^2 / \epsilon\right)$ gates where we have defined $\Lambda:=\max_{\ell}a_{\ell}$. As we state in the main text, one can replace $\Lambda$ with a smaller quantity by exploiting commutator structure between terms \cite{childs2021theory}. The second Hamiltonian simulation strategy we quote is the so-called qubitization approach of Ref.~\cite{low2019hamiltonian}. Here one constructs a block encoding using $\texttt{SELECT}$ and $\texttt{PREPARE}$ again and makes calls to this block encoding. The algorithm uses $\OC(\log L)$ ancillary qubits, has query complexity $\OC(\lambda t + \frac{\log(1/\varepsilon)}{\log \log(1/\varepsilon)})$, where each query uses $\OC(L)$ gates as stated above. Thus, each call to this time evolution as an oracle requires $\OC(\lambda L t + \frac{\log(1/\varepsilon)}{\log \log(1/\varepsilon)})$ gates.

Finally, in Table 1 for the HHL algorithm \cite{harrow2009quantum} we also require a Hamiltonian simulation subroutine when starting from a Frobenius norm block encoding. Again using Ref.~\cite{low2019hamiltonian}, an approximation of $\exp (iAt)$ can be obtained with $\OC(\mu t + \frac{\log(1/\varepsilon)}{\log \log(1/\varepsilon)})$ calls to the block encoding, where $\mu$ is the block encoding subnormalization. 

\subsubsection{Linear systems}

The HHL algorithm \cite{harrow2009quantum} consists of three key steps. First, conditional Hamiltonian simulation of maximum time $\OC(\|A^{-1}\|/\varepsilon)$ is required to perform phase estimation to estimate each inverse eigenvalue $\lambda_j^{-1}$ to additive error $\OC(\frac{1}{\lambda_j}\frac{\varepsilon}{\|A^{-1}\|}) = \OC(\varepsilon)$. Second, after rotating by angle $\textrm{arcsin}(1/\lambda_j\|A^{-1}\|)$ conditioned on $\lambda_j$ and undoing phase estimating we have the state $\ket{\lambda_j}\ket{0} \rightarrow \ket{\lambda_j} \left(\frac{1}{\lambda_j \|A^{-1}\|}\ket{0} + \sqrt{1-\frac{1}{\lambda_j^2 \|A^{-1}\|^2}} \right)$. Finally, the desired state can thus be obtained with $\OC(\kappa)$ amplitude amplification steps. Putting this all together, starting with a Frobenius norm block encoding (which requires $\OC(\log N)$ gates per call), approximately preparing the state $\ket{A^{-1}b}$ requires $\OC(\|A\|_F \|A^{-1}\|\kappa \log N /\varepsilon)$ = $\OC(\kappa_F\kappa \log N /\varepsilon)$ gate depth to implement.

In the quantum linear systems algorithm of Ref.~\cite{costa2021optimal} the dominant part is a filtering step with complexity $\OC(\frac{1}{\Delta}\log(1/\varepsilon))$, where $\Delta$ is a lower bound on the gap of the adiabatic Hamiltonian which interpolates between Hamiltonians 
\begin{align}
    H_0:=\left(\begin{array}{cc}
0 & Q_b \\
Q_b & 0
\end{array}\right)\,, \quad
H_1:=\left(\begin{array}{cc}
0 & \hat{A} Q_b \\
Q_b \hat{A} & 0
\end{array}\right)\,,
\end{align}
for some encodable matrix $\hat{A}$, and where $Q_b=\id-\dya{b}$. It is shown in Appendix A of Ref.~\cite{an2022quantum} that this has gap lower bounded by $\|\hat{A}^{-1}\|$. In Ref.~\cite{costa2021optimal} the authors call for a block encoding construction, thus we block encode an adiabatic Hamiltonian with $\hat{A}=A/\mu$, leading to algorithm complexity $\mu \|A^{-1}\|\log(1/\varepsilon)$ where $\mu$ is the block encoding subnormalization.

\subsubsection{Ground states and Gibbs states}

In Tables \ref{table:gs} and \ref{table:gibbs} all the algorithms we compare with either use Hamiltonian simulation or block encoding oracles. Thus, the quoted gate complexities in our tables can be extracted by inspecting the query complexity listed in each relevant work and multiplying by the corresponding gate complexity for oracle access detailed in Section \ref{sec:appdx-data-access}.

%%%%%%%%%%%%%%%%%%%%%%%%%%%%%%%%

\end{document}